\newcommand\blfootnote[1]{%
  \begingroup
  \renewcommand\thefootnote{}\footnote{#1}%
  \addtocounter{footnote}{-1}%
  \endgroup
}
\pgfplotsset{compat=1.17}
\title{Breaking the Envy Cycle:\\ Best-of-Both-Worlds Guarantees for Subadditive Valuations}
\author{
Michal Feldman
\and
Simon Mauras
\and
Vishnu V. Narayan
\and
Tomasz Ponitka
}
\date{Tel Aviv University\\[0.1in]\today}
\newtheorem{theorem}{Theorem}
\newtheorem{lemma}{Lemma}[section]
\newtheorem{proposition}[lemma]{Proposition}
\newtheorem{corollary}[lemma]{Corollary}
\theoremstyle{definition}
\newtheorem{definition}[lemma]{Definition}
\newtheorem{remark}[lemma]{Remark}
\newtheorem{example}[lemma]{Example}
\newcommand{\agents}{[n]}
\newcommand{\items}{[m]}
\newcommand{\reals}{\mathbb{R}}
\newcommand{\SD}{\succeq_{\mathrm{SD}}}
\newcommand{\SC}{\succeq_{\mathrm{SC}}}
\begin{document}
\maketitle

\begin{abstract}
We study best-of-both-worlds guarantees for the fair division of indivisible items among agents with subadditive valuations. Our main result
establishes the existence of a random allocation that is simultaneously ex-ante $\frac{1}{2}$-envy-free, ex-post $\frac{1}{2}$-EFX and ex-post EF1, for every instance with subadditive valuations.
We achieve this result by a novel polynomial-time algorithm that randomizes the well-established envy cycles procedure in a way that provides ex-ante fairness.
Notably, this is the first best-of-both-worlds fairness guarantee for subadditive valuations, even when considering only EF1 without EFX.
\end{abstract}
\section{Introduction}
\label{sec:intro}

In a fair division problem, the goal is to allocate a set $M$ of $m$ indivisible items among $n$ agents {\em fairly}. Every agent has a valuation function $v_i:2^M \rightarrow \reals^{\geq 0}$, mapping every bundle of items $S \subseteq M$ to a (non-negative) real number $v_i(S)$. An allocation $X=(X_1,\ldots,X_n)$ is a partition of the items among the agents, where $X_i$ is the bundle allocated to agent $i$.\blfootnote{This work was supported by the European Research Council (ERC) under the European Union's Horizon 2020 research and innovation program (grant agreement No. 866132), by an Amazon Research Award, and by the NSF-BSF (grant number 2020788).} 

The problem of allocating resources fairly dates back to Aristotle \cite{Chroust42}. 
Many mathematical notions of fairness have been considered in the literature (both for divisible and indivisible items).
A particularly natural fairness notion that generated significant interest in the literature 
is {\em envy-freeness} {(EF)}, requiring that every agent prefers her own bundle to any other agent’s bundle \cite{Fol67, V74}. 
That is, an allocation $X$ {is EF if} for every two agents $i$ and $j$, it holds that $v_i(X_i) \geq v_i(X_j)$. 
{The notion of envy-freeness extends to {\em random} allocations in two ways. We say that a random allocation is EF {\em ex-ante} (before the randomization is realized) if every agent prefers her own bundle to any other agent’s bundle {\em in expectation}. We say that a random allocation is EF {\em ex-post} (after the randomization is realized) if every deterministic allocation in its support is EF.}

Achieving ex-ante EF is quite easy. 
For example, allocating all items to a single agent {chosen uniformly at random} is trivially ex-ante EF. 
However, an ex-ante EF allocation may {still} be arbitrarily unfair {ex-post}. 
Indeed, in the example above, one agent receives all the items, and will surely be envied by all other agents. 
On the other hand, ex-post {EF} is too strong of a requirement, as even the simplest setting of a single item desired by two agents does not admit any EF allocation. 

\vspace{0.25cm}
\textbf{Best-of-both-worlds fairness.}
Recently, \citet{A19} posed, as an interesting new research direction, the question of finding random allocations that simultaneously achieve desirable ex-ante and ex-post properties.
Shortly thereafter, \citet{FSV20} studied this problem in the fair division domain, with the goal of obtaining a random allocation that is simultaneously ex-ante EF and ex-post ``relaxed EF''. The approach of constructing random allocations with strong ex-ante and ex-post fairness guarantees has since been known as the ``best-of-both-worlds'' approach.

In their work, \citet{FSV20} focus on the notion of EF1 --- envy-freeness up to one item~\cite{LMMS04,B11} --- which requires that the envy of every agent $i$ toward another agent $j$ can be removed by the elimination of at most one item from agent $j$'s bundle, {i.e., there exists an item $g\in X_j$ such that $v_i(X_i) \geq v_i(X_j \setminus \{g\})$.
Unlike EF, an EF1 allocation always exists \cite{B11}.
Moreover, \citet{FSV20} and \citet{A20} showed that any instance with additive valuations --- where an agent's value for a bundle is the sum of her values for the individual items --- admits a random allocation that is simultaneously ex-ante EF and ex-post EF1.
Namely, an envy-free distribution over deterministic allocations such that each satisfies EF1.

While this is a promising result, it is limited in two ways. 
First, it is restricted to additive valuations, while in most practical settings, agent valuations are non-additive. 
For example, items may exhibit substitutability, {where an agent might like to get one of two given items but have no additional value for having both of them}.
Second, this result holds with respect to EF1, but does not hold with respect to {another important} (and stronger) relaxation of envy-freeness, known as EFX --- envy-freeness up to {\em any} item~\cite{CKMPSW16} --- which requires that the envy of every agent $i$ toward another agent $j$ can be removed by the elimination of {\em any} item from agent $j$'s bundle.

To demonstrate why EFX may be more desirable than EF1, consider the following scenario. 
Suppose there are three items, $a,b,c$, and two agents with identical additive values of $100, 50, 50$ for items $a,b,c$, respectively.
Any reasonable fairness notion in this case would allocate item $a$ to one agent and items $b,c$ to the other agent, resulting in a value of $100$ to each agent. 
This allocation is the only EFX allocation for this instance. 
However, one can easily verify that allocating items $a,b$ to agent 1 (for a value of $150$) and item $c$ to agent 2 (for a value of $50$) is EF1: removing item $a$ from agent 1's bundle removes agent 2's envy.
{Indeed, extending best-of-both-worlds results to ex-post EFX appeared as one of the open problems in \citet{FSV20}.

While many attempts have been made towards improving the results of~\citet{FSV20} and~\citet{A20}, none of the subsequent works specifically resolve these issues. Instead, they either give best-of-both-worlds results that hold only for highly structured valuation classes, such as binary additive valuations (\citet{AAGW15,HPPS20}), matroid rank valuations (\citet{BEF21}), and multi-demand valuations (\citet{HSV23}), or study the case of arbitrary entitlements, where they obtain only weaker ex-post guarantees (\citet{AGM23,HSV23}).}}

\vspace{0.25cm}
\textbf{Our contribution.}
With the desire to go beyond additive valuations and beyond the EF1 notion, the key question we address in this paper is whether there always exists a random allocation that is simultaneously ex-ante EF and ex-post EFX, for any instance with subadditive valuations.

A valuation function is subadditive if $v(S)+v(T) \geq v(S \cup T)$ for every two bundles of items $S$ and $T$.
The class of subadditive valuations encompasses important subclasses, such as additive valuations and submodular valuations (namely, valuations exhibiting decreasing marginal values).

Let us consider first the simplest scenario of two agents with additive valuations. While not immediate, we show that this scenario always admits a random allocation that is simultaneously ex-ante EF and ex-post EFX (see \Cref{prop:additive-2-agents}).

But this positive result breaks as soon as we go either beyond two agents or beyond additive valuations. 
Going beyond two agents, even for additive valuations, the mere existence of ex-post EFX, without any additional requirement, is an intriguing open problem. 
{In fact, it is widely considered to be the biggest open problem in fair division~\cite{P20}\footnote{EFX is only known to exist for some special cases, including three additive agents~\cite{CGM20} or identical valuations~\cite{PR18}.}.}
Similarly, going beyond additive valuations, we show that even with only two agents and submodular valuations, there are instances that do not admit any allocation that is simultaneously ex-ante EF and ex-post EFX. 
This is demonstrated in \Cref{prop:2agent-submod-tight}.

{As {is} standard in the literature, to address this problem}
we turn to the framework of {\em approximate fairness}. 
A deterministic allocation is $\beta$-EFX if every agent prefers her own bundle to a $\beta$ fraction of any other agent's bundle after eliminating any single item from it. 
A random allocation is ex-post $\beta$-EFX 
if every allocation in its support is $\beta$-EFX. 
Similarly, a random allocation is ex-ante $\alpha$-EF if every agent prefers her own bundle to an $\alpha$ fraction of every other agent's bundle, in expectation.

The approach taken by \citet{FSV20} and \citet{A20} gives no ex-post $\beta$-EFX guarantees for any $\beta>0$ (see \Cref{sec:ps_explained}). Therefore, the question that drives us in this work is the following:

\vspace{0.1in}
\noindent
{\bf Main Question:}
Are there constants $\alpha$ and $\beta$ such that an allocation that is simultaneously ex-ante {$\alpha$}-EF and ex-post {$\beta$}-EFX is guaranteed to exist for every profile of subadditive valuations?

\vspace{0.1in}
Our main result is {an affirmative answer to this question}:

\vspace{0.1in}
\noindent
{\bf Main Result (\Cref{thm:main_theorem}):}
For every instance with subadditive valuations, \Cref{alg:fair_envy_cycles} outputs a random allocation that is ex-ante $\frac{1}{2}$-EF, ex-post $\frac{1}{2}$-EFX, and ex-post EF1 in polynomial time.

\vspace{0.1in}
This result constitutes a significant advancement in the best-of-both-worlds fairness literature in multiple ways:
\begin{enumerate}
\setlength{\itemsep}{1pt}
    \item We are the first to provide best-of-both-worlds results for {\em subadditive} valuations -- our work even extends the previous work on (EF, EF1) from additive all the way to subadditive, {and gains ex-post $\frac{1}{2}$-EFX in the process,} while losing only a factor of $2$ in the ex-ante envy. 
    \item {No better approximation of EFX than $\frac{1}{2}$-EFX is known to exist}, even for the smaller class of submodular valuations, and even without any ex-ante EF requirements. 
We match this guarantee for the broader class of subadditive valuations while providing ex-ante guarantees in addition.
    \item 
    To the best of our knowledge, we are the first to consider randomized envy cycles in a way that provides ex-ante EF guarantees.
    Envy cycles is one of the {most widely-used} procedures for discrete fair division. Indeed, the envy cycles algorithm was one of the earliest works in the fair division of indivisible items \cite{LMMS04}, and it has since been used to achieve many fairness results, including EF1 and EFX guarantees \cite{LMMS04,PR18}; see \Cref{sec:related-work} for more details.
\end{enumerate}

We complement our positive result with the following upper bounds (holding even for 2 agents):

\vspace{0.1in}
{
\noindent {\bf {Impossibility Result} (\Cref{prop:ub-subadditive}):} 
For every $0.618 \approx \varphi-1 < \beta \leq 1$, there exists an instance with two subadditive valuations that admits no randomized allocation that is simultaneously ex-ante $\alpha$-EF and ex-post $\beta$-EFX, {for any} $\alpha > \frac{\beta+1}{\beta^2+2\beta}$ . 
}

\vspace{0.1in}
In particular, our impossibility result implies the following upper bounds:
\begin{itemize}
    \item There is no random allocation that is ex-ante $\alpha$-EF and ex-post EFX for $\alpha>\frac{2}{3}$.
    \item There is no random allocation that is ex-ante EF and ex-post $\beta$-EFX for $\beta > \varphi-1$.
\end{itemize}

Finally, in \Cref{sec:two-agents} we show that the first of these bounds is tight for two agents, namely, we devise an algorithm that gives an ex-ante $\frac{2}{3}$-EF and ex-post EFX allocation for every instance with two subadditive valuations (\Cref{prop:subadditive-2-agents-efx}). 
In addition, for every such instance, we devise an algorithm that gives an ex-ante EF and ex-post $\frac{1}{2}$-EFX allocation (\Cref{prop:subadditive-2-agents-ef}).

Our results are summarized in \Cref{fig:tradeoffs}.

\begin{figure}[h!]
\centering
    \hfill
\begin{minipage}{.47\textwidth}
\hspace{2.4cm} Two agents\\
\centering
\begin{tikzpicture}[scale=0.8]
\begin{axis}[
  width=\textwidth, height=\textwidth,
  clip = false,
  xtick={0,0.666,1},
  ytick={0,0.5,0.618,1},
  xticklabels = {$0$,$2/3$,$1$},
  yticklabels = {$0$,$1/2$,$\varphi-1$,$1$},
  xlabel={$\alpha$-EF},
  ylabel={$\beta$-EFX},
  xlabel style={at={(axis description cs:0.25,0)}},
  ylabel style={at={(axis description cs:0,0.25)}},
  xmin=0,  xmax=1, ymin=0,  ymax=1]
\addplot [draw=none, name path=f,domain=0.667:1] {(sqrt(4*x*x+1)-2*x+1)/(2*x)};
\path[name path=axis] (axis cs:0,1) -- (axis cs:1,1);
\tikzfillbetween[of=f and axis,on layer=main] {pattern=north east lines};
\draw (0.666,0.5) -- (0.666,1);
\draw[dashed] (0.666, 0) -- (0.666, 0.5);
\draw (0.666,0.5) -- (1,0.5);
\draw[dashed] (0,0.5) -- (0.666,0.5);
\fill[pattern=dots] (0,0) rectangle (0.666,1);
\fill[pattern=dots] (0.666,0) rectangle (1,0.5);

\node[fill=white] at (0.86,0.93)
{\footnotesize Prop.~\ref{prop:ub-subadditive}};
\node[fill=white] at (0.83,0.25)
{\footnotesize Prop.~\ref{prop:subadditive-2-agents-efx}};
\node[fill=white] at (0.33,0.75)
{\footnotesize Prop.~\ref{prop:subadditive-2-agents-ef}};
\node at (0.80,0.60)
{?};
\end{axis}
\end{tikzpicture}
\end{minipage}
\hfill
\begin{minipage}{.47\textwidth}
\hspace{2.4cm} $n$ agents\\
\centering
\begin{tikzpicture}[scale=0.8]
\begin{axis}[
  width=\textwidth, height=\textwidth,
  clip = false,
  xtick={0,0.5,0.666,1},
  ytick={0,0.5,0.618,1},
  xlabel={$\alpha$-EF},
  ylabel={$\beta$-EFX},
  xticklabels = {$0$,$1/2$,$2/3$, $1$},
  yticklabels = {$0$,$1/2$,$\varphi-1$,$1$},
  xlabel style={at={(axis description cs:0.25,0)}},
  ylabel style={at={(axis description cs:0,0.25)}},
  xmin=0,  xmax=1, ymin=0,  ymax=1]
\addplot [draw=none, name path=f,domain=0.667:1] {(sqrt(4*x*x+1)-2*x+1)/(2*x)};
\path[name path=axis] (axis cs:0,1) -- (axis cs:1,1);
\tikzfillbetween[of=f and axis,on layer=main] {pattern=north east lines};
\draw (0.5,0.5) -- (0.5,0);
\draw (0.5,0.5) -- (0,0.5);
\fill[pattern=dots] (0,0) rectangle (0.5,0.5);

\node at (0.66,0.66) {?};
\node[fill=white] at (0.25,0.25) {\footnotesize Thm.~\ref{thm:main_theorem}};
\node[fill=white] at (0.86,0.93)
{\footnotesize Prop.~\ref{prop:ub-subadditive}};
\end{axis}
\end{tikzpicture}
\end{minipage}
\hfill
    \caption{Trade-offs between {ex-ante} $\alpha$-EF and {ex-post} $\beta$-EFX, for two (left) and $n$ (right) subadditive agents. 
    The $x$-axis (resp., $y$-axis) represents ex-ante $\alpha$-EF (resp., ex-post $\beta$-EFX) guarantees.
    Dotted (resp., hatched) areas represent
    our existence (resp., impossibility) results.}
    \label{fig:tradeoffs}
\end{figure}
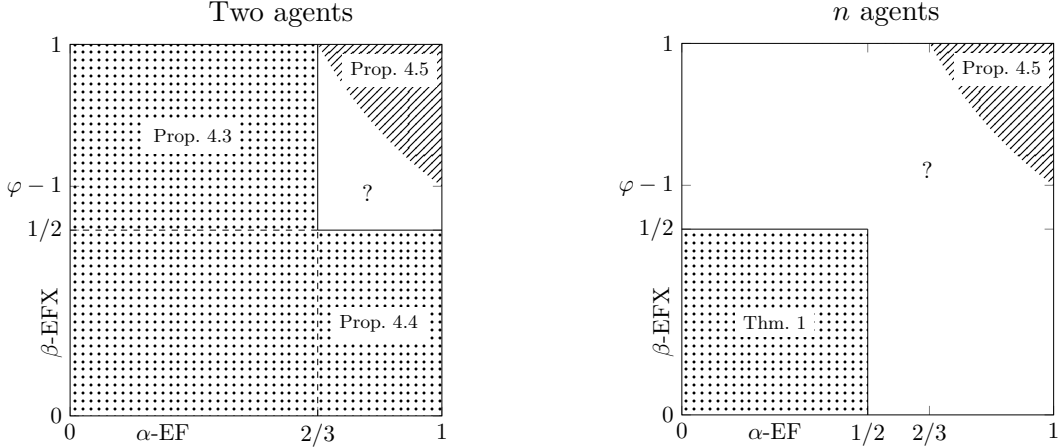

\subsection{Our Techniques}

In this section, we present an overview of the construction and the analysis of our main algorithm, {which proposes a way to randomize the widely-used envy cycles procedure~\cite{LMMS04} in order to ensure ex-ante fairness.}

The standard deterministic envy cycles procedure involves two phases. In the first phase, the algorithm assigns one item to each agent using a matching algorithm that satisfies a natural property called weak separation (\Cref{def:weak_separation}), {which states that every agent prefers her item to any unassigned item.}
In the second phase, {the algorithm allocates the remaining items}. This is done by either (1) allocating one of the remaining items to an {\em unenvied} agent, or (2) shifting the bundles along an {\em envy cycle}, which is a cycle where each agent prefers the next agent's bundle to her own. This procedure is explained in detail in \Cref{sec:deterministic_envy_cycles}.

Crucially, there are many ways to implement the deterministic envy cycles procedure that yield the same ex-post guarantees. For instance, in the second phase,
one can arbitrarily decide which unenvied agent to select for operation (1) or which envy cycle to choose for operation (2). However, for \emph{any} deterministic choices for these operations, this algorithm gives no ex-ante EF guarantees. Consequently, achieving the desired ex-ante fairness requires randomization of these choices. In what follows, we describe the randomization of each phase of the algorithm separately.

\vspace{0.25cm}
\textbf{Randomizing the First Phase.}
One possible approach to obtain ex-ante fairness guarantees is to apply some randomness to a given deterministic algorithm.
For example, a common algorithm for the first phase is
{serial dictatorship}, where 
agents are sorted according to some order, and every agent, upon her turn, chooses her most preferred item among the remaining ones.
One can easily verify that {deterministic}
{serial dictatorship} gives no ex-ante fairness guarantees. 
A natural approach to obtain {some} ex-ante fairness guarantees is to randomize the order in which agents choose items. 
This is called random serial dictatorship (RSD). 
Unfortunately, {choosing} a uniformly random order over the agents does not give ex-ante EF \cite{BM01}. In fact, we {improve the upper bound on the ex-ante fairness of {RSD}, by showing ({\Cref{prop:upper-bound-rrr}}) that} using a uniformly random order does not give better than ex-ante $\frac{1}{\sqrt{2}}$-envy-freeness.

Therefore,
to construct the matching in the first phase, we use the
{\em probabilistic serial} procedure of~\citet{BM01}.
Recently, \citet{FSV20} and~\citet{A20} utilized this procedure to achieve 
{ex-ante EF together with ex-post EF1 for additive valuations.}
{They apply the probabilistic serial procedure for the entire allocation process. 
Unfortunately, this method does not extend to subadditive valuations, and even for additive valuations, it does not guarantee $\beta$-EFX for any $\beta > 0$ (see \Cref{sec:ps_explained}).
Instead, we apply the probabilistic serial method for a single time unit, followed by randomized envy cycles.} 
{See \Cref{sec:matching_distribution} for more details.}

\vspace{0.25cm}
\textbf{Randomizing the Second Phase.} It may seem reasonable to hope that selecting the envy cycles arbitrarily during the second phase of the algorithm results in a fair outcome.
However, even though an envy cycle elimination step does not modify any of the bundles, it might shuffle them among the agents in a way that benefits some agents more than others. This is illustrated in \Cref{ex:detcycles}.
{A natural second attempt would be}
to choose the envy cycle uniformly at random. 
However, one can show that this too does not provide any ex-ante EF guarantees.

We propose a novel way to sample the envy cycles in a way that provides a probability distribution {that gives ex-ante EF guarantees}.
To construct the desired distribution, we present an algorithm inspired by the decomposition of irreducible Markov chains into circuit processes by~\citet{M81}.
This algorithm applies only to strongly connected graphs, and to address this challenge,
we first select a strongly connected component of the envy graph that has no incoming edges (which always exists), and run the algorithm on the induced subgraph.
Details regarding the randomization of the second phase are given in \Cref{sec:envy_cycles_distribution}.

\vspace{0.25cm}
Following the explanation of the main algorithm's design, we now focus on the components of the analysis of its ex-ante fairness guarantees.
{We first identify two key properties of the algorithm}.

{First, the random allocation obtained in the first phase satisfies a property that we term {\em strong separation} (see \Cref{lem:ps_strong_sep}).
This property states that every agent prefers her bundle in any allocation in the support of the random allocation to any item that may be unallocated in any (possibly different) allocation in the support.}
Second, we show that after any step of the algorithm, the value of every agent $i$ for her own bundle {\em stochastically dominates} the value of $i$ for (a specific part of) any other agent's bundle (see \Cref{lem:sd_ineq}).

{To establish our main result, we first use the stochastic dominance property (\Cref{lem:sd_ineq}) to show that {\em two copies} of $X_i$ can ``cover'' a specific subset of $X_j$, in a way that is captured by the definition of {\em stochastic coverage} that we introduce in \Cref{sec:proof}. The proof is by induction on the execution tree, i.e., the tree representing all possible random choices that the algorithm makes on a given input. Moreover, using strong separation (\Cref{lem:ps_strong_sep}), we extend this property to show that the two copies of $X_i$ can also cover the remaining part of $X_j$, from which our main result follows easily. The full analysis of the main algorithm is given in \Cref{sec:proof}.}

\subsection{Related Work}\label{sec:related-work}
While the fair division problem has been central to human society since antiquity, its formal study began with the work of Banach, Knaster and Steinhaus~\cite{Ste48}, who analyzed the \emph{cake-cutting} problem: \emph{how can a heterogeneous cake be divided fairly amongst agents?} With $n=2$ agents, the folklore \emph{cut-and-choose} method results in a \emph{proportional} allocation of the cake, i.e., one in which each agent is allocated a piece of value at least $\frac{1}{n}$ of their value {($\frac{1}{2}$ in the case of $n=2$)} for the whole cake. This result was extended to $n>2$ agents with the \emph{last diminisher} procedure~\cite{Ste48}. Subsequent decades witnessed the emergence of \emph{envy-freeness} (\citet{Fol67,V74}) as the main criterion for fairness in economics. An allocation is envy-free if each agent prefers her own bundle to the bundle of any other agent. For the settings with divisible items, which generalize the cake-cutting problem, envy-free allocations are known to exist under mild assumptions~(\citet{Str80}), and many algorithms have been devised for their computation~\cite{BT95,AM16}.

\vspace{0.25cm}
\textbf{The Envy Cycles Procedure.} For the complementary setting with {\em indivisible} items, both proportionality and envy-freeness are impossible to achieve: consider the simple instance with two agents and one item, where any allocation leaves one agent envying the other. A breakthrough development in the area occurred almost two decades ago, when \citet{LMMS04} introduced the now-ubiquitous \emph{envy cycles procedure} for the fair division of indivisible items. This led to the development of relaxed fairness notions such as the EF1 guarantee, formally defined by \citet{B11}. The envy cycles procedure computes, in polynomial time, an EF1 allocation even for the very general class of monotone valuations. This work subsequently spawned a large body of research in the fair division of indivisible items, and the envy cycles procedure (along with its variations) {has} since been used in a wide variety of studies in fair division, including in algorithms for finding approximate MMS allocations~\cite{BK17}, EF1 allocations for chore division~\cite{BSV21}, approximate EFX allocations~\cite{PR18}, and partial EFX allocations (``EFX with charity'')~\cite{CKMS20}, among several others.

\vspace{0.25cm}
\textbf{The EFX and MMS Fairness Notions.} The EFX guarantee was introduced by~\citet{CKMPSW16}. Despite its apparent similarity to EF1, the existence of EFX allocations remains a notoriously hard open problem even for additive agents. Complete EFX allocations are known to exist for instances with at most three additive agents~(\citet{CGM20}), and in the case where the agents have identical valuation functions~(\citet{PR18}). Beyond these results, only improvements for special cases {or for various relaxations} have been found~\cite{PR18,M21,ACGMM22,GHNV23}. Because of the limited progress on the existence of complete EFX allocations, both \emph{partial}- and \emph{approximate}-EFX allocations have also been studied. Numerous works have studied the existence of partial allocations that provide some EFX guarantee while allocating almost all of the items~\cite{CKMS20,CGMMM21,BCFF22,ACGMM22,FMP23}. For approximately-EFX fair division, \citet{PR18} showed that a $\frac{1}{2}$-EFX allocation exists even for subadditive valuations. For additive valuations, \citet{AMN20} improved this approximation factor to $\varphi-1 \approx 0.618$. 

Another related fairness notion for the indivisible setting, defined by~\citet{B11}, is the maximin share (MMS) guarantee. It is known that allocations that give each agent its MMS value do not exist even for additive agents~(\citet{PW14,FST21}), but there are allocations that give each agent a constant factor of the MMS value, and these allocations can be efficiently computed even for submodular and XOS valuations (see, e.g.,~\cite{BK17,GHSSY18}).

\vspace{0.25cm}
\textbf{Best-of-Both-Worlds Fairness.}
A recent line of research, titled best-of-both-worlds fairness, aims to achieve envy-freeness for indivisible items via randomization. The leading question is whether it is possible to simultaneously achieve some ex-ante fairness guarantee while ensuring ex-post guarantees for every realized outcome of the random process. {\citet{AAGW15} first studied this problem in a food bank setting, where they gave an algorithm with ex-ante and ex-post fairness guarantees for the {special case} of additive valuations with \emph{binary (0/1)} marginals.} For additive valuations, \citet{FSV20} gave a randomized polynomial-time algorithm that outputs an EF1 allocation, while being envy-free ex-ante. Ensuing work by~\citet{A20} showed that there exists a similar randomized algorithm that additionally implements the well-known \emph{Probabilistic Serial} fractional outcome described by~\citet{BM01}. This randomized algorithm also preserves a weak notion of efficiency. \citet{BEF22} also study the case of additive valuations and find a distribution over ex-post {proportional up to one item} and $\frac{1}{2}$-MMS allocations that is ex-ante proportional, i.e., the expected value of each agent's bundle is at least a $\frac{1}{n}$-fraction of her value for the set of all items.

Best-of-both-worlds results have also been analyzed for other settings, including the case where the agents have binary marginals, and the case of additive valuations with \textit{arbitrary entitlements}. For the {special} case of additive valuations with binary marginals, the algorithm of~\citet{A20} {improves upon the guarantees of~\citet{AAGW15} and} is group-strategyproof, ex-ante fractionally-PO and envy-free, and ex-post fractionally PO and EF1. In a similar vein, for additive valuations with binary marginals, \citet{HPPS20} independently showed that there is a distribution over ex-post Nash-welfare-maximizing allocations that also ex-ante maximizes the fractional Nash welfare, implying the same fairness guarantees as \citet{A20} for this setting. For \emph{matroid rank} valuations, \citet{BEF21} present a randomized truthful mechanism that is ex-ante envy-free and ex-post Lorenz dominating (and thus ex-post Nash-welfare-maximizing and EFX for this specific class). For the case of agents with arbitrary entitlements, both \citet{HSV23} and \citet{AGM23} show that weighted ex-ante envy-freeness along with ex-post weighted envy-freeness up to 1 item is impossible to achieve. Both \citet{HSV23} and \citet{AGM23} then give polynomial time algorithms that instead achieve some weaker fairness properties. Additionally, \citet{HSV23} show that ex-ante EF and ex-post EF1 can be achieved for multi-demand valuations.

\section{Preliminaries}

An instance of the resource allocation problem consists of a set $[n] = \{1, \ldots, n\}$ of $n$ agents, a set $[m] = \{1, \ldots, m\}$ of $m$ indivisible items, and a valuation profile $(v_i)_{i \in [n]}$. The valuation function $v_i : 2^{[m]} \to \mathbb{R}^{\geq 0}$ of agent $i$ gives a non-negative real value $v_i(S)$ for every bundle of items $S \subseteq [m]$. {We assume that {all valuation functions}
are monotone, i.e., $v_i$ satisfies $v_i(S) \leq v_i(T)$ for any $S \subseteq T \subseteq [m]$ {and every agent $i$}.} The valuation functions we consider in this work are either
\begin{itemize}
    \item additive, i.e., it holds that $v_i(S) = \sum_{x \in S} v_{ix}$ for some $v_{i1}, \ldots, v_{im} \geq 0$, or
    \item subadditive, i.e., it holds that $v_i(S \cup T) \leq v_i(S) + v_i(T)$ for any two $S, T \subseteq [m]$.
\end{itemize}
We use the following standard notation for singleton sets: for any $g \in [m]$ and $S \subseteq [m]$, we write $v_i(g) = v_i(\{g\})$ and $S+g = S \cup \{g \}$ and $S-g = S \setminus \{ g \}$.

A {\em deterministic} partial allocation $X = (X_1, \ldots, X_n)$ of items to agents is a partition of the items into $n$ bundles where all bundles are disjoint, i.e., it holds that $X_i \cap X_j = \emptyset$ for all $i \neq j$. We say that $X$ is {\em complete} if no items are unallocated, i.e., it holds that $\bigcup_{i \in [n]} X_i = [m]$.

We say that a deterministic allocation $X$ is $\alpha$-EF1 (for some $\alpha \in [0,1]$) if for all agents $i$ and $j$, it holds that $v_i(X_i) \geq \alpha \cdot v_i(X_j-g)$ for {\em some} item $g \in X_j$. We say that an allocation $X$ is $\alpha$-EFX (for some $\alpha \in [0,1]$) if it holds that $v_i(X_i) \geq \alpha \cdot v_i(X_j - g)$ for {\em all} items $g \in X_j$.
An allocation is EF1 (resp., EFX) if it is $1$-EF1 (resp., $1$-EFX).

\vspace{0.25cm}
\textbf{Random allocations, ex-post and ex-ante fairness.} 
A {\em random} allocation $X = (X_1, \ldots, X_n)$ is a random variable $X$ whose every random outcome is a deterministic allocation. Here, $X_1, \ldots, X_n$ are the associated random variables whose every random outcome is a bundle of items.
A random allocation $X$ is {\em ex-post} $\alpha$-EF1 {(resp., $\alpha$-EFX)} if every random outcome of $X$ (that happens with non-zero probability) is an $\alpha$-EF1 {(resp., $\alpha$-EFX)} deterministic allocation.
We say that $X$ is {\em ex-ante} $\alpha$-EF (for some $\alpha \in [0,1]$) if $\mathbb{E}[v_i(X_i)] \geq  \alpha \cdot \mathbb{E}[v_i(X_j)]$ for all $i,j$. An allocation is EF if it is $1$-EF.

\vspace{0.25cm}
\textbf{Envy graph {and strongly connected components}.} For every deterministic partial allocation $X = (X_1, \ldots, X_n)$, we define the envy graph $G_X = ([n],E_X)$ where the set of nodes is the set of agents $[n]$ and the set of directed edges $E_X = \{ (i,j) \in [n]^2 : v_i(X_j) > v_i(X_i) \}$ includes an edges from agent $i$ to agent $j$ if $i$ envies $j$, i.e., agent $i$ prefers agent $j$'s bundle over her own bundle.

A {directed} graph is {\em strongly connected} if there exists a directed path from any node to any other node. A {\em strongly connected component} of a directed graph $G$ is {any maximal (in the sense of inclusion) subgraph of $G$} that is strongly connected.

\vspace{0.25cm}
\textbf{Stochastic dominance.} Let $X$ and $Y$ be any two non-negative {real} random variables.
We say that $X$ {\em stochastically dominates} $Y$, and we write $X \SD Y$, if it holds that $\mathbb{P}[X \geq t] \geq \mathbb{P}[Y \geq t]$ for every $t \geq 0$.

\section[Our Main Result: Ex-Ante 1/2-EF, Ex-Post 1/2-EFX, Ex-Post EF1]{Our Main Result: Ex-Ante $\frac{1}{2}$-EF, Ex-Post $\frac{1}{2}$-EFX, Ex-Post EF1}

In this section, we describe our main result: the existence of a distribution over $\frac{1}{2}$-EFX and EF1 allocations that is ex-ante $\frac{1}{2}$-{EF} for every instance with subadditive valuations, and a randomized algorithm that outputs a random allocation with this distribution in polynomial time. This result is presented in \Cref{thm:main_theorem}, which we restate below.

\begin{theorem}
\label{thm:main_theorem}
    For every instance with subadditive valuations, \Cref{alg:fair_envy_cycles} outputs a random allocation that is ex-ante $\frac{1}{2}$-EF, ex-post $\frac{1}{2}$-EFX, and ex-post EF1 in polynomial time.
\end{theorem}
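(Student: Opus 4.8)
The plan is to analyze Algorithm~\ref{alg:fair_envy_cycles}, which proceeds in two phases as described in the techniques overview: a probabilistic-serial matching phase run for a single time unit, followed by a randomized envy cycles phase. The ex-post guarantees are inherited from the deterministic envy cycles analysis, so the bulk of the work is the ex-ante bound. First I would establish the ex-post claims. Ex-post EF1 follows from the classical envy cycles argument: weak separation (\Cref{def:weak_separation}) ensures that in the first phase every agent prefers her item to any unassigned one, and every subsequent item is added to an unenvied agent, so just as in \citet{LMMS04} the allocation stays EF1 throughout (using subadditivity: if $i$ is unenvied before $g$ is added to $j$, then $v_i(X_i) \ge v_i(X_j) \ge v_i((X_j+g) - g)$). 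For ex-post $\frac12$-EFX one uses the fact that the first-phase bundles are singletons and the invariant that whenever a new item is assigned, the receiving agent was unenvied; a standard case analysis (either $X_j$ is still a singleton, so removing any item empties it, or $|X_j|\ge 2$ and the value of $i$ for $X_j$ minus its least-valued item is at most $\frac12 v_i(X_j) \le \frac12 \cdot 2 v_i(X_i)$ by subadditivity applied to the split of $X_j - g$ versus the removed part) gives $v_i(X_i) \ge \frac12 v_i(X_j - g)$ for every $g$. Envy-cycle elimination steps permute bundles but do not change the multiset of bundles, and a standard lemma shows they cannot create EFX or EF1 violations, so both ex-post properties are preserved.

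Next, the ex-ante $\frac12$-EF bound, which is the heart of the argument. The plan is to use the two structural properties flagged in the overview. From \Cref{lem:ps_strong_sep} (strong separation) one gets that for any allocation $A$ in the support and any item $g$ that is unallocated in any (possibly different) support allocation, agent $i$'s first-phase bundle weakly beats $g$; this will be used to handle the ``leftover'' part of $X_j$ not covered by the stochastic-dominance argument. From \Cref{lem:sd_ineq} one gets, after every step, that $v_i(X_i)$ stochastically dominates $v_i$ of a designated portion of $X_j$. I would then introduce the notion of \emph{stochastic coverage} and prove, by induction on the execution tree (the tree of all random choices the algorithm makes), that two independent-looking copies of $X_i$ stochastically cover a specific subset $S_j \subseteq X_j$ — concretely, that $\mathbb{E}[v_i(X_i)] \cdot 2 \ge \mathbb{E}[v_i(S_j)]$ in the strong pointwise-over-the-tree sense that makes the induction go through. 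The base case is the post-first-phase allocation, where strong separation and the singleton structure make coverage immediate; the inductive step must check both operations of phase two: adding an item to an unenvied agent (coverage can only improve for the recipient, and for others the relevant bundle is unchanged), and rotating bundles along a sampled envy cycle (here the carefully constructed circuit-process distribution from the \citet{M81}-inspired sampler, combined with \Cref{lem:sd_ineq}, is exactly what keeps the stochastic-coverage inequality balanced across agents).

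Finally I would close the gap between $S_j$ and $X_j$: the items of $X_j \setminus S_j$ are, by construction of $S_j$, items that agent $j$ received while being unenvied after some rotation, and strong separation (\Cref{lem:ps_strong_sep}) bounds $v_i$ of each such item by $v_i(X_i)$; subadditivity then lets one fold these into the coverage bound, at the cost of at most the second copy of $X_i$, yielding $\mathbb{E}[v_i(X_i)] \ge \tfrac12 \mathbb{E}[v_i(X_j)]$ for all $i,j$, i.e. ex-ante $\frac12$-EF. Polynomial running time is argued separately: the probabilistic serial phase for one time unit is polynomial, there are at most $m$ item-assignment steps, between consecutive assignments at most $n$ envy-cycle eliminations (each strictly decreasing the number of envy edges or the potential used in the classical analysis), and the circuit-decomposition sampler runs in polynomial time on each strongly connected component since it produces at most polynomially many circuits.

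The main obstacle I expect is the inductive step for the envy-cycle rotation in the stochastic-coverage argument: one must show that the particular distribution over cycles produced by the circuit-process sampler does not merely preserve each agent's expected value (which is immediate since rotation permutes a fixed multiset of bundles) but preserves the finer \emph{stochastic} coverage relation simultaneously for every ordered pair $(i,j)$ — this is where the structure of the Markov-chain circuit decomposition, rather than an arbitrary or uniform choice of cycle, is essential, and getting the bookkeeping of which portion $S_j$ of each bundle is being covered to interact correctly with the rotation is the delicate part.
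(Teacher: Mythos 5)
Your plan matches the paper's argument in all essentials: one time unit of Probabilistic Serial with Birkhoff rounding and strong separation (\Cref{lem:ps_strong_sep}), a circuit-process distribution over envy cycles in a source strongly connected component (\Cref{lem:cycle_distribution,lem:envy_cycles_sd}), a stochastic-coverage invariant proved by induction on the execution tree (\Cref{lem:sc}), an extendability step using strong separation to fold in the one ``last item'' $g_j$, and subadditivity of $v_i$ to convert $(v_i(Y_j), v_i(g_j))$-coverage into the ex-ante $\frac12$-EF bound. Two small flags, neither fatal to the plan. First, the paper's induction invariant is not simply ``two copies of $X_i$ cover $S_j$'' but rather $(x_u, x_u) \SC (y_u, z_u)$, where $z_u = \mathbb{1}[w \in \mathcal{L}_u]\, v_i(X_i^u)$ acts as a reserve that vanishes at the root; this reserve is exactly what absorbs the ``yellow'' leaves in the envy-cycle step and is freed up at the end to cover $v_i(g_j^w)$ via \Cref{lem:extendability}, so your stated worry about the bookkeeping in the rotation step is well placed and this is the object you would need to introduce. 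Second, your gloss of the ex-post $\frac12$-EFX step contains an error: it is not true, even for additive valuations, that removing the least-valued item of $X_j$ leaves value at most $\frac12 v_i(X_j)$ (take three equal-valued items). The correct and shorter argument is: since $j$ is unenvied, $v_i(X_i) \geq v_i(X_j)$, and by separation $v_i(X_i) \geq v_i(g)$, so for any $h$, $v_i(X_i) \geq \tfrac12\bigl(v_i(X_j) + v_i(g)\bigr) \geq \tfrac12 v_i(X_j + g) \geq \tfrac12 v_i(X_j + g - h)$ using subadditivity and monotonicity, and envy-cycle eliminations permute the bundle multiset and so preserve this.
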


We may assume without loss of generality that there are more than $n$ items. Indeed, if there are $m\leq n$ items, then we may add $n-m$ dummy items with value $0$ for all agents, and choose a random allocation that assigns one item per agent. The resulting allocation is EFX (since each agent is assigned at most one item) and ex-ante EF (by symmetry).

\subsection{A Deterministic Envy Cycles Procedure}
\label{sec:deterministic_envy_cycles}

{In this section we describe a deterministic algorithm} for constructing $\frac{1}{2}$-EFX and EF1 allocations {(but, with no EF guarantees)} when the agents have subadditive valuations. 
As we will show in the following subsections, our main algorithm is a carefully randomized implementation of this deterministic algorithm.

The existence of $\frac{1}{2}$-EFX and EF1 allocations for subadditive agents was first shown by \citet{PR18}.
Their proof is constructive and involves the use of an algorithm that is based on the envy cycles procedure. {We have modified the presentation of the algorithm to suit our analysis, and we present it here as a deterministic two-phase algorithm that runs in polynomial time, {see \Cref{alg:det_envy_cycles}} in \Cref{sec:det_envy_cycles}}.

In the first phase, the algorithm constructs a \emph{matching} between the agents and the items, in which exactly one item is assigned to each agent. The matching must satisfy the following property, called \emph{weak separation}: each agent prefers the item she receives in the matching to any of the remaining unassigned items. The formal definition of this property {was explicitly given in \citet{FMP23}, but was implicitly used in previous studies on EFX allocations \cite{AMN20}}.

\begin{definition}[Weak separation] \label{def:weak_separation}
    A partial allocation $X=(X_1,\ldots,X_n)$ with the set $U = \items \setminus \bigcup_{i \in \agents} X_i$ of unallocated items satisfies \emph{weak separation} if for every agent $i\in[n]$ and every unallocated item $u\in U$ it holds that $v_i(X_i) \geq v_i(u)$.
\end{definition}

In the second phase, the algorithm performs the widely-known envy cycles procedure of~\citet{LMMS04} to allocate the remaining unassigned items, {as follows}. 
Suppose that the algorithm has already constructed a partial allocation $X=(X_1,\ldots,X_n)$. 
Then, in the next step, 
if there exists an {\em unenvied} agent $j$ (i.e., an agent $j$ such that $v_i(X_i) \geq v_i(X_j)$  for all $i\neq j$), then the algorithm selects an arbitrary unallocated item $g$ (i.e., $g$ is not in the set $\bigcup_{i \in [n]} X_i$), and assigns {item $g$} to agent $j$.
If there is no unenvied agent, the algorithm finds a directed cycle $C=(u_1,u_2,\ldots,u_k)$ in the envy graph, where agent $u_1$ envies agent $u_2$, agent $u_2$ envies agent $u_3$, \ldots, and agent $u_k$ envies agent $u_1$.  
Observe that such a cycle always exists; indeed, if there is no unenvied agent, every node in the envy graph has an incoming edge. 
The algorithm then performs a \textit{cycle elimination} step, in which the fixed bundles are redistributed along the cycle {(namely, every agent $u_i$ receives the bundle $X_{u_{i+1}}$, where $u_{k+1} = u_1$)}.

{We first show the following monotonicity lemma, whose proof is deferred to \Cref{sec:det_envy_cycles}.}

\begin{restatable}{lemma}{lembundlemonotonicity}\label{lem:bundle_monotonicity}
    For any agent $i$, the value of $i$ for her own bundle weakly increases at each step of the envy cycles procedure.
\end{restatable}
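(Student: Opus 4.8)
The plan is to prove \Cref{lem:bundle_monotonicity} by case analysis on the two possible operations in the envy cycles procedure, tracking a single agent $i$'s value for her own bundle across one step. There are exactly two kinds of steps: (1) assigning an unallocated item $g$ to an unenvied agent $j$, and (2) eliminating an envy cycle $C = (u_1, \ldots, u_k)$ by shifting bundles. First I would observe that the claim is trivial before the second phase begins — during the first (matching) phase each agent's bundle only acquires its single matched item, so monotonicity is immediate — and so it suffices to handle these two operation types in the second phase. Since a step of either type changes the profile and then the argument repeats at the next step, establishing the per-step statement gives the lemma by composing over all steps of the run.

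For operation (1), every agent $i \neq j$ keeps exactly the same bundle $X_i$, so $v_i(X_i)$ is unchanged; and agent $j$'s bundle grows from $X_j$ to $X_j + g$, so by monotonicity of $v_j$ we have $v_j(X_j + g) \geq v_j(X_j)$. Hence every agent's value for her own bundle weakly increases. For operation (2), the bundles themselves are not modified — they are merely permuted — so the key point is that along an envy cycle each agent moves to a bundle she strictly prefers: agent $u_\ell$ receives $X_{u_{\ell+1}}$ (indices mod $k$), and by definition of the envy graph $E_X$, the edge $(u_\ell, u_{\ell+1}) \in E_X$ means precisely $v_{u_\ell}(X_{u_{\ell+1}}) > v_{u_\ell}(X_{u_\ell})$. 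So for every agent on the cycle her value strictly increases, and for every agent off the cycle it is unchanged. In both operations, then, $v_i(X_i)$ weakly (in fact in some cases strictly) increases for every $i$, which is exactly the statement.

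I do not expect a genuine obstacle here: the lemma is essentially the standard invariant underlying why the envy cycles procedure terminates, and the only mild subtlety is bookkeeping — making sure the "value for her own bundle" is interpreted correctly after a cycle shift (the agent's identity stays fixed, but the bundle attached to her changes), and confirming that the weak-separation property of the first phase is not even needed for this particular lemma (it is needed elsewhere for the $\frac12$-EFX guarantee, but monotonicity holds regardless). A secondary point worth stating cleanly is that the two operation types are exhaustive: in the second phase, either there is an unenvied agent and we do (1), or every node of the envy graph has an incoming edge, forcing a directed cycle and enabling (2); this is already argued in the text preceding the lemma, so I would simply invoke it. The proof is therefore a short two-case argument, each case a one-line application of monotonicity of $v_i$ or of the definition of an envy edge.
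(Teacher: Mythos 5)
Your proof is correct and takes essentially the same approach as the paper's: a two-case analysis on the operation type (new item to an unenvied agent vs.\ cycle elimination), with the first case resolved by monotonicity of $v_i$ and the second by the definition of an envy edge. The paper's proof is just a more compressed version of the same argument.
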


{Next, we show that the {final} allocation of this algorithm is $\frac{1}{2}$-EFX and EF1.
This is shown by \citet{PR18} for their algorithm, but as our algorithm is presented in a slightly different way, we provide the proof in \Cref{sec:det_envy_cycles} to ensure completeness.
}

\begin{restatable}[\cite{PR18}]{lemma}{lemexpost}\label{lem:ex_post}
    The allocation returned by \Cref{alg:det_envy_cycles} is $\frac{1}{2}$-EFX and EF1.
\end{restatable}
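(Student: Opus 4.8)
\textbf{Proof plan for \Cref{lem:ex_post}.}

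The plan is to track what happens during the second phase of \Cref{alg:det_envy_cycles}, maintaining as an invariant that the current partial allocation $X$ is $\frac{1}{2}$-EFX and EF1, and showing that each of the two operations (giving an item to an unenvied agent, or eliminating an envy cycle) preserves the invariant. At the start of the second phase $X$ is a matching with at most one item per agent, so it is trivially EFX and EF1 (removing the single item from a bundle makes it empty). Since cycle elimination merely permutes the bundles among the agents without changing the multiset of bundles, it preserves EF1 and $\frac12$-EFX immediately — the approximate-envy conditions only ever compare bundles pairwise and are invariant under relabeling. So the real work is the step that augments an unenvied agent's bundle.

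Consider the step where item $g$ is assigned to an unenvied agent $j$, producing the new bundle $X_j' = X_j + g$. I need to check the EF1 and $\frac12$-EFX conditions for every ordered pair $(i,k)$ in the new allocation. The only pairs affected are those of the form $(i,j)$ for $i\neq j$; all other pairs are unchanged, so the invariant carries over for them. Fix $i \neq j$. Before the step, $j$ was unenvied, so $v_i(X_i) \geq v_i(X_j)$. Now I must compare $v_i(X_i)$ with $v_i(X_j')$ after removing an appropriate item. For EF1: remove $g$ itself from $X_j'$, leaving $X_j$, and then $v_i(X_i) \geq v_i(X_j) = v_i(X_j' - g)$, so the EF1 condition holds witnessed by $g$. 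For $\frac12$-EFX I need: for \emph{every} item $h \in X_j'$, $v_i(X_i) \geq \frac12 v_i(X_j' - h)$. If $h = g$, this is again $v_i(X_i) \geq v_i(X_j) \geq \frac12 v_i(X_j)$, done. If $h \in X_j$ (so $h \neq g$), then by subadditivity $v_i(X_j' - h) = v_i((X_j - h) + g) \leq v_i(X_j - h) + v_i(g)$. Here I use weak separation — which \Cref{lem:bundle_monotonicity} together with the initial matching property guarantees is maintained throughout, since bundle values only increase and $g$ was an unallocated item at this step — to get $v_i(g) \leq v_i(X_i)$. I also need $v_i(X_j - h) \leq v_i(X_i)$; this should follow because before the step $X$ was already $\frac12$-EFX... but wait, that only gives $v_i(X_i) \geq \frac12 v_i(X_j - h)$, which is the wrong direction. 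The right bound is: before the step $j$ was unenvied, so $v_i(X_i) \geq v_i(X_j) \geq v_i(X_j - h)$ by monotonicity. Combining, $v_i(X_j' - h) \leq v_i(X_i) + v_i(X_i) = 2 v_i(X_i)$, i.e.\ $v_i(X_i) \geq \frac12 v_i(X_j' - h)$, as required.

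The main obstacle — and the reason the "unenvied" condition is exactly what is needed rather than just the inductive $\frac12$-EFX hypothesis — is precisely this last point: to bound $v_i(X_j - h)$ from above by $v_i(X_i)$ we cannot use an approximate-envy hypothesis (wrong direction), so we must exploit that agent $j$ was genuinely (exactly) unenvied at the moment the item was added, hence $v_i(X_i) \ge v_i(X_j)$, and then use monotonicity of $v_i$ to drop the item $h$. I should state carefully that weak separation is an invariant of the second phase: it holds at the end of the first phase by the matching property, it is preserved by cycle elimination (bundles are only permuted, and by \Cref{lem:bundle_monotonicity} every agent's own-bundle value is non-decreasing while the set of unallocated items only shrinks), and it is preserved by the augmentation step (again own-bundle values only increase, unallocated set shrinks). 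With all invariants in place, the final allocation is complete and $\frac12$-EFX and EF1, which is the claim. A remark worth including: the same argument in fact shows the stronger "each bundle has at most one item that could be removed to restore full envy-freeness" flavor is not available here — we genuinely only get $\frac12$ on EFX — because the added item $g$, while small relative to $X_i$, can be comparable to individual items of $X_j$, so subadditivity forces the factor of $2$.
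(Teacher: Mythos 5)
Your proof is substantively correct and follows the paper's structure closely: establish that the matching at the end of the first phase is trivially $\frac{1}{2}$-EFX and EF1, then show both second-phase operations preserve the invariant, using unenviedness, weak separation, and subadditivity in the augmentation step exactly as the paper does. Your $h = g$ versus $h \neq g$ case split is unnecessary --- the paper proves it uniformly via $v_i(X_j + g - h) \leq v_i(X_j + g) \leq v_i(X_j) + v_i(g) \leq 2 v_i(X_i)$ --- but the two chains of inequalities are the same up to reordering, so this is cosmetic.

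There is, however, one justification that needs repair. For the cycle-elimination step you assert that the envy conditions ``only ever compare bundles pairwise and are invariant under relabeling.'' That is not a valid reason: an arbitrary permutation of bundles among agents can certainly destroy EF1 or $\frac{1}{2}$-EFX (swap two bundles when one agent strongly prefers the other's and she ends up with the worse one). What actually makes the step safe are two facts in combination: (a) cycle elimination leaves the \emph{multiset} of assigned bundles unchanged, so any bundle an agent $i$ could be compared against afterward is a bundle that was already present before and already satisfied the $\frac{1}{2}$-EFX and EF1 inequalities relative to $i$'s old bundle $X_i$; and (b) $v_i(X_i)$ weakly increases, because every agent on the cycle receives a bundle she strictly preferred and agents off the cycle are unaffected --- this is precisely \Cref{lem:bundle_monotonicity}, and it is the fact the paper cites for this step. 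You do invoke \Cref{lem:bundle_monotonicity} later in your write-up, but only to maintain weak separation; it is also the missing ingredient here, and the ``invariant under relabeling'' phrasing should be replaced by (a) and (b).
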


We also show in \Cref{sec:det_envy_cycles} that our algorithm runs in polynomial time.

\begin{restatable}{lemma}{lemcyclespolytime}\label{lem:cycles-polytime}
   \Cref{alg:det_envy_cycles} terminates after polynomially many steps. 
\end{restatable}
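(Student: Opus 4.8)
The plan is to bound separately the number of item-allocation steps and the number of cycle-elimination steps. The item-allocation steps are trivial to bound: each such step decreases the number of unallocated items by exactly one, so there are at most $m$ of them over the whole run. Hence it suffices to show that between two consecutive item-allocation steps, the algorithm performs only polynomially many cycle-elimination steps; summing over the at most $m+1$ "phases" delimited by item-allocation steps then gives a polynomial bound overall.

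So fix a phase in which no item is allocated, i.e. a maximal run of consecutive cycle-elimination steps. During such a run the multiset of bundles $\{X_1,\dots,X_n\}$ is fixed (cycle elimination only permutes bundles among agents); only the assignment of bundles to agents changes. First I would argue that each cycle elimination strictly decreases the total number of envy edges $|E_X|$ in the envy graph. This is the standard fact from \citet{LMMS04}: when we rotate bundles along an envy cycle $C=(u_1,\dots,u_k)$, every agent $u_i$ on the cycle now holds the bundle $X_{u_{i+1}}$ that she strictly preferred, so she no longer envies $u_{i+1}$ — that edge is destroyed — and more carefully one shows no new envy edge is created into or among the cycle because each $u_i$'s new bundle is weakly better than her old one (here \Cref{lem:bundle_monotonicity}, the monotonicity of $v_i(X_i)$, does the work), while agents off the cycle keep their bundles and their values for the (same multiset of) bundles are unchanged. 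Consequently $|E_X|$ strictly drops by at least one at every cycle-elimination step, and since $|E_X| \le n(n-1)$ always, a phase contains at most $n^2$ cycle-elimination steps.

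Combining, the total number of steps is at most $m$ item-allocation steps plus at most $(m+1)\cdot n^2$ cycle-elimination steps, which is polynomial in $n$ and $m$; and each individual step (finding an unenvied agent, or finding a cycle in the envy graph via depth-first search, or rotating bundles) is clearly polynomial-time. The one point needing care — and the only real obstacle — is verifying that a cycle-elimination step creates no new envy edges, so that $|E_X|$ is genuinely monotone decreasing within a phase; this is exactly where I would invoke \Cref{lem:bundle_monotonicity} to compare each agent's new value against her old one and rule out newly created envy.
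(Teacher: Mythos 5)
Your overall decomposition is exactly the paper's: at most $m$ item-allocation steps, and between two consecutive ones at most $n^2$ cycle eliminations (so $O(n^2 m)$ steps in total, each polynomial-time). The place where your argument goes astray, however, is exactly the point you flag as needing care. It is not true that ``no new envy edge is created into or among the cycle.'' Agents on the cycle receive new bundles, so an off-cycle agent $i$ who previously envied the bundle $X_{u_{t+1}}$ now holds an arc toward $u_t$, who holds that bundle after the rotation --- and this arc $(i, u_t)$ may well be new. Concretely, with a two-cycle $1 \leftrightarrow 2$ and a single extra arc $3 \to 1$, rotating creates a brand-new arc $3 \to 2$ into the cycle. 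So the decrease of $|E_X|$ cannot be established by arguing that the new arc set is a subset of the old one.

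What does work, and is what the paper actually proves, is a shifting (injection) argument. Map each arc $(i, j)$ of the new envy graph to the arc $(i, j')$ of the old envy graph, where $j'$ is the agent who held $j$'s current bundle before the rotation. This map is well-defined precisely because each agent's value for her own bundle only weakly increases (this is where \Cref{lem:bundle_monotonicity} enters: if $i$ envies $j$'s new bundle now, she envied that bundle with her weakly worse old bundle too), it is injective (every bundle has a unique holder on each side), and no arc of $C$ lies in its image (a preimage of $(u_t, u_{t+1})$ would have to be the self-loop $(u_t, u_t)$). Hence $|E_X|$ drops by at least $|C| \geq 2$ at every cycle elimination, and the rest of your bound follows. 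The structure of your proof is correct; replace the ``no new arcs into the cycle'' claim by this injection argument to make it sound.
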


\subsection{A Distribution over Matchings}\label{sec:matching_distribution}

For the first phase of the algorithm, we follow the approach of \citet{BM01}, who study the allocation of $n$ items among $n$ agents, where every agent receives exactly one item.
Their algorithm applies Probabilistic Serial, followed by Birkhoff-von Neumann rounding, described below.

In Probabilistic Serial, the allocation is constructed via the {\em simultaneous eating} procedure, where every agent ``consumes'' her favorite item (with ties between multiple items of the same value broken arbitrarily) at a constant rate of one item per one unit of time, i.e., all agents have the same eating rate. After any item is fully consumed (possibly by multiple agents), each agent independently and instantaneously switches to her next-best item (again, with ties broken arbitrarily) and continues eating at the same constant rate.

\citet{BM01} study the setting where $n$ items are allocated to $n$ agents, and every agent receives exactly one item. In our case, however, the number of items $m$ might be greater than the number of agents $n$, and we apply this procedure for exactly one time unit (i.e., at the point where each agent has consumed a total fractional mass of exactly one item). This is in contrast to the generalization of \citet{FSV20} and \citet{A20}, who repeatedly apply probabilistic serial until all $m$ items are allocated. We refer the reader to \cref{sec:ps_explained} for a more detailed explanation of Probabilistic Serial and the algorithms of \citet{FSV20} and \citet{A20}.

The output of the algorithm is a fractional allocation of items to agents (see \Cref{fig:ps_eating}(a)). 
It is represented by a matrix $Z_{ij} \in [0,1]$ for $1 \leq i \leq n$ and $1 \leq j \leq m$, satisfying $\sum_{j \in [m]} Z_{ij} = 1$ for every agent $i \in [n]$ and $\sum_{i \in [n]} Z_{ij} \leq 1$ for every item $j \in [m]$.

We then apply a classic theorem of Birkhoff~\cite{B46} and von Neumann~\cite{N53} (which we restate here in a form that is relevant to our algorithm) to decompose the fractional allocation returned by the eating procedure into a distribution over integral allocations.

\begin{theorem}[Birkhoff-von-Neumann] \label{thm:bvn}
    Let $Z$ be the fractional allocation returned after one time step of Probabilistic Serial. There is a strongly-polynomial-time algorithm that computes binary matrices $X^1,\ldots,X^q$ (where $X^k_{ij}\in\{0,1\}$ for all $i,j,k$) and probabilities $p^1,\ldots,p^q \in (0,1]$ (where $\sum_{k\in[q]}p^k = 1$) such that $Z = \sum_{k\in[q]}p^kX^k$, i.e., the distribution $((X^k,p^k))_{k\in[q]}$ over integral allocations is a weighted decomposition of the fractional allocation $Z$.
\end{theorem}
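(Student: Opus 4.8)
The plan is to reduce to the classical Birkhoff--von Neumann theorem for square doubly stochastic matrices and then make the argument algorithmic via a greedy matching decomposition; the result is essentially classical, so I will sketch the reduction and the implementation rather than reprove the combinatorial core in detail.

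First I would pad $Z$ to a square matrix. Recall from the paragraph preceding the statement that $\sum_{j\in[m]} Z_{ij}=1$ for every $i\in[n]$ and $\sum_{i\in[n]}Z_{ij}\le 1$ for every $j\in[m]$, so $\sum_{i,j}Z_{ij}=n$ and the column deficiencies $d_j := 1-\sum_{i\in[n]}Z_{ij}\ge 0$ satisfy $\sum_{j}d_j = m-n$. I would introduce $m-n$ dummy agents and fill the corresponding $(m-n)\times m$ block with any nonnegative matrix whose row sums are $1$ and whose column sums are the $d_j$; such a matrix exists precisely because the two totals agree, and it can be produced in strongly polynomial time by a simple greedy (northwest-corner) fill. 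The result is an $m\times m$ doubly stochastic matrix $\tilde Z$ whose top $n\times m$ block is $Z$.

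Next I would decompose $\tilde Z$ greedily. Form the bipartite support graph $H$ on rows and columns with an edge $(i,j)$ whenever $\tilde Z_{ij}>0$. Hall's condition holds: for any set $S$ of rows, $|S| = \sum_{i\in S}\sum_j \tilde Z_{ij} = \sum_{i\in S}\sum_{j\in N(S)}\tilde Z_{ij} \le \sum_{j\in N(S)}\sum_i \tilde Z_{ij} = |N(S)|$, where $N(S)$ is the set of columns incident to $S$. Hence $H$ has a perfect matching $\pi$, computable in strongly polynomial time (e.g., Hopcroft--Karp). Let $\tilde X$ be the permutation matrix of $\pi$ and $p := \min_i \tilde Z_{i,\pi(i)}>0$. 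If $p=1$ then $\tilde Z = \tilde X$ and we are done; otherwise replace $\tilde Z$ by $\frac{1}{1-p}(\tilde Z - p\tilde X)$, which is again nonnegative and doubly stochastic (every row and column sum of $\tilde Z - p\tilde X$ equals $1-p$) and has strictly fewer nonzero entries, since the entry attaining the minimum becomes $0$ and stays $0$ forever. Iterating yields $\tilde Z = \sum_{k\in[q]} p^k \tilde X^k$ with permutation matrices $\tilde X^k$, probabilities $p^k\in(0,1]$ summing to $1$ (the row sums drop from $1$ to $0$), and $q = O(m^2)$; since each iteration is a single strongly polynomial matching computation, the whole procedure is strongly polynomial.

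Finally I would define $X^k$ as the restriction of $\tilde X^k$ to its first $n$ rows. Deleting the dummy rows from both sides of $\tilde Z = \sum_k p^k \tilde X^k$ gives $Z = \sum_k p^k X^k$, and each $X^k$ is a $\{0,1\}$-matrix in which every one of the first $n$ rows has exactly one $1$ (each agent receives one item) and every column has at most one $1$ (each item goes to at most one agent), i.e., a valid integral allocation. The only subtlety worth flagging is why the padding is necessary: running the same greedy decomposition directly on the rectangular matrix $Z$ does not preserve the invariant that column sums are at most $1$, since an unmatched column retains its full mass while the normalization factor $\tfrac{1}{1-p}$ inflates it; passing through a square doubly stochastic matrix, where every matching in the support graph is perfect, is exactly what avoids this issue.
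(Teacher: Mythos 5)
The paper does not prove this statement; it cites it as a classical result of Birkhoff and von Neumann, restated in a form tailored to the algorithm. Your proof is a correct, self-contained argument for that restated form via the standard route: pad $Z$ to an $m\times m$ doubly stochastic matrix $\tilde Z$ (using that row sums are $1$, column sums are $\le 1$, and the total mass is $n$), repeatedly extract a perfect matching from the bipartite support graph of $\tilde Z$ (guaranteed by the Hall's condition computation you give), peel off the corresponding permutation matrix with coefficient equal to the minimum matched entry, and finally project the permutation matrices onto the first $n$ rows. The observation that each iteration kills at least one nonzero while preserving double stochasticity correctly bounds the number of iterations by $O(m^2)$, and your closing remark about why the rectangular matrix must be squared first (so that matchings in the support graph are perfect and normalization does not inflate deficient columns) is exactly the right subtlety to flag. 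One small point worth adding if this were to be included: your construction also yields the three bullet points of Corollary~\ref{cor:bvn} directly, since (ii) a column $j$ with $\sum_i Z_{ij}=0$ has support only in the dummy block and is always matched to a dummy row, and (iii) a column $j$ with $\sum_i Z_{ij}=1$ has deficiency $d_j=0$, hence no support in the dummy block, and is always matched to a real agent. That makes the padding not just a technical device but the source of the structural properties the paper needs.
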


The above theorem immediately implies the following corollary.

\begin{corollary} \label{cor:bvn}
For the fractional allocation $Z$ returned by one time step of Probabilistic Serial, and its decomposition $((X^k,p^k))_{k\in[q]}$ obtained in \Cref{thm:bvn}, we have
    \begin{enumerate}[\label=(i)]
        \item for every agent $i\in[n]$ and $k\in[q]$, it holds that $\sum_{j\in[m]} X^k_{ij} = 1$, i.e., each agent is assigned exactly one item in every integral allocation $X^k$ in the support of the decomposition.
        \item for every item $j\in[m]$ for which $\sum_{i\in[n]}Z_{ij} = 0$, it holds that $\sum_{i \in [n]}X^k_{ij} = 0$, i.e., the item $j$ is unallocated in every integral allocation $X^k$ in the support of the decomposition.
        \item for every item $j\in[m]$ for which $\sum_{i\in[n]}Z_{ij} = 1$, it holds that $\sum_{i \in [n]}X^k_{ij} = 1$, i.e., the item $j$ is allocated in every integral allocation $X^k$ in the support of the decomposition.\label{cor:bvn_allocated}
        \end{enumerate}
\end{corollary}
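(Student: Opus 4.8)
The plan is to derive all three properties from the single structural fact that the Birkhoff--von Neumann decomposition of $Z$ can be taken to consist of integral \emph{matchings}: binary matrices $X^k$ in which every agent is assigned exactly one item ($\sum_{j} X^k_{ij} = 1$ for all $i$) and every item is assigned to at most one agent ($\sum_{i} X^k_{ij} \le 1$ for all $j$). Given this fact, part (i) is just the first half of it, and parts (ii) and (iii) are immediate averaging arguments over the decomposition.

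First I would pin down the marginals of $Z$. Since Probabilistic Serial runs for exactly one time unit and every agent eats at a rate of one item per unit of time, every agent has consumed a total fractional mass of exactly one item, so $\sum_{j \in [m]} Z_{ij} = 1$ for every agent $i$; combined with the stated inequality $\sum_{i \in [n]} Z_{ij} \le 1$ for every item $j$, this says precisely that $Z$ is a fractional one-sided perfect matching. I would then observe that $Z$ lies in the polytope $\{X \ge 0 : \sum_{j} X_{ij} = 1 \text{ for all } i,\ \sum_{i} X_{ij} \le 1 \text{ for all } j\}$, whose vertices are exactly the integral one-sided perfect matchings. The concrete route I would take to justify that \Cref{thm:bvn} actually yields such matrices is to complete $Z$ to an $m \times m$ doubly stochastic matrix by appending $m - n$ dummy agent-rows that absorb the column deficits $1 - \sum_{i} Z_{ij}$ (their total is $m - n$, since $Z$ has total mass $n$, so one unit of deficit per dummy row), apply the classical Birkhoff--von Neumann theorem to obtain a convex combination of $m \times m$ permutation matrices, and restrict each permutation matrix to the $n$ real-agent rows. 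Each restricted matrix has exactly one $1$ per real-agent row and at most one $1$ per column, i.e.\ it is a matching of the desired form, and the convex weights carry over unchanged. This establishes part (i) directly.

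With the matching structure in hand, the remaining parts follow by averaging using $Z = \sum_{k} p^k X^k$, $\sum_{k} p^k = 1$, $p^k > 0$, and non-negativity. For part (ii), if $\sum_{i} Z_{ij} = 0$ then $\sum_{k} p^k \bigl(\sum_{i} X^k_{ij}\bigr) = 0$ with all summands non-negative and all weights positive, forcing $\sum_{i} X^k_{ij} = 0$ for every $k$. For part (iii), if $\sum_{i} Z_{ij} = 1$ then $\sum_{k} p^k \bigl(\sum_{i} X^k_{ij}\bigr) = 1$; by the matching property each $\sum_{i} X^k_{ij}$ is an integer in $\{0,1\}$, and a convex combination of values in $\{0,1\}$ with all weights positive equals $1$ only if every value is $1$, so $\sum_{i} X^k_{ij} = 1$ for every $k$.

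I do not anticipate a genuine obstacle. The only point needing care is the justification that the decomposition guaranteed by \Cref{thm:bvn} consists of one-sided perfect matchings rather than arbitrary binary matrices summing to $Z$; this is exactly what the dummy-agent completion to a doubly stochastic matrix handles. Everything after that is a one-line averaging argument.
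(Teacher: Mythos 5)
The paper does not actually prove this corollary; it asserts that \Cref{thm:bvn} ``immediately implies'' it and moves on. Your proof is correct and supplies the reasoning the paper leaves implicit. You correctly identify the non-obvious step: \Cref{thm:bvn} as literally stated only promises binary matrices summing to $Z$, and to get parts (i)--(iii) one needs the stronger fact that the decomposition consists of one-sided matchings (row sums exactly $1$, column sums at most $1$). Your dummy-agent completion to an $m \times m$ doubly stochastic matrix is the standard and clean way to obtain such a decomposition: each real-agent row already sums to $1$ because PS runs for exactly one time unit, the column deficits total $m-n$, and any fractional filling of the $(m-n)\times m$ dummy block with row sums $1$ and column sums equal to the deficits (e.g.\ the uniform one) works; restricting the resulting permutation matrices back to the real-agent rows gives binary matrices with exactly one $1$ per row and at most one $1$ per column. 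Given that structure, your averaging arguments for (ii) and (iii) are airtight: $\sum_k p^k\bigl(\sum_i X^k_{ij}\bigr)$ equals $0$ or $1$, and since each inner sum is an integer in $\{0,1\}$ and all $p^k>0$, the inner sums must all be $0$ or all be $1$ respectively. The one small point worth stating explicitly, which you gloss over with ``one unit of deficit per dummy row,'' is that the deficit need not be concentrated one-per-row or per-column; all that matters is that the transportation problem on the dummy block is feasible, which follows from the matching totals.
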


Sampling an integral allocation $X = (X_1, \ldots, X_n)$ from the distribution given by \Cref{thm:bvn} has several advantages. 
First, it has strong ex-ante properties, captured by the following lemma.

\begin{restatable}[\cite{BM01}]{lemma}{lempsprefix}\label{lem:ps_prefix}
    {For any allocation $X$ obtained by \Cref{thm:bvn}}, it holds that $v_i(X_i) \SD v_i(X_j)$
    for any two agents $i$ and $j$.
\end{restatable}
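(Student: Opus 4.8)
The plan is to reduce the claimed stochastic dominance to the classical ordinal (sd-)efficiency property of the Probabilistic Serial eating procedure of \citet{BM01}, exploiting the fact that by \Cref{cor:bvn} every agent receives exactly one item in every allocation of the support, so that $v_i(X_i)$ and $v_i(X_j)$ are each determined solely by agent $i$'s ordinal ranking of the single item that $i$, resp.\ $j$, receives.

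First I would observe that, for the random allocation $X$ drawn from the decomposition $((X^k,p^k))_{k\in[q]}$ of \Cref{thm:bvn}, the marginal probability that agent $i$ receives item $g$ equals $Z_{ig}$: this is immediate from $Z=\sum_k p^k X^k$ together with $X^k_{ig}\in\{0,1\}$ and the fact that in each $X^k$ agent $i$ gets exactly the item $g$ with $X^k_{ig}=1$. Hence for every $t\ge 0$,
\[
\mathbb{P}[v_i(X_i)\ge t] \;=\; \sum_{g:\,v_i(g)\ge t} Z_{ig},
\qquad
\mathbb{P}[v_i(X_j)\ge t] \;=\; \sum_{g:\,v_i(g)\ge t} Z_{jg},
\]
so it suffices to prove $\sum_{g\in A_t} Z_{ig}\ge\sum_{g\in A_t} Z_{jg}$ for the upper-contour set $A_t:=\{g\in[m]:v_i(g)\ge t\}$, for every $t$.

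The core of the argument is the following property of the eating procedure run for one time unit. At every instant of $[0,1]$ at which $A_t$ still contains an item that has not been fully consumed, agent $i$'s favourite available item lies in $A_t$: indeed every item of $A_t$ has value at least $t$ to $i$ while every item outside $A_t$ has value strictly less than $t$, so the boundary of $A_t$ never falls inside a tie class and the arbitrary tie-breaking of Probabilistic Serial is immaterial here. Consequently agent $i$ consumes items of $A_t$ at total rate $1$ throughout the interval $[0,\min(1,\tau)]$, where $\tau$ is the first time at which all of $A_t$ is fully consumed (with $\tau:=\infty$ if that does not occur within time $1$); hence $\sum_{g\in A_t} Z_{ig}=\min(1,\tau)$. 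On the other hand, no item of $A_t$ can be consumed after time $\tau$, and each agent consumes total mass exactly $1$; hence $\sum_{g\in A_t} Z_{jg}\le\min(1,\tau)$. Combining the two displays gives $\mathbb{P}[v_i(X_i)\ge t]\ge\mathbb{P}[v_i(X_j)\ge t]$ for all $t\ge 0$, which is precisely $v_i(X_i)\SD v_i(X_j)$.

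The step I expect to require the most care is the tie-handling inside the eating procedure: one must argue that the particular sets $A_t$ arising from $i$'s own valuation are ``tie-closed'' (no item in $A_t$ is tied, under $v_i$, with any item outside it), so that the statement ``$i$'s favourite available item is in $A_t$ whenever some item of $A_t$ remains'' is robust to how ties among equal-value items are resolved. The remaining pieces are routine bookkeeping: the identity $\sum_{g\in A_t} Z_{ig}=\min(1,\tau)$ uses only that $i$ eats at unit rate and that $A_t$ carries at most $|A_t|$ units of mass, while the bound for $j$ uses only that consumption of an item stops once it is exhausted and that each agent eats a total of one unit.
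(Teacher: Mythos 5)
Your proof is correct and follows essentially the same approach as the paper: reduce the claim to comparing $\sum_{g\in A_t} Z_{ig}$ and $\sum_{g\in A_t} Z_{jg}$ on the upper-contour set $A_t=\{g:v_i(g)\ge t\}$, and exploit that agent $i$ eats from $A_t$ at rate one until $A_t$ is exhausted. Your direct bounds $\sum_{g\in A_t}Z_{ig}=\min(1,\tau)$ and $\sum_{g\in A_t}Z_{jg}\le\min(1,\tau)$ express the same idea the paper phrases as a contradiction, and your tie-closedness observation about $A_t$ makes explicit a detail left implicit in the paper.
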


\begin{proof}
    We show that $\mathbb{P}[v_i(X_i) \geq t] \geq \mathbb{P}[v_i(X_j) \geq t]$ for any $t \geq 0$. Let $\mathcal{A} = \{ a \in \items : v_i(a) \geq t\}$. Note that $\mathbb{P}[v_i(X_i) \geq t] = \sum_{a\in \mathcal{A}} Z_{ia}$ and $\mathbb{P}[v_i(X_j) \geq t] = \sum_{a\in\mathcal{A}} Z_{ja}$. Suppose for the purpose of contradiction that $t_1 = \sum_{a \in \mathcal{A}} Z_{ia} < \sum_{a \in \mathcal{A}} Z_{ja} = t_2$. This means that after $t_1$ units of time, agent $i$ started consuming an item of value less than $t$. However, between time $t_1$ and $t_2$, agent $j$ was still consuming some items in $\mathcal{A}$ which means that at least one of them was still not fully consumed during that time. This contradicts the assumption that $i$ always switches to eat the next most valuable item that is not fully consumed yet. 
\end{proof}

Moreover, it is easy to see that any integral allocations in the support of the distribution given by \Cref{thm:bvn} is weakly separated (\Cref{def:weak_separation}). 
Hence, the ex-post guarantees given in \Cref{sec:deterministic_envy_cycles} are always satisfied.

However, to prove the desired ex-ante guarantees given in \Cref{sec:proof}, weak separation is not enough, and we use the following stronger notion.

\begin{definition}[Strong separation]
    A \emph{random} partial allocation $X=(X_1,\ldots,X_n)$ with the corresponding \emph{random} collection $U = \items \setminus \bigcup_{i \in \agents} X_i$ of unallocated items satisfies \emph{strong separation} if it holds that for any allocation $Y$ in the support of $X$, we have $v_i(Y_i) \geq v_i(w)$ for every agent $i\in[n]$ and every item $w$ such that $\mathbb{P}[w\in U] > 0$.
\end{definition}

Clearly, any strongly separated allocation is also weakly separated.
We now show that the outcome of the first phase is strongly separated.

\begin{lemma} \label{lem:ps_strong_sep}
    The (random) allocation $X=(X_1, \ldots X_n)$ is strongly separated.
\end{lemma}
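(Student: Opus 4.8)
The plan is to unfold the definition of strong separation and reduce it to a statement about the Probabilistic Serial run. Fix an agent $i$, an allocation $Y$ in the support of $X$, and an item $w$ with $\mathbb{P}[w \in U] > 0$; here $U = \items \setminus \bigcup_j X_j$ is the (random) set of items left unallocated after the one time unit of eating. I must show $v_i(Y_i) \geq v_i(w)$. By \Cref{cor:bvn}, $Y_i$ is a single item, say $Y_i = \{a\}$, so the goal is $v_i(a) \geq v_i(w)$. Since $a$ is in the support, $Z_{ia} > 0$, i.e.\ agent $i$ ate a positive mass of $a$ at some point during the unit time interval; since $\mathbb{P}[w \in U] > 0$, by \Cref{cor:bvn}\ref{cor:bvn_allocated} we must have $\sum_{k}Z_{kw} < 1$, i.e.\ item $w$ is only partially consumed (strictly less than total mass $1$) at the end of the eating procedure — in particular $w$ is not fully consumed at any time $\le 1$.

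The key step is then the standard Probabilistic Serial exchange argument, exactly as in the proof of \Cref{lem:ps_prefix}: because agent $i$ always consumes her most valuable item among those not yet fully consumed, and $w$ is never fully consumed during $[0,1]$, at every moment that agent $i$ is eating she is eating an item she values at least as much as $w$. Since $Z_{ia} > 0$, agent $i$ ate item $a$ at some such moment, and at that moment $w$ was still available (not fully consumed), so $v_i(a) \geq v_i(w)$. This gives $v_i(Y_i) = v_i(a) \geq v_i(w)$, as required.

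One subtlety to address cleanly: I should phrase the argument so it is uniform over all $Y$ in the support and all admissible $w$, rather than pretending $a$, $w$ depend on a particular realization. The cleanest way is to note that $\{a : Z_{ia} > 0\}$ is exactly the set of items agent $i$ consumes a positive amount of, and $\{w : \mathbb{P}[w\in U]>0\}$ is contained in the set of items with total consumption $< 1$ (using \Cref{cor:bvn}); then the exchange argument shows $v_i(a) \ge v_i(w)$ for every such pair. The main obstacle — though it is a mild one — is being careful with the eating dynamics and tie-breaking: I must argue that "agent $i$ switches only to a not-yet-fully-consumed item of maximum value" implies the comparison even when $w$'s value equals that of items $i$ is eating, and that an item with $\mathbb{P}[w\in U]>0$ indeed has total consumed mass strictly below $1$ (so it was genuinely available the whole time). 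Both follow directly from the definition of Probabilistic Serial and \Cref{cor:bvn}, so no new machinery is needed; the lemma is essentially a strengthening of \Cref{lem:ps_prefix} from an in-expectation/stochastic-dominance statement to a pointwise one, obtained by the same single-swap argument.
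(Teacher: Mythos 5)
Your proof is correct and follows essentially the same route as the paper's: you identify the support of $X_i$ with the set $\{a : Z_{ia}>0\}$, use the Birkhoff--von Neumann corollary to equate $\mathbb{P}[w\in U]>0$ with $\sum_k Z_{kw}<1$, and then apply the standard Probabilistic Serial exchange argument (``otherwise $i$ would have eaten $w$ first'') to conclude $v_i(a)\geq v_i(w)$. The only difference is presentational: you spell out the appeal to \Cref{cor:bvn} and the tie-breaking/monotone-availability subtlety in more detail than the paper, which states these steps implicitly.
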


\begin{proof}
    Fix any agent $i$. Let $a \in [m]$ be any item that agent $i$ ate a positive fraction of, i.e., it holds that $Z_{ia} = \mathbb{P}[X_i = \{a\}] > 0$. Let $b \in [m]$ be any item that was not fully consumed during the eating process, i.e., it holds that $1 - \sum_{k \in [n]} Z_{kb} = \mathbb{P}[b \in U] > 0$. Then, it is the case that $v_i(a) \geq v_i(b)$. Otherwise, agent $i$ would have started eating $b$ before eating any fraction of $a$.
\end{proof}

\subsection{A Distribution over Envy Cycles}\label{sec:envy_cycles_distribution}

Having described the implementation of the first phase of the algorithm, let us now discuss the second phase. Suppose that the algorithm already selected a partial allocation $(X_1, \ldots, X_n)$ and at the current step, the algorithm must either (1) assign an unallocated item to an unenvied agent if there is such an agent, or (2) eliminate an envy cycle by reallocating the bundles along the cycle if there is such a cycle. 

First, it is important to note that the ex-post guarantees of the deterministic algorithm given by \Cref{lem:ex_post} do not depend on the specific choices of the algorithm. The algorithm can follow an arbitrary rule to decide (a) whether to perform operation (1) or (2) if at some point both of them can be executed, (b) which unenvied agent to give a new item to, (c) which unallocated item to assign to an unenvied agent, and (d) which envy cycle to eliminate.

{However, when it comes to ex-ante guarantees, arbitrary choices regarding the above dimensions can lead to an arbitrarily high ex-ante envy.
This is demonstrated in the following example.}

\begin{example}\label{ex:detcycles}
    {Consider an instance with $n$ agents having additive valuations over $2n+1$ items, as described in the table below.}
    \begin{center}
        \begin{tabular}{||c | c c c |ccc| c c c||} 
         \hline
         & apple & banana & celery & \multicolumn{3}{c|}{durians $\times$ ($n-3$)} & \multicolumn{3}{c||}{eggplants $\times$ ($n+1$)} \\ [0.5ex] 
         \hline\hline
         $v_1$ & $10$ & $3\varepsilon$ & $2\varepsilon$ & \;\;$0$ & \quad$\dots$\quad & $0$ & \;\;$2\varepsilon$ & \quad$\dots$\quad & $2\varepsilon$\\
         $v_2$ & $10$ & $2\varepsilon$ & $3\varepsilon$ & \;\;$0$ & \quad$\dots$\quad & $0$ & \;\;$2\varepsilon$ & \quad$\dots$\quad & $2\varepsilon$\\
         \hline
         $v_3$ & $10$ & $8$ & $0$ & \;\;$9$ & \quad$\dots$\quad & $9$ & \;\;$3$ & \quad$\dots$\quad & $3$\\
         $v_4$ & $10$ & $8$ & $0$ & \;\;$9$ & \quad$\dots$\quad & $9$ & \;\;$3$ & \quad$\dots$\quad & $3$\\
         $\vdots$ & $\vdots$ & $\vdots$ & $\vdots$  &\;\;$\vdots$  & \quad$\ddots$\quad & $\vdots$ & \;\;$\vdots$  & \quad$\ddots$\quad & $\vdots$ \\
         $v_n$ & $10$ & $8$ & $0$ & \;\;$9$ & \quad$\dots$\quad & $9$ & \;\;$3$ & \quad$\dots$\quad & $3$\\
         \hline
        \end{tabular}
    \end{center}
\end{example}

Observe that in this example, the apple is the most-preferred item for every agent. Suppose that in the first phase of the algorithm,
the following allocation is chosen: agent~$i$ (for some $3 \leq i \leq n$) gets the apple, agent~$1$ gets the banana, agent~$2$ gets celery, and each of the remaining agents~$3, \ldots, i-1, i+1, \ldots, n$ gets a durian. This is, for instance, exactly the allocation that is chosen with high probability (i.e., probability $1- o(1)$) by both the probabilistic serial method and the serial dictatorship algorithm with a uniformly random permutation of the agents. 

Furthermore, suppose that in the first step of the second phase, agent~$1$ (who is unenvied at the moment) gets an eggplant (which is unallocated at the moment). After this operation, there is an envy cycle: agent~$1$ (who has the banana and an eggplant) envies agent~$i$ (who has the apple), who in turn envies agent~$1$. However, if the algorithm eliminates this cycle, then agent~$2$ (who has the celery) envies agent~$1$ (who now has the apple) by an arbitrarily large factor. Moreover, for sufficiently small $\varepsilon$, once agent~$1$ receives the apple she will not be included in any envy cycles in the subsequent steps, so the apple will stay with her until the {termination of the algorithm.}

\Cref{ex:detcycles} shows that during a cycle elimination step, selecting the envy cycle in an arbitrary way does not give ex-ante $\alpha$-EF for any $\alpha > 0$.
To overcome this problem, we employ two ideas.
First, we only eliminate envy cycles if there are no unenvied agents.
{In \Cref{ex:detcycles}}, this means that we do not immediately allow agent~1 to trade for the apple, because agent~2 remains unenvied.

The second {(and key)} idea is the following. Suppose that there are two agents~$i_1$ and~$i_2$ who both envy some bundle $X_j$. We want to avoid selecting an envy cycle containing the arc $(i_1, j)$ where $i_1$ gets $X_j$, unless, with sufficiently high probability, we also select some other envy cycle containing the arc $(i_2, j)$, where $i_2$ gets $X_j$. Otherwise, agent~$i_2$ might have high ex-ante envy towards agent~$i_1$. The corresponding intuition in \Cref{ex:detcycles} is that we only allow agent~$1$ to trade for the apple if agent~$2$ also has a sufficiently high probability of getting the apple. 

Now, if the envy graph is strongly connected, then applying the following key lemma to the envy graph gives a ``fair'' distribution over the envy cycles. 
{The proof of this lemma is inspired by the work of~\citet{M81} on circuit processes. \Cref{fig:cycles} provides an illustration for this proof.}

\begin{lemma}{\bf (Key Lemma)}
\label{lem:cycle_distribution}
    Let $G = (V, E)$ be a strongly connected directed graph with $|V| \geq 2$. There exists a probability distribution $((c_t, p_t))_{t\in[r]}$ over the set of simple cycles $c_t$ in $G$ such that for all $j \in V$ and $(i_1, j), (i_2, j) \in E$ it holds that $\sum_{t:(i_1, j) \in c_t} p_t = \sum_{t:(i_2, j ) \in c_t} p_t$, i.e., the total probability {of} all cycles containing the edge $(i_1, j)$ is equal to total probability {of} all cycles containing the edge $(i_2,j)$. {Moreover, such a distribution can be computed in polynomial time.}
\end{lemma}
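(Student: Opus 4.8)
The plan is to realize the desired cycle distribution as the stationary "circuit process" of a suitably chosen Markov chain on $V$, following the decomposition theory of Kalpazidou/MacQueen \cite{M81}. Concretely, since $G$ is strongly connected, I will pick any irreducible stochastic matrix $P$ supported exactly on the edge set $E$ (for instance, $P_{ij} = 1/\deg^+(i)$ for $(i,j)\in E$, and $0$ otherwise), and let $\pi$ be its unique stationary distribution, which satisfies $\pi_i > 0$ for all $i$. The key observation is that for each edge $(i,j)\in E$, the "edge flow" $\pi_i P_{ij}$ has the property that the total flow into any vertex $j$ equals the total flow out of $j$ (both equal $\pi_j$): $\sum_{i:(i,j)\in E}\pi_i P_{ij} = \pi_j = \sum_{k:(j,k)\in E}\pi_j P_{jk}$. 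This conservation-of-flow property is exactly what lets one decompose the flow into a non-negative combination of simple cycles.

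The central step is the cycle decomposition. I will take the weighted directed graph with edge weights $w_{ij} = \pi_i P_{ij} > 0$ on $E$, which is a circulation (balanced at every vertex). By a standard flow-decomposition argument — repeatedly find a directed cycle in the support, subtract the minimum edge weight along it, and recurse — one writes $w = \sum_{t\in[r]} \lambda_t \mathbf{1}_{c_t}$ where each $c_t$ is a simple cycle in $G$, each $\lambda_t > 0$, and $r$ is polynomial (each step zeroes out at least one edge, so $r \le |E|$); finding each cycle and updating weights is polynomial-time, giving the claimed efficiency. Now set $p_t = \lambda_t / \sum_{s} \lambda_s$, which is a probability distribution over simple cycles. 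For any vertex $j$ and any in-edge $(i,j)\in E$, summing the decomposition restricted to edge $(i,j)$ gives $\sum_{t:(i,j)\in c_t}\lambda_t = w_{ij} = \pi_i P_{ij}$. Here is where the specific choice of $P$ pays off: with $P_{ij} = 1/\deg^+(i)$ this is not yet symmetric across different in-edges, so instead I should pick $P$ so that $\pi_i P_{ij}$ depends only on $j$ and not on $i$ — equivalently, a $P$ whose stationary edge-flow into $j$ is split equally among $j$'s in-edges. The cleanest way: let $P_{ij} = \frac{1}{|V|}\cdot\frac{1}{\deg^-(j)}$ for $(i,j)\in E$ (plus a self-loop correction so rows sum to $1$ — but self-loops are harmless since they contribute only to length-$1$ "cycles" which we can discard, after noting $|V|\ge 2$ and strong connectivity force genuine cycles to carry positive weight). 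Then the column sums work out and one checks $\pi \equiv (1/|V|,\dots,1/|V|)$ is stationary, so $w_{ij} = \pi_i P_{ij} = \frac{1}{|V|^2 \deg^-(j)}$ depends only on $j$. Hence for two in-edges $(i_1,j),(i_2,j)\in E$ we get $\sum_{t:(i_1,j)\in c_t}\lambda_t = w_{i_1 j} = w_{i_2 j} = \sum_{t:(i_2,j)\in c_t}\lambda_t$, and dividing by $\sum_s \lambda_s$ yields the required equality $\sum_{t:(i_1,j)\in c_t} p_t = \sum_{t:(i_2,j)\in c_t} p_t$.

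The main obstacle — and the place that needs the most care — is twofold. First, making the edge-flow $w_{ij}$ literally independent of the source $i$ while keeping $P$ a bona fide stochastic matrix on a graph that may have self-loops, parallel structure, or vertices of in-degree differing wildly from out-degree; the self-loop/normalization bookkeeping has to be done so that (a) $P$ is genuinely stochastic and irreducible on $V$, (b) the induced circulation on the non-loop edges is still balanced, and (c) discarding length-$1$ cycles from the decomposition does not break the conservation identity on the real edges. Second, one must make sure the flow-decomposition terminates with a polynomial number of cycles and runs in polynomial time — this is routine (the standard argument removes one edge from the support per iteration) but should be stated. If the "uniform edge-flow into each vertex" construction turns out to be awkward to push through cleanly, an equivalent fallback is to run the flow decomposition on the balanced circulation $w_{ij} := 1/\deg^-(j)$ directly (this is balanced iff $\sum_j$ terms match — which they need not in general!), so in fact the Markov-chain viewpoint, which automatically produces a balanced circulation, is the robust route and I would keep it as the backbone of the proof.
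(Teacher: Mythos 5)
Your overall approach --- build a balanced circulation from the stationary edge-flow of a Markov chain on $E$, decompose it into simple cycles, normalize --- is exactly the paper's strategy, and you correctly identify the key requirement: the edge-flow $w_{ij}$ must depend on $j$ only. But the specific Markov chain you propose does not exist. Setting $P_{ij} = \frac{1}{|V|\deg^-(j)}$ for $(i,j)\in E$ forces, by row-normalization, the self-loop weight $P_{jj} = 1 - \frac{1}{|V|}\sum_{k:(j,k)\in E}\frac{1}{\deg^-(k)}$. For the uniform vector to be stationary you separately need every column of $P$ to sum to $1$, i.e.\ $P_{jj} = 1 - \frac{1}{|V|}$ for every $j$. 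These two constraints agree only when $\sum_{k:(j,k)\in E}\frac{1}{\deg^-(k)} = 1$ for every $j$, which is false for general strongly connected graphs (e.g.\ a vertex whose out-neighbors all have in-degree $1$). So $\pi$ is \emph{not} uniform for your chain, $w_{ij} = \pi_i P_{ij}$ is \emph{not} a function of $j$ alone, and the downstream equality $\sum_{t:(i_1,j)\in c_t}p_t = \sum_{t:(i_2,j)\in c_t}p_t$ fails. You flag this exact issue as ``the main obstacle'' but do not resolve it, and your fallback ($w_{ij} := 1/\deg^-(j)$ directly) is, as you note, not balanced in general, so the cycle decomposition does not go through there either.

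The paper sidesteps the problem by running the Markov chain on the \emph{reversed} graph: from state $j$, transition to each in-neighbor $i$ (i.e.\ each $i$ with $(i,j)\in E$) with probability $1/\text{in-deg}_j$. Irreducibility follows from strong connectivity; let $\pi$ be the (generally non-uniform) stationary distribution. The flow assigned to the original edge $(i,j)$ is then $w(j,i) = \pi(j)/\text{in-deg}_j$, which depends only on $j$ \emph{automatically}, with no need to engineer $\pi$ or add self-loops. Stationarity gives balance, the standard flow-decomposition produces simple cycles in the reversed graph which are reversed back into cycles in $G$, and polynomial time is immediate since each step zeroes at least one arc. Your instinct that ``the Markov-chain viewpoint is the robust route'' is right; the missing idea is to reverse the edge orientations so that the transition probability out of $j$ (rather than the stationary mass times a cleverly tuned transition probability into $j$) carries the $1/\text{in-deg}_j$ factor.
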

\begin{proof}
    Consider a Markov chain with the set of states $V$ and the transition probabilities given by
    \begin{align*}
        p(j,i)=
            \begin{cases}
            1/\text{in-deg}_j & \text{if } (i,j) \in E \\
            0 & \text{otherwise}
            \end{cases}
    \end{align*}
    The probabilities $p$ are well-defined because $\sum_{i \in V} p(j,i) = 1$ for all $j \in V$.
    Since $G$ is strongly connected, this Markov chain is irreducible and therefore has a stationary distribution $\pi = (\pi_i)_{i \in V}$. Note that the stationary distribution is unique~\cite[Section 1.7]{N98}, and we can therefore compute the distribution in polynomial time with a direct method such as Gaussian elimination.

    Define the flow $w(j,i) = \pi(j) \cdot p(j,i)$ and observe that $w$ is a balanced flow, i.e., the flow that goes into $i$ is equal to the flow that goes out of $i$. This is true because 
    $$\sum_{i \in V} w(j,i) = \pi(j) = \sum_{i \in V} \pi(i) \cdot p(i,j) = \sum_{i \in V} w(i,j),$$
    where the second equality follows from the properties of the stationary distribution $\pi$.

    Next, we show that there is a probability distribution $((c_t, p_t))_{t\in[k]}$ over the set of simple cycles $c_t$ in $G$ such that for all $(i,j) \in E$ it holds that $\sum_{t:(i,j) \in c_t} p_t = w(j,i) \cdot C$ for some constant $C > 0$. This implies that for any two edges $(i_1, j), (i_2,j) \in E$ it holds that \[ \sum_{t:(i_1,j) \in c_t} p_t \cdot C^{-1} = w(j,i_1) = \pi(j) \cdot p(j,i_1) = \pi(j) / \text{in-deg}_j = w(j,i_2) = \sum_{t:(i_2,j) \in c_t} p_t \cdot C^{-1}.\] 
    
    We construct our collection of cycles in the following manner. We start with an empty collection of cycles. In each step $t$, we find a cycle $((i_1, i_2), (i_2, i_3), \ldots, (i_{k-1},i_k), (i_k, i_1))$ on which every arc has positive flow (this can be done in polynomial time via, e.g., depth-first search), and then find the minimum flow $w_t$ over any arc in this cycle. We add the pair $(c_t,p_t)$ to our collection where $c_t = ((i_1,i_k),(i_k,i_{k-1}),\ldots,(i_3,i_2),(i_2,i_1))$ and $p_t=w_t$, and then reduce the flow on the arcs $(i_1,i_2), (i_2,i_3), \ldots, (i_{k-1},i_k)$ by $w_t$. Note that the flow is still balanced after this operation. We repeat this process until {there no longer exists} a cycle with positive flow on every arc. 
    
    Since in each step of the above process, the flow on at least one arc is reduced to 0, this process {terminates} after a polynomial number of steps. Additionally, since $w$ is balanced at every step, this process only terminates when the flow is zero everywhere and we thus obtain $\sum_{t : (i,j) \in c_t} p_t = w(j,i)$. Finally, we normalize the probabilities over the cycles in our collection so that they sum to 1.
\end{proof}

\begin{figure}
\centering
\begin{minipage}{.3\textwidth}
\centering
\begin{tikzpicture}[thick,scale=3,>=stealth]
\node[draw,circle] (1) at (0,1) {1};
\node[draw,circle] (2) at (1,1) {2};
\node[draw,circle] (3) at (1,0) {3};
\node[draw,circle] (4) at (0,0) {4};
\draw[->] (1) to[bend left=20] (2);
\draw[->] (2) to[bend left=20] (3);
\draw[->] (3) to[bend left=20] (4);
\draw[->] (4) to[bend left=20] (1);
\draw[->] (4) to[bend left=20] (3);
\draw[->] (2) to[bend left=20] (1);
\draw[->] (1) to (3);
\end{tikzpicture}
\smallbreak
\smallbreak
\smallbreak
\smallbreak
(i) {Strongly connected graph}
\end{minipage}
\hfill
\begin{minipage}{.3\textwidth}
\centering
\begin{tikzpicture}[scale=3,thick]
\node[circle] (1) at (0,1) {$\frac{4}{13}$};
\node[circle] (2) at (1,1) {$\frac{3}{13}$};
\node[circle] (3) at (1,0) {$\frac{3}{13}$};
\node[circle] (4) at (0,0) {$\frac{3}{13}$};
\begin{scope}[-{Latex[width=8pt,length=4pt]}]
\draw[-{Latex[width=8pt,length=4pt]},double distance=4pt] (2) to[bend right=20] (1);
    \draw (2) to[bend right=20] (1);
\draw (3) to[bend right=20] (2);
\draw[double distance=4pt] (4) to[bend right=20] (3);
    \draw (4) to[bend right=20] (3);
\draw[double distance=2pt] (1) to[bend right=20] (4);
\draw (3) to[bend right=20] (4);
\draw[double distance=2pt] (1) to[bend right=20] (2);
\draw (3) to (1);
\end{scope}
\end{tikzpicture}
\smallbreak
\smallbreak
\smallbreak
(ii) Stationary distribution
\end{minipage}
\hfill
\begin{minipage}{.3\textwidth}
\centering
\begin{tikzpicture}[scale=3,thick]
\def\d{.05}
\def\triangleL{\scalebox{.6}{$\blacktriangleleft$}}
\def\triangleR{\scalebox{.6}{$\blacktriangleright$}}
\def\triangleU{\rotatebox{90}{\scalebox{.6}{$\blacktriangleleft$}}}
\def\triangleD{\rotatebox{-90}{\scalebox{.6}{$\blacktriangleleft$}}}
\def\triangleX{\rotatebox{135}{\scalebox{.6}{$\blacktriangleleft$}}}
\draw (\d,\d) to[bend right=20] node {\triangleL}
    (1-\d,\d) to[bend right=20] node {\triangleR} (\d,\d);
\draw (\d,1) to[bend right=20] node {\triangleL}
    (1-\d,1) to[bend right=20] node {\triangleR} (\d,1);
\draw (\d,1-\d) to[bend right=20] node {\triangleL}
    (1-\d,1-\d) to[bend right=20] node {\triangleR} (\d,1-\d);
\draw (-\d,-\d) to[bend right=20] node {\triangleL}
    (1+\d,-\d) to[bend right=20] node {\triangleU}
    (1+\d,1+\d) to[bend right=20] node {\triangleR}
    (-\d,1+\d) to[bend right=20] node {\triangleD} (-\d,-\d);
\draw (0,0) to[bend right=20] node {\triangleL}
    (1,0) to (1,2*\d) to node {\triangleX}
    (0,1-\d) to[bend right=20] node {\triangleD} (0,0);
\end{tikzpicture}
\smallbreak
(iii) Cycle distribution
\end{minipage}
\caption{Illustration of the proof of \Cref{lem:cycle_distribution} for a strongly connected graph. The graph in (i) induces the stationary distribution in the Markov chain presented in (ii), which can be used to compute the cycle distribution presented in (iii). The multiplicity of each arrow in (ii) is proportional to its flow, as defined by $w$ in the proof, i.e., each line represents a flow of $1/13$.}
\label{fig:cycles}
\end{figure}
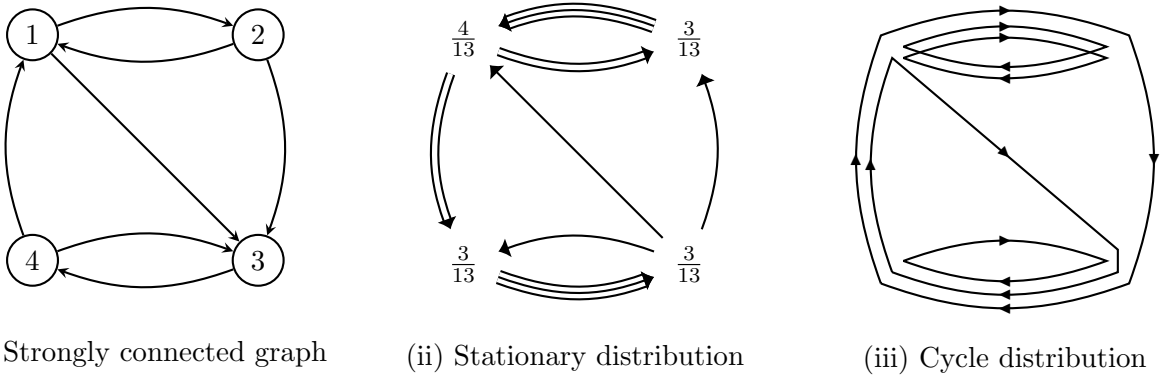

It remains to consider the case where the envy graph is not 
strongly connected. 
{In this case, we apply the key lemma to a strongly connected component $G$ with no incoming arcs, in order to avoid generating ex-ante envy for the agents outside of $G$.}
Such a component must always exist since the connectivity graph of the strongly connected components is acyclic. 
Moreover, since we only eliminate envy cycles if all the agents are envied, the selected component contains at least two agents, since otherwise, the {single} agent in the component would be unenvied.

{The distribution over the envy cycles in $G$ given by \Cref{lem:cycle_distribution} satisfies the following property: if any agent $i$ prefers some bundle $X_j$ to her own bundle $X_{i}$, then no {other} agent is more likely to get $X_j$ than $i$.
{We use this property in the form of the following lemma.}}

\begin{lemma}\label{lem:envy_cycles_sd}
    Let $(X_1', \ldots, X_n')$ be the (random) allocation obtained after eliminating an envy cycle chosen according to the probability distribution given by \Cref{lem:cycle_distribution} applied to the strongly connected component $G$. Then, for any two agents $i$ and $j$, it holds that
    \begin{align*}
        v_i(X_i') \SD v_i(X_j') \cdot \mathbb 1[X_{j}' \neq X_j].
    \end{align*}
\end{lemma}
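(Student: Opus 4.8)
The plan is to verify the stochastic dominance pointwise in the threshold: for a fixed pair $i,j$ I will show $\mathbb{P}[v_i(X_i') \geq t] \geq \mathbb{P}[v_i(X_j')\cdot\mathbb{1}[X_j'\neq X_j]\geq t]$ for every $t \geq 0$, where all probabilities are taken over the random simple cycle $C$ sampled from the distribution of \Cref{lem:cycle_distribution} on the chosen component $G$ (we are in the regime where a cycle is eliminated, so $C$ is always a genuine cycle, and $G$ has no incoming arcs). The case $t=0$ is immediate, so I fix $t>0$. I first dispose of the position of $j$: if $j\notin V(G)$, then $j$ never lies on $C$, so $X_j' = X_j$ always and the right-hand side is $0$; hence I may assume $j \in V(G)$.

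Next I record the combinatorics of one cycle-elimination step. Since $C$ is a simple cycle, each node is the tail of at most one arc of $C$; if $(a,b)\in C$ then the shift sets $X_a' = X_b$, and if $a\notin C$ then $X_a' = X_a$ (and when $(j,b)\in C$ we have $X_b \neq X_j$ since $v_j(X_b)>v_j(X_j)$). Consequently, for $t>0$ the event $\{v_i(X_j')\cdot\mathbb{1}[X_j'\neq X_j]\geq t\}$ is the union, disjoint over $\ell$, of the events that the out-arc of $j$ in $C$ is some $(j,\ell)\in E(G)$ with $v_i(X_\ell)\geq t$, so its probability equals $\sum \mathbb{P}[(j,\ell)\in C]$ over such $\ell$. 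For the left-hand side: if $v_i(X_i)\geq t$, then \Cref{lem:bundle_monotonicity} (a cycle shift only moves $i$ onto a bundle it weakly prefers) gives $v_i(X_i')\geq t$ with probability $1$ and we are done; so I may also assume $v_i(X_i)<t$, and then identically $\{v_i(X_i')\geq t\}$ is the disjoint union over $\ell$ with $(i,\ell)\in E(G)$ and $v_i(X_\ell)\geq t$ of the events $\{(i,\ell)\in C\}$.

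The heart of the argument couples the envy structure with \Cref{lem:cycle_distribution}. If $i\notin V(G)$: since $G$ has no incoming arcs, $v_i(X_\ell)\leq v_i(X_i)<t$ for all $\ell\in V(G)$, so the right-hand sum is empty and equals $0$. If $i\in V(G)$: any $\ell\in V(G)$ with $v_i(X_\ell)\geq t>v_i(X_i)$ has $i$ envying $\ell$, so $(i,\ell)\in E(G)$ because $G$ is an induced subgraph; hence the side condition $(i,\ell)\in E(G)$ is vacuous and $\mathbb{P}[v_i(X_i')\geq t]=\sum \mathbb{P}[(i,\ell)\in C]$ over the full set of $\ell\in V(G)$ with $v_i(X_\ell)\geq t$. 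Now I compare term by term against the right-hand side over this same index set (extending it by the zero terms coming from $\ell$ with $(j,\ell)\notin E(G)$): for each such $\ell$, if $(j,\ell)\in E(G)$ then \Cref{lem:cycle_distribution} — all arcs sharing the head $\ell$ lie on the random cycle with equal probability — gives $\mathbb{P}[(j,\ell)\in C]=\mathbb{P}[(i,\ell)\in C]$, while if $(j,\ell)\notin E(G)$ then $\mathbb{P}[(j,\ell)\in C]=0\leq\mathbb{P}[(i,\ell)\in C]$. Summing yields $\mathbb{P}[v_i(X_j')\cdot\mathbb{1}[X_j'\neq X_j]\geq t]\leq\mathbb{P}[v_i(X_i')\geq t]$, completing all cases.

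I expect the main obstacle to be recognizing that the specific guarantee of \Cref{lem:cycle_distribution} — equal cycle-probability for arcs with a common head — is exactly what the envy graph supplies in this situation: once $v_i(X_i)<t$, agent $i$ envies \emph{every} $\ell$ with $v_i(X_\ell)\geq t$, so $i$ has access to all the out-arcs into the ``high-value'' set that $j$ could conceivably use, each with at least the same probability. Everything else is bookkeeping: reducing to $t>0$, using monotonicity to eliminate the $v_i(X_i)\geq t$ case, using the ``no incoming arcs'' property to handle $i$ or $j$ outside $G$, and noting that arcs leaving $V(G)$ carry zero cycle-probability.
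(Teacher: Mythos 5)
Your proposal is correct and takes essentially the same approach as the paper: split on $t \lessgtr v_i(X_i)$, characterize the threshold events in terms of which out-arc of $i$ (resp.\ $j$) lies on the sampled cycle, and invoke the equal-probability-per-head property of \Cref{lem:cycle_distribution}. You spell out a few edge cases the paper's proof elides --- notably that when $i\in V(G)$ and $t>v_i(X_i)$, every $\ell\in V(G)$ with $v_i(X_\ell)\geq t$ automatically satisfies $(i,\ell)\in E(G)$, and that when $i$ or $j$ lies outside $G$ the relevant side vanishes --- but these are the implicit justifications for the paper's displayed inequality, not a different argument.
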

\begin{proof}
    Let $C$ be the (random) envy cycle chosen according to the distribution given in \Cref{lem:cycle_distribution}. First, observe that for any $t \leq v_i(X_i)$, it holds that $\mathbb{P}[v_i(X_i') \geq t] = 1$ since $v_i(X_i') \geq v_i(X_i) \geq t$ by \Cref{lem:bundle_monotonicity}, and hence $\mathbb{P}[v_i(X_i') \geq t] \geq \mathbb{P}[v_i(X_j') \cdot \mathbb 1 [X_j' \neq X_j] \geq t]$.
    
    Let $t > v_i(X_i)$ and $\mathcal{K} = \{ k \in [n] : v_i(X_k) \geq t\}$.
    Note that $v_i(X_i') \geq t$ if and only if $X_i' = X_k$ for some $k \in \mathcal{K}$, i.e., it holds that $(i,k) \in C$. Similarly, 
    $v_i(X_j') \cdot \mathbb 1[X_j' \neq X_j] \geq t$ if and only if $X_j' = X_k$ for some $k \in \mathcal{K} - j$, i.e., it holds that $(j,k) \in C$.
    By the properties of $C$ given in \Cref{lem:cycle_distribution}, it holds that
    \begin{align*}
        \mathbb{P}[v_i(X_i') \geq t] &= \sum_{k \in \mathcal{K}} \mathbb{P}[(i,k) \in C]  \geq \sum_{k \in \mathcal{K}-j} \mathbb{P}[(j,k) \in C] = \mathbb{P}[v_i(X_j') \cdot \mathbb 1[X_j' \neq X_j] \geq t]
    \end{align*}
    which gives the result.
\end{proof}

\subsection{The Main Algorithm}
\label{sec:proof}

{We now present the main algorithm (\Cref{alg:fair_envy_cycles}) that combines the ideas discussed above.}

\begin{algorithm}[h]
\caption{Fair Envy Cycles.}\label{alg:fair_envy_cycles}
\begin{algorithmic}[1]
\vspace{0.1in}
\State\textbf{Input} A set $[m]$ of items, a set $[n]$ of agents, and a profile $(v_i)_{i\in[n]}$ of valuation functions.
\State\textbf{Output} A complete allocation $(X_1, \ldots, X_n)$ that is ex-ante $\frac{1}{2}$-EF, ex-post $\frac{1}{2}$-EFX and EF1.
\vspace{0.2cm}
\Statex -------------------------------------------------- \textbf{First Phase} --------------------------------------------------
\State $Z \gets \textsc{One-Step-PS}(m,n,(v_i)_{i\in[n]})$ \label{line:rr_1}
\State $(X^k,p^k)_{k\in[q]} \gets \textsc{Birkhoff-Rounding}(Z)$
\State $X \gets$ an integral allocation sampled from $(X^k,p^k)_{k\in[q]}$ \label{line:rr_2}
\vspace{0.2cm}
\Statex ------------------------------------------------ \textbf{Second Phase} ------------------------------------------------\While{there is an unallocated item}
\State $x \gets $ {an arbitrary unallocated item} \label{line:ec_1}
\If{there is an unenvied agent}
    \State $i \gets $ {an arbitrary} unenvied agent\label{line:unenvied}
    \State $X_i \gets X_i + x$\label{line:unenvied}
\Else
    \State $G \gets $ {an arbitrary} strongly connected component of 
    \State \hphantom{$G \gets $} the envy graph, {without any} incoming edges\label{line:cycles_1}
    \State $c \gets $ a cycle sampled from the distribution obtained
    \State \hphantom{$c \gets $} from Lemma~\ref{lem:cycle_distribution} applied to $G$
    \State reallocate the bundles along $c$\label{line:cycles_2}
\EndIf\label{line:ec_2}
\EndWhile
\Statex -----------------------------------------------------------------------------------------------------------------------
\State \textbf{return} $(X_1, \ldots, X_n)$
\end{algorithmic}
\end{algorithm}

{To analyze the execution of the algorithm, we first define the execution tree of all the possible random choices that the algorithm makes on a given input.}

\begin{definition}[Execution tree]
    Let $\mathcal{H}^t$ be the collection of all the possible choices of the algorithm up to the $t$-th round, and let $\mathcal{H} = \bigcup_{t \geq 0} \mathcal{H}^t$.
    The set of vertices in the execution tree is $\mathcal{H}$ and any vertex $u \in \mathcal{H}^t$ is connected with an edge  to all the vertices $u' \in \mathcal{H}^{t+1}$ such that the first $t$ rounds in $u'$ are identical to $u$. 
\end{definition}

In particular, note that there is a single element in $\mathcal{H}^0$, and $q$ elements in $\mathcal{H}^1$, one for each allocation in the support of the random outcome of the first phase. Every node $u \in \mathcal{H}^t$ in the execution tree is naturally associated with the probability $\mathbb{P}[u]$ that the algorithm reaches the state $u$, i.e., the realization of the random choices of the algorithm in the first $t$ rounds is exactly as specified by $u$.

We say that a vertex $u \in \mathcal{H}$ is a {\em leaf} if it has no children, i.e., the algorithm terminates given the history of $u$. In particular, every leaf corresponds to a complete allocation. Let $\mathcal{L}$ be the set of all leaves of the execution tree, and let $\mathcal{L}_u$ denote the set of leaves in the subtree rooted at $u$.

For each node $u$, we denote by $(X^u_1, \dots, X^u_n)$ the allocation when the algorithm is in state $u$. For each $j$, we split the bundle $X_{j}^u$ into two parts $Y_{j}^u$ and $g^u$ in the following way which depends on whether $j$ was given an item in line~\ref{line:unenvied} or not. If the last time that $j$'s bundle changes in the history of $u$ is due to an operation in line~\ref{line:unenvied}, i.e., $j$ gets an item because $j$ is an unenvied agent, then we let $g^u$ be the item that was given to $j$ during that operation and $Y_{j}^u = X_{j}^u - g^u$. Otherwise, we let $Y_{j}^u = X_{j}^u$ and $g^u = \bot$. Note that in both cases it holds that $X_{j}^u = Y_j^u + g^u$.

{Let us first combine the observations made during the analysis of the randomization of the first and the second phase of the algorithm into the following useful lemma.}

\begin{lemma}\label{lem:sd_ineq}
    Consider a node $u \in \mathcal H$ of the execution tree, and let $u'$ be the random child of $u$ obtained after performing the next step of the algorithm. {Then,}
    \begin{align*}
        v_i(X_i^{u'}) \SD v_i(Y_{j}^{u'}) \cdot \mathbbm{1}[X_{j}^{u} \neq X_{j}^{u'}].
    \end{align*}
\end{lemma}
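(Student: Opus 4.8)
The plan is to do a case analysis on what the next step of the algorithm does when it moves from node $u$ to its random child $u'$, and in each case reduce the claimed stochastic-dominance inequality to one of the two structural lemmas already proved (\Cref{lem:ps_prefix}/\Cref{lem:ps_strong_sep}-style reasoning for the first phase, and \Cref{lem:envy_cycles_sd} together with \Cref{lem:bundle_monotonicity} for the second phase). There are essentially three cases: (a) $u$ is the root and the step is the first-phase sampling in lines~\ref{line:rr_1}--\ref{line:rr_2}; (b) the step gives an unallocated item to an unenvied agent (line~\ref{line:unenvied}); (c) the step eliminates an envy cycle sampled from \Cref{lem:cycle_distribution} (lines~\ref{line:cycles_1}--\ref{line:cycles_2}). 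Note the step from $u$ to $u'$ is deterministic in cases (b) and only randomized in cases (a) and (c), so in case (b) the "$\SD$" is just a pointwise inequality.

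\textbf{Case (b): item to an unenvied agent.} Here $X_i^{u'} = X_i^{u}$ for every agent $i \neq j^\star$ (where $j^\star$ is the agent who receives the new item $x$), and $X_{j^\star}^{u'} = X_{j^\star}^{u} + x$. The key point is the bookkeeping of the $Y$/$g$ split: if the target agent is $j^\star$, then by definition $g^{u'} = x$ and $Y_{j^\star}^{u'} = X_{j^\star}^{u}$, so $v_i(Y_{j^\star}^{u'}) = v_i(X_{j^\star}^{u})$, and since $j^\star$ is unenvied at $u$ we have $v_i(X_i^{u}) \geq v_i(X_{j^\star}^{u})$, which gives the desired (deterministic, hence $\SD$) inequality; also $X_{j^\star}^{u} \neq X_{j^\star}^{u'}$ so the indicator is $1$ and we genuinely need this. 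If instead $j \neq j^\star$, then $X_j^{u'} = X_j^{u}$, so the indicator $\mathbbm 1[X_j^{u} \neq X_j^{u'}]$ is $0$ and the right-hand side is identically $0$, making the inequality trivial. (A subtlety to double-check: we must make sure $Y_j$ does not change when some \emph{other} agent's bundle changes — which holds because the $Y$/$g$ split of $X_j$ depends only on the history of $j$'s own bundle.)

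\textbf{Case (c): envy-cycle elimination.} Apply \Cref{lem:envy_cycles_sd} with the allocation $(X_1^u,\dots,X_n^u)$ before the step and $(X_1^{u'},\dots,X_n^{u'}) = (X_1',\dots,X_n')$ after it: it gives $v_i(X_i^{u'}) \SD v_i(X_j^{u'})\cdot\mathbbm 1[X_j^{u'}\neq X_j^u]$. It remains to replace $X_j^{u'}$ by $Y_j^{u'}$ on the right. Since a cycle elimination only permutes existing bundles (no item is added), and since $g^{u'}$ and the $Y$/$g$ split are determined by the last line~\ref{line:unenvied} operation in $j$'s history: when $X_j^{u'}\neq X_j^u$, agent $j$'s bundle just changed via a cycle step (not line~\ref{line:unenvied}), so $Y_j^{u'} = X_j^{u'}$ and the two right-hand sides coincide; when $X_j^{u'}=X_j^u$ the indicator kills the term. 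Either way the inequality from \Cref{lem:envy_cycles_sd} is exactly what we want. \textbf{Case (a): first phase.} Here the step is from the root to one of the $q$ children, and $X_i^{u'}, X_j^{u'}$ are the bundles in the sampled integral allocation; \Cref{lem:ps_prefix} gives $v_i(X_i^{u'}) \SD v_i(X_j^{u'})$, and since after the first phase no agent has yet received an item via line~\ref{line:unenvied} we have $Y_j^{u'}=X_j^{u'}$, while $v_i(X_j^{u'}) \geq v_i(X_j^{u'})\cdot\mathbbm 1[\cdot]$ always; chaining these yields the claim.

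\textbf{Main obstacle.} The routine-looking part (the case split) is actually where the care is needed: the statement mixes a stochastic-dominance relation with the pointwise $Y/g$ decomposition, so the crux is verifying in every case that replacing $X_j^{u'}$ by $Y_j^{u'}$ and inserting the indicator $\mathbbm 1[X_j^u\neq X_j^{u'}]$ is always \emph{weakening} the right-hand side relative to what the underlying lemma (\Cref{lem:envy_cycles_sd}, \Cref{lem:ps_prefix}, or monotonicity) already gives — and in particular handling the one case where the indicator is $1$ but $Y_j^{u'}\neq X_j^{u'}$ (the unenvied-agent case (b)), which is precisely why the split is defined the way it is. Once that alignment is checked, no real computation remains.
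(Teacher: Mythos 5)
Your proposal is correct and takes essentially the same route as the paper's proof, which also proceeds by a case analysis on whether the step from $u$ to $u'$ is the first-phase Probabilistic Serial sampling (handled by \Cref{lem:ps_prefix}), a cycle elimination (handled by \Cref{lem:envy_cycles_sd}), or an assignment to an unenvied agent (handled directly, using that either $j$ is the unenvied recipient or $j$'s bundle is unchanged). The paper's proof is much terser; yours carefully fills in the verification that the $Y/g$ bookkeeping lines up with the indicator in each case, which is exactly the right place to be careful.
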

\begin{proof}
If $u$ is the root of the tree, the result follows from \Cref{lem:ps_prefix}. Otherwise, given the history of $u$, the algorithm either performs an envy cycle elimination step, in which case the result follows from \Cref{lem:envy_cycles_sd}, or the algorithm assigns an unallocated item to an unenvied agent, in which case either $j$ is unenvied or $j$'s bundle does not change, and the result follows.
\end{proof}

{The proof of the ex-ante fairness of the main algorithm relies on a new notion of {\em stochastic coverage} which generalizes the notion of stochastic dominance. 
In order to show the desired inequality $2 \cdot \mathbb{E}[v_i(X_i)] \geq \mathbb E[v_i(X_j)]$, we first show that $2 \cdot \mathbb{P}[v_i(X_i) \geq t] \geq \mathbb{P}[v_i(Y_j) \geq t] + \mathbb{P}[v_i(g_j) \geq t]$ which is neatly captured in terms of stochastic coverage as $(v_i(X_i), v_i(X_i)) \SC (v_i(Y_j), v_i(g_j))$. See the following definition.}

\begin{definition}[Stochastic coverage]
Given two collections $U = (x_i)_{i\in [r]}$ and $V = (y_j)_{j\in [s]}$ of non-negative random variables (not necessarily independent) for some $r,s \geq 1$, we say that $U$ stochastically covers $V$, and we write $U \SC V$, if
\begin{align*}
    \sum_{i\in [r]} \mathbb{P}[x_i \geq t] \geq \sum_{j\in [s]} \mathbb{P}[y_j \geq t] \;\;\;\;\;\; \text{for all }t > 0
\end{align*}
\end{definition}

This definition captures the standard notion of stochastic dominance between random variables by taking $r = s = 1$.
Moreover, note that {the definition requires} $t$ {to be} strictly positive, {which means, for instance, that any (even empty) collection of random variables covers any other} collection of random variables that are always equal to $0$.

\begin{remark}
An alternative way to think about stochastic coverage is in terms of a ``credit scheme''. 
{In terms of stochastic coverage,} our goal is to show that $(v_i(X_i), v_i(X_i)) \SC (v_i(Y_j), v_i(g_j))$.
The given random variables are defined on the probability space where every outcome is associated with a leaf in the execution tree. Imagine that initially each leaf $u \in \mathcal{L}$ holds two divisible tokens of mass $\mathbb{P}[u]$ each and that it can use some fraction of these tokens to compensate another leaf $v \in \mathcal{L}$ if either $v_i(X_i^u) \geq v_i(Y_j^v)$ or $v_i(X_i^u) \geq v_i(g_j^v)$. 
The statement $(v_i(X_i), v_i(X_i)) \SC (v_i(Y_i), v_i(g_j))$ is then equivalent to the existence of a credit scheme where each leaf $v \in \mathcal{L}$ receives a total mass of $\mathbb{P}[v]$ tokens to compensate $v_i(Y_j^v)$ and a total mass of $\mathbb{P}[v]$ tokens to compensate $v_i(g_j^v)$. Hence, we essentially show that the leaves where $i$ is assigned a valuable bundle can compensate all the leaves where $j$ is assigned a valuable bundle (from $i$'s perspective).
\end{remark}

The following lemma (whose proof is deferred to \Cref{sec:proofs}) establishes several useful properties of stochastic coverage.

\begin{restatable}{lemma}{lemtech}\label{lem:tech}
Stochastic coverage satisfies the following properties:
\begin{enumerate}[(i)]
    \item transitivity: for all collections $U,V,W$, if $U \SC V$ and $V \SC W$, then $U \SC W$.\label{list:tech_transitivity}
    \item concatenability: for all collections $S,T,U,V$, if $S \SC T$ and $U \SC V$, then $(S,U) \SC (T,V)$.\label{list:tech_concatenability}
    \item disjoint additivity: for all non-negative random variables $x$ and $y$ that are positive on disjoint events, it holds that $(x,y) \SC (x+y)$ and $(x+y) \SC (x,y)$.\label{list:tech_additivity}
\end{enumerate}
\end{restatable}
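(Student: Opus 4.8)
\textbf{Proof plan for \Cref{lem:tech}.}

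The plan is to verify each of the three properties directly from the definition of stochastic coverage, since each boils down to an elementary inequality between sums of tail probabilities that must hold for every fixed $t > 0$. For transitivity, I would fix $t > 0$ and simply chain the two hypotheses: $\sum_i \mathbb{P}[x_i \geq t] \geq \sum_j \mathbb{P}[y_j \geq t] \geq \sum_k \mathbb{P}[z_k \geq t]$, where the first inequality is $U \SC V$ at the point $t$ and the second is $V \SC W$ at the point $t$. Since $t$ was arbitrary, $U \SC W$ follows. This step is immediate and requires no case analysis.

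For concatenability, I would again fix $t > 0$ and add the two hypothesis inequalities: $\sum_{s \in S}\mathbb{P}[s \geq t] + \sum_{u \in U}\mathbb{P}[u \geq t] \geq \sum_{t' \in T}\mathbb{P}[t' \geq t] + \sum_{v \in V}\mathbb{P}[v \geq t]$, and observe that the left-hand side is exactly $\sum \mathbb{P}[w \geq t]$ over the concatenated collection $(S,U)$ and similarly for $(T,V)$ on the right. Again this holds for all $t > 0$, giving $(S,U) \SC (T,V)$.

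The only property requiring a small argument is disjoint additivity. Here I would use that $x$ and $y$ are positive on disjoint events: let $A = \{x > 0\}$ and $B = \{y > 0\}$ with $A \cap B = \emptyset$. For a fixed $t > 0$, the event $\{x + y \geq t\}$ is contained in $A \cup B$ (since if both $x = 0$ and $y = 0$ then $x + y = 0 < t$), and on $A$ we have $x + y = x$ while on $B$ we have $x + y = y$ (using non-negativity and disjointness of the supports). Hence $\mathbb{P}[x + y \geq t] = \mathbb{P}[\{x \geq t\} \cap A] + \mathbb{P}[\{y \geq t\} \cap B] = \mathbb{P}[x \geq t] + \mathbb{P}[y \geq t]$, where the last equality uses that $\{x \geq t\} \subseteq A$ and $\{y \geq t\} \subseteq B$ because $t > 0$. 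This gives the two inequalities $(x,y) \SC (x+y)$ and $(x+y) \SC (x,y)$ simultaneously, in fact with equality for every $t > 0$.

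I do not expect any serious obstacle here; the content is entirely in getting the disjoint-support bookkeeping right and in being careful that the definition only quantifies over strictly positive $t$ (which is exactly what makes the event $\{x+y \geq t\}$ disjoint from the region where both variables vanish). The main thing to state cleanly is the reduction of $\{x+y \ge t\}$ to $(\{x \ge t\} \cap A) \cup (\{y \ge t\} \cap B)$ under the disjointness assumption.
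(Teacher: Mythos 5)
Your proposal is correct and follows essentially the same route as the paper: transitivity and concatenability are immediate from chaining and adding tail-probability inequalities pointwise in $t$, and disjoint additivity comes from partitioning the event $\{x+y\geq t\}$ (for $t>0$) across the disjoint supports of $x$ and $y$ to obtain $\mathbb{P}[x+y\geq t]=\mathbb{P}[x\geq t]+\mathbb{P}[y\geq t]$. The paper states (i) and (ii) as immediate and compresses (iii) into a single displayed line; you unpack the same bookkeeping in more detail, but the argument is identical.
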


We also establish a way to extend stochastic coverage if the covering collection contains strictly more random variables than the covered one, which we use in the form of the following lemma. The proof is deferred to \Cref{sec:proofs}.

\begin{restatable}[Extendability]{lemma}{lemextendability}\label{lem:extendability}
    For all non-negative random variables $x,y,z$, if
    \begin{enumerate}[(i)]
        \item it holds that  $(x,x) \SC y$,
        \item there is an event $S$ such that $x$, $y$ and $z$ are non-zero only if $S$ holds, and
        \item there is some constant threshold $\delta$ such that whenever $S$ holds, it also holds that $x \geq \delta \geq z$,
    \end{enumerate}
    then it holds that $(x,x) \SC (y,z)$.
\end{restatable}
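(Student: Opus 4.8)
\textbf{Proof plan for Lemma~\ref{lem:extendability}.}

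The plan is to reduce the claim to checking a single inequality between sums of tail probabilities, at an arbitrary threshold $t>0$, and then to split into the two cases $t \le \delta$ and $t > \delta$. Fix $t>0$. We must show
\[
  2\cdot\mathbb{P}[x \geq t] \;\geq\; \mathbb{P}[y \geq t] + \mathbb{P}[z \geq t].
\]
First I would record the consequences of the three hypotheses. From (i) applied at threshold $t$ we already have $2\cdot\mathbb{P}[x\geq t] \geq \mathbb{P}[y\geq t]$. From (ii), the events $\{y\geq t\}$ and $\{z\geq t\}$ are both contained in $S$ (since $t>0$), and similarly $\{x \geq t\}\subseteq S$. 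From (iii), on the event $S$ we have $x \geq \delta$; moreover on $S$ we have $z \leq \delta$, so $\{z \geq t\}\cap S \subseteq \{z = \delta\}\cap\{t \leq \delta\}$ — more usefully, $\{z\geq t\}$ is nonempty (as a positive-probability event) only when $t \leq \delta$, and in that case $\{z\geq t\}\subseteq S \subseteq \{x\geq\delta\}\subseteq\{x\geq t\}$.

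Now the case split. If $t > \delta$: then $\mathbb{P}[z\geq t]=0$ because every outcome with $z\geq t$ lies in $S$, and on $S$ we have $z\leq\delta<t$. Hence the right-hand side is just $\mathbb{P}[y\geq t]$, and the inequality $2\mathbb{P}[x\geq t]\geq\mathbb{P}[y\geq t]$ is exactly hypothesis (i) at threshold $t$. If $t \leq \delta$: then as noted $\{z\geq t\}\subseteq\{x\geq t\}$, so $\mathbb{P}[z\geq t]\leq\mathbb{P}[x\geq t]$; combining with the trivial bound $\mathbb{P}[y\geq t]\leq\mathbb{P}[x\geq t]$ — which holds because $\{y\geq t\}\subseteq S\subseteq\{x\geq\delta\}\subseteq\{x\geq t\}$ — we get $\mathbb{P}[y\geq t]+\mathbb{P}[z\geq t]\leq 2\mathbb{P}[x\geq t]$, as desired. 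Since $t>0$ was arbitrary, this establishes $(x,x)\SC(y,z)$.

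The only subtlety — and the one place to be careful rather than the "hard part" — is the handling of threshold $0$ versus small positive thresholds: the definition of $\SC$ only quantifies over $t>0$, which is what lets us ignore any mass that $y$ or $z$ place exactly at $0$ and what makes the inclusion $\{y\ge t\}\subseteq S$ legitimate (hypothesis (ii) controls the events where $y,z$ are \emph{non-zero}). I would state explicitly at the start that it suffices to verify the tail inequality for $t>0$, and keep track that in the case $t\le\delta$ we are using $\delta>0$, which follows since otherwise hypothesis (iii) would force $x\le 0$ on $S$ and all three variables would be identically zero, making the statement vacuous. No delicate estimate is needed; the entire argument is the two-line case analysis above together with the monotone-inclusion observations coming from (ii) and (iii).
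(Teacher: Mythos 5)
Your proof is correct and follows essentially the same two-case split on $t \le \delta$ versus $t > \delta$ as the paper; in the first case the paper writes the bound more compactly via $\{x \ge t\} = S$ and $2\mathbb{P}[S] \ge \mathbb{P}[y>0]+\mathbb{P}[z>0]$, while you bound $\mathbb{P}[y\ge t]$ and $\mathbb{P}[z\ge t]$ each by $\mathbb{P}[x\ge t]$ separately, but these are the same idea. One small correction to your closing remark: you do not need $\delta>0$, nor does (iii) force $x\le 0$ on $S$ when $\delta=0$ (it gives $x\ge 0$, which is vacuous); rather, if $\delta=0$ the range $0<t\le\delta$ is empty and the case simply does not arise, so the argument goes through unchanged.
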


Next, we prove the following lemma which is the main ingredient in the proof of \Cref{thm:main_theorem}. The proof proceeds by induction on the execution tree, as illustrated by \Cref{fig:tree}.
 
\begin{lemma}\label{lem:sc} Letting $w \in \mathcal L$ denote the random leaf reached by the algorithm, we have 
$$
(v_i(X_i^w), v_i(X_i^w)) \SC (v_i(Y_j^w), v_i(g_j^w)).
$$
\end{lemma}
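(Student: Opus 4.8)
\textbf{Proof plan for Lemma~\ref{lem:sc}.}
The plan is to prove the statement by induction on the structure of the execution tree, starting from the leaves and working up to the root, establishing at each node $u$ the statement $(v_i(X_i^w),v_i(X_i^w)) \SC (v_i(Y_j^w),v_i(g_j^w))$ where $w$ is the random leaf reached from $u$ (so the lemma is the case $u = $ root). The base case is when $u$ is itself a leaf: then $X_i^u = X_i^w$ and $X_j^u = Y_j^w + g_j^w$ deterministically, and by \Cref{lem:ex_post} the final allocation is $\frac12$-EFX, so $2 v_i(X_i^u) \geq v_i(X_j^u - g)$ for every $g \in X_j^u$; combined with subadditivity $v_i(X_j^u) \leq v_i(X_j^u - g_j^w) + v_i(g_j^w)$ (taking $g = g_j^w$, or handling the $g_j^w = \bot$ case directly via $v_i(Y_j^u) = v_i(X_j^u) \leq 2 v_i(X_i^u)$ using $\frac12$-EF1) this should give $(v_i(X_i^u),v_i(X_i^u)) \SC (v_i(Y_j^u),v_i(g_j^u))$ after invoking the disjoint-additivity part of \Cref{lem:tech} to split $v_i(X_j^u)$; the fact that $Y_j^u$ and $g_j^u$ are supported on disjoint item sets is what makes this legitimate.

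For the inductive step at an internal node $u$ with random child $u'$, I would condition on $u'$ and apply the induction hypothesis inside each subtree $\mathcal L_{u'}$, then aggregate. The key structural facts to exploit are: (a) once an agent receives an item in line~\ref{line:unenvied}, the item $g_j$ is fixed and never changes (it only changes if the bundle is later reshuffled, but then $g_j$ gets redefined), so along any root-to-leaf path the pair $(Y_j, g_j)$ evolves in a controlled way; (b) \Cref{lem:sd_ineq} gives $v_i(X_i^{u'}) \SD v_i(Y_j^{u'}) \cdot \mathbbm 1[X_j^u \neq X_j^{u'}]$, i.e., whenever $j$'s bundle changes at this step (including when $j$ is the unenvied recipient, in which case $Y_j^{u'}$ is the "old" part), one copy of $X_i^{u'}$ already dominates the newly relevant $Y_j$ part; and (c) by \Cref{lem:bundle_monotonicity}, $v_i(X_i)$ is monotone nondecreasing along the execution, so $v_i(X_i^{u})$ at any ancestor lower-bounds $v_i(X_i^w)$ at the descendant leaf. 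The aggregation over children $u'$ of $u$ is legitimate because stochastic coverage is defined pointwise in $t$ and probabilities of disjoint subtrees add; concretely, $\mathbb P[\,\cdot \geq t]$ at $u$ is the $\mathbb P[u']$-weighted sum over children of the conditional quantities, so the inductive inequalities combine linearly.

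The main obstacle — and the place where the extendability lemma enters — is the case where the step at $u$ assigns an unallocated item $x$ to an unenvied agent $j$, so that $X_j^{u'} = X_j^u + x$ and we must account for a genuinely new item $x$ that did not exist in $j$'s bundle before. Here the two copies of $v_i(X_i^w)$ must cover both the (inductively handled) old part and the new item $v_i(x)$. The plan is: the induction hypothesis at $u'$ gives $(v_i(X_i^w),v_i(X_i^w)) \SC (v_i(Y_j^w), v_i(g_j^w))$ where now $g_j^w = x$ (since this line~\ref{line:unenvied} operation is, so far, the last change to $j$'s bundle — though later reshuffles may overwrite this, in which case a different argument applies and $x$ is absorbed into some other agent's analysis). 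To bound $v_i(x)$: since the first phase produced a strongly separated allocation (\Cref{lem:ps_strong_sep}) and the unallocated items are exactly the items left unconsumed after one step of Probabilistic Serial, every agent $i$ values her first-phase item at least $v_i(x)$; by \Cref{lem:bundle_monotonicity} this persists, so $v_i(X_i^{u'}) \geq v_i(x)$ always on the event that this branch is taken, and $x = g_j^w$ only on a sub-event of that. Taking the threshold $\delta = v_i(x)$ and the event $S$ to be "this branch of the execution is realized", \Cref{lem:extendability} then upgrades $(v_i(X_i),v_i(X_i)) \SC v_i(Y_j)$ (from the inductive argument with the new item ignored) to $(v_i(X_i),v_i(X_i)) \SC (v_i(Y_j),v_i(x))$. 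Stitching these cases together via transitivity and concatenability (\Cref{lem:tech}) over the children of $u$ completes the step; the bookkeeping of which path-segment "owns" the item $x$ when $j$'s bundle is later reshuffled away is the delicate part that needs care.
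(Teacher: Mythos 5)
Your high-level plan (induction on the execution tree, invoking Lemmas~\ref{lem:sd_ineq}, \ref{lem:bundle_monotonicity}, \ref{lem:ps_strong_sep}, and \ref{lem:extendability}) has the right ingredients, but the induction invariant you choose is too strong and your base case does not hold. You propose to establish at every node $u$ the full statement $(v_i(X_i^w),v_i(X_i^w)) \SC (v_i(Y_j^w),v_i(g_j^w))$ restricted to $\mathcal L_u$. At a leaf $u=w$ everything is deterministic, and for constants the coverage $(a,a) \SC (b,c)$ reduces pointwise (for $t$ just below $\max(b,c)$) to $a \geq \max(b,c)$, not to $2a \geq b+c$. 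So your base case would require $v_i(X_i^w) \geq v_i(Y_j^w)$, which $\tfrac12$-EFX does not give you --- it only gives $v_i(X_i^w) \geq \tfrac12 v_i(Y_j^w)$. You are implicitly treating $(X,X)\SC Y$ as equivalent to the pointwise inequality $2X \geq Y$, but stochastic coverage is a cumulative-distribution condition, and the ``factor of 2'' in the final theorem is not bought by doubling a constant: it is bought by the \emph{randomness}, i.e., by the fact that the envied bundle is allocated to $j$ only on a fraction of the execution tree while the other fraction is available as cover. Conditioning on a subtree (let alone a leaf) can genuinely destroy the coverage, so the ``lemma statement conditional on $\mathcal L_u$'' is simply false at many nodes even though it holds at the root.

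The paper's proof sidesteps this with a more careful invariant: it shows $(x_u,x_u)\SC(y_u,z_u)$ where $x_u=\mathbb 1[w\in\mathcal L_u]\,v_i(X_i^w)$, $z_u=\mathbb 1[w\in\mathcal L_u]\,v_i(X_i^u)$, and $y_u=\mathbb 1[w\in\mathcal L_u]\,v_i(Y_j^w)\cdot\mathbb 1[X_j^u\neq X_j^w]$. At a leaf $y_u=0$ and $x_u=z_u$, so the base case is trivial; no ex-post guarantee is needed there. The term $z_u$ acts as a shrinking credit (it equals $i$'s \emph{current} bundle value, which by monotonicity is smaller higher in the tree and is $0$ at the root), and the indicator $\mathbb 1[X_j^u\neq X_j^w]$ in $y_u$ means the inductive step only has to pay for the leaves where $j$'s bundle changed just below $u$; that payment is funded by $z_{u_1}+\dots+z_{u_k}$ via Lemma~\ref{lem:sd_ineq} and then re-established for $u$ by extendability. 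The missing idea in your plan is precisely this choice of $(y_u,z_u)$: you need an invariant that does not ask for leaf-level coverage, and the $z_u$ credit together with the indicator in $y_u$ is what makes the induction go through.
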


\begin{figure}[h!]
    \centering
    \begin{tikzpicture}[thick, scale=1.5]
    \fill[white!80!Blue] plot [smooth cycle,tension=1] coordinates {(-.5,0) (2.5,-2.5)  (2.5,2.5)};
    \fill[white!80!Blue] plot [smooth cycle,tension=1.5] coordinates {(3.5,2) (5,1.5)  (5,2.5)};
    \fill[white!80!Blue] plot [smooth cycle,tension=1] coordinates {(3.5,-1) (5.2,-.1)  (5.2,-1.9)};
    \begin{scope}[Blue]
    \node (PS) at (0.5,3) {Probabilistic Serial};
    \node (unenvied) at (3,3) {Item given to $j$};
    \node (cycle) at (4.4,.75) {Cycle elimination};
    \draw[->](cycle.-90) to[bend right=20] (4.2,-.2);
    \draw[->](unenvied.0) to[bend left=40] (4.5,2.6);
    \draw[->](PS.-90) to[bend right=20] (.7,2.2);
    \end{scope}
    {\footnotesize
    \node[draw,circle] (root) at (0,0) {root};
    \node[draw,circle] (r1) at (2,2) {~~~};
    \node[draw,circle] (r2) at (2,1) {~~~};
    \node[draw,circle] (r3) at (2,0) {~~~};
    \node[draw,circle] (r4) at (2,-1) {~~~};
    \node[draw,circle] (r5) at (2,-2) {~~~};
    \node[draw,circle] (v) at (4,2) {~~~};
    \node[draw,circle] (v1) at (5,2) {~~~};
    \node[draw,circle] (u) at (4,-1) {~~~};
    \node[draw,circle] (u1) at (5,-.4) {~~~};
    \node[draw,circle] (u2) at (5,-1) {~~~};
    \node[draw,circle] (u3) at (5,-1.6) {~~~};
    }
    \begin{scope}[densely dotted,rounded corners=.2cm]
    \draw (r1.90) -- (5.6,3) -- (6,3) -- (6,1) -- (5.6,1) -- (r1.-90);
    \draw (r4.90) -- (5.6,.4) -- (6,.4) -- (6,-2.4) -- (5.6,-2.4) -- (r4.-90);
    \end{scope}
    \draw (root) -- (r1);
    \draw (root) -- (r2);
    \draw (root) -- (r3);
    \draw (root) -- (r4);
    \draw (root) -- (r5);
    \draw (v) -- (v1);
    \draw[dashed] (u) -- (u1);
    \draw (u) -- (u2);
    \draw (u) -- (u3);
    \foreach \y in {-2.7,-2.5,...,2.9} {\draw (5.8,\y) circle (.05cm);}
    \foreach \y in {2.3,2.5} {\draw[dashed] (v1) -- (5.7,\y);}
    \foreach \y in {1.5,1.7,1.9,2.1} {\draw (v1) -- (5.7,\y);}
    \foreach \y in {-0.3,-0.1} {\draw[dashed] (u1) -- (5.7,\y);}
    \foreach \y in {-0.5} {\draw (u1) -- (5.7,\y);}
    \foreach \y in {-0.9,-0.7} {\draw[dashed] (u2) -- (5.7,\y);}
    \foreach \y in {-1.3,-1.1} {\draw (u2) -- (5.7,\y);}
    \foreach \y in {-1.5} {\draw[dashed] (u3) -- (5.7,\y);}
    \foreach \y in {-1.9,-1.7} {\draw (u3) -- (5.7,\y);}
    \node at (0,-3) {$\mathcal H_0$};
    \node at (2,-3) {$\mathcal H_1$};
    \node at (5.8,-3) {$\mathcal L$};
    \node[anchor=north west] at (6.2,2) {
    };
    \begin{scope}[xshift=4cm,yshift=2.1cm]
        \fill[white!80!Blue] plot [smooth cycle,tension=1] coordinates {(2.5,0) (4.2,-1)  (4.2,1)};
        \node[Blue] at (3.5,1.25) {Probabilistic Serial};
        \footnotesize
        \node[draw,circle] (rcopy) at (3,0) {$u$};
        \tiny
        \node[draw,circle] (r1copy) at (4,.8) {$u_1$};
        \node[draw,circle] (r2copy) at (4,.4) {$u_2$};
        \node[draw,circle] (r3copy) at (4,0) {$u_3$};
        \node[draw,circle,minimum width=.55cm] (r4copy) at (4,-.4) {~~~};
        \node at (4,-.32) {$\vdots$};
        \node[draw,circle] (r5copy) at (4,-.8) {$u_q$};
        \draw (rcopy) -- (r1copy);
        \draw (rcopy) -- (r2copy);
        \draw (rcopy) -- (r3copy);
        \draw (rcopy) -- (r4copy);
        \draw (rcopy) -- (r5copy);
        \foreach \n/\y in {r1copy/0.9,r2copy/0.5,r2copy/0.3,r3copy/0.1,r3copy/-.1,r3copy/-.3,r4copy/-.5,r5copy/-.9}
        {
        \fill[ForestGreen!50!white] (5.6,\y-.1) rectangle (6,\y+.1);
        \draw[] (\n) -- (5.7,\y);
        }
        \foreach \n/\y in {r1copy/1.1,r2copy/0.7,r4copy/-.7,r5copy/-1.1}
        {
        \fill[Goldenrod](5.6,\y-.1) rectangle (6,\y+.1);
        \draw[dashed] (\n) -- (5.7,\y);
        }
        \foreach \y in {-1.1,-.9,...,1.2} {\draw (5.8,\y) circle (.05cm);}
    \end{scope}
    \begin{scope}[xshift=4cm,yshift=-2cm]
        \node[Blue] at (3.5,2.75) {Item given to $j$};
        \footnotesize
        \fill[white!80!Blue] plot [smooth cycle,tension=1.5] coordinates {(2.5,2) (4,1.5)  (4,2.5)};
        \fill[Goldenrod] (5.6,2.2) rectangle (6,2.6);
        \fill[ForestGreen!50!white] (5.6,1.4) rectangle (6,2.2);
        \foreach \y in {1.5,1.7,...,2.3,2.5} {\draw (5.8,\y) circle (.05cm);}
        \node[draw,circle] (vcopy) at (3,2) {$u$};
        \node[draw,circle] (v1copy) at (4,2) {$u_1$};
        \draw (vcopy) -- (v1copy);
        \foreach \y in {2.3,2.5} {\draw[dashed] (v1copy) -- (5.7,\y);}
        \foreach \y in {1.5,1.7,1.9,2.1} {\draw (v1copy) -- (5.7,\y);}
    \end{scope}
    \begin{scope}[xshift=4cm,yshift=-1cm]
        \node[Blue] at (3.5,0.15) {Cycle elimination};
        \footnotesize
        \fill[white!80!Blue] plot [smooth cycle,tension=1] coordinates {(2.5,-1) (4.2,-.1)  (4.2,-1.9)};
        \fill[BrickRed!50!white] (5.6,0) rectangle (6,-.4);
        \fill[ForestGreen!50!white] (5.6,-.4) rectangle (6,-.6);
        \fill[Goldenrod] (5.6,-.6) rectangle (6,-1);
        \fill[ForestGreen!50!white] (5.6,-1) rectangle (6,-1.4);
        \fill[Goldenrod] (5.6,-1.4) rectangle (6,-1.6);
        \fill[ForestGreen!50!white] (5.6,-1.6) rectangle (6,-2);
        \foreach \y in {-1.9,-1.7,...,-.1} {\draw (5.8,\y) circle (.05cm);}
        \node[draw,circle] (ucopy) at (3,-1) {$u$};
        \node[draw,circle] (u1copy) at (4,-.4) {$u_1$};
        \node[draw,circle,minimum width=.7cm] (u2copy) at (4,-1) {};
        \node at (4,-.94) {$\vdots$};
        \node[draw,circle] (u3copy) at (4,-1.6) {$u_k$};
        \draw[dashed] (ucopy) -- (u1copy);
        \draw (ucopy) -- (u2copy);
        \draw (ucopy) -- (u3copy);
        \foreach \y in {-0.3,-0.1} {\draw[dashed] (u1copy) -- (5.7,\y);}
        \foreach \y in {-0.5} {\draw (u1copy) -- (5.7,\y);}
        \foreach \y in {-0.9,-0.7} {\draw[dashed] (u2copy) -- (5.7,\y);}
        \foreach \y in {-1.3,-1.1} {\draw (u2copy) -- (5.7,\y);}
        \foreach \y in {-1.5} {\draw[dashed] (u3copy) -- (5.7,\y);}
        \foreach \y in {-1.9,-1.7} {\draw (u3copy) -- (5.7,\y);}
    \end{scope}
\end{tikzpicture}
    \caption{Illustration of the entire execution tree (left) and its subtrees (right). Paths on which $j$'s bundle changes are represented by plain edges, and paths on which $j$'s bundle does not change are represented by dashed edges. 
    The first step is probabilistic serial (top right) and subsequent operations include either assigning an item to an unenvied agent (middle right), or eliminating an envy cycle (bottom right).
    Let $u$ be some node, and let $u_1, \ldots, u_k$ be its children.
    Consider a leaf $w$ in the subtree rooted at $u_t$. The leaf $w$ is colored {\em red} if $X_j^w=X_j^u$, i.e., $j$'s bundle does not change after reaching $u$.
    It is colored {\em yellow} if $X_j^w=X_j^{u_t}$ and $X_j^{u_t} \neq X_j^{u}$, i.e., $j$'s bundle changes only during the first iteration after reaching $u$. 
    Finally, it is colored {\em green} if $X_j^w \neq X_j^{u_t}$, i.e., $j$'s bundle changes during one of the subsequent iterations.
    Note that a leaf's color is always with respect to some node $u$, and it can thus be colored differently depending on the chosen node $u$.}
    \label{fig:tree}
\end{figure}
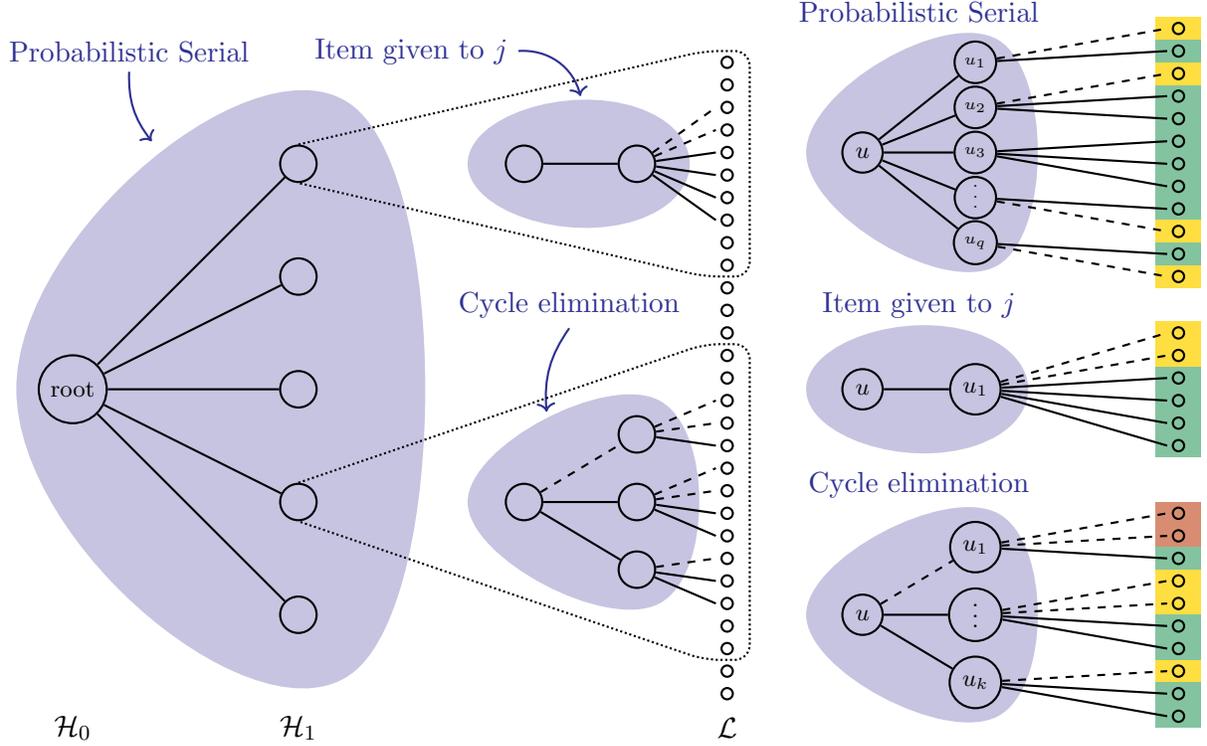

\begin{proof}
    Let $w \in \mathcal{L}$ be the random leaf reached by a random execution of the algorithm. For any node $u \in \mathcal{H}$, we define the following random variables
    \begin{align*}
        x_u &= \mathbb 1[w\in \mathcal L_u]\cdot v_i(X_i^w), \quad
        z_u &= \mathbb 1[w\in \mathcal L_u]\cdot v_i(X_i^u), \quad
        y_u &= \mathbb 1[w\in \mathcal L_u]\cdot v_i(Y_j^w) \cdot \mathbb 1[X_j^u \neq X_j^w].
    \end{align*}
    Note that $y_u$ includes $v_i(Y_j)$ from all the yellow and green leaves in \Cref{fig:tree} but not from the red ones. 
    Observe that it always holds that $x_u \geq z_u$ by \Cref{lem:bundle_monotonicity}.
    
    First, we use induction on the execution tree to show that for every node $u$ it holds that
    $$
    (x_u, x_u) \SC (y_u, z_u).\label{eq:induction}
    $$
    This statement holds for all the leaves $u \in \mathcal{L}$ since $x_u = z_u$ and $y_u = 0$. 
    
    Consider a non-leaf node $u$ and its children $u_1, \dots, u_k$. By the inductive assumption, it holds that $(x_{u_t}, x_{u_t}) \SC (y_{u_t}, z_{u_t})$ for all $1 \leq t \leq k$. This means that $v_i(Y_j)$ from the green leaves in \Cref{fig:tree} is covered by our induction hypothesis. Observe that
    \begin{align*}
      (x_u, x_u) 
      &= (x_{u_1} + \dots + x_{u_k}, x_{u_1} + \dots + x_{u_k}) && (\text{since $x_u = x_{u_1} + \ldots + x_{u_k}$}) \\
      &\SC (x_{u_1}, \ldots, x_{u_k}, x_{u_1}, \ldots, x_{u_k}) && (\text{by disjoint additivity}) \\
      &\SC (y_{u_1}, \ldots, y_{u_k}, z_{u_1}, \ldots, z_{u_k}) && (\text{by concatenability}) \\
        &\SC  (y_{u_1} + \dots + y_{u_k}, z_{u_1} + \dots + z_{u_k}) && (\text{by disjoint additivity})
    \intertext{Next, for each child $u_t$ of $u$, define}
        \mu_{u_t} &= \mathbb 1[w\in \mathcal L_{u_t}] \cdot v_i(Y_j^{u_t}) \cdot \mathbb 1[X_j^u \neq X_j^{u_t} = X_j^w] && \text{for } 1 \leq t \leq k
    \intertext{so that $y_u = y_{u_1} + \ldots + y_{u_k} + \mu_{u_1} + \ldots + \mu_{u_k}$. The variables $\mu_{u_t}$ include $v_i(Y_j)$ from the yellow leaves in \Cref{fig:tree}, which need to be covered at this induction step.
    Now, observe that}
    z_{u_1} + \dots + z_{u_k} 
    &= \sum_{t=1}^k \mathbb 1[w\in \mathcal L_{u_t}] \cdot v_i(X_i^{u_t}) && (\text{by the definition of } z_{u_t}) \\
    &\SD \sum_{t=1}^{k} \mathbb 1[w\in \mathcal L_{u_t}] \cdot v_i(Y_j^{u_t}) \cdot \mathbb 1[X_j^{u} \neq X_j^{u_t}] && (\text{by \Cref{lem:sd_ineq}}) \\
    &\geq \sum_{t=1}^{k} \mathbb 1[w\in \mathcal L_{u_t}] \cdot v_i(Y_j^{u_t}) \cdot \mathbb 1[X_j^{u} \neq X_j^{u_t} = X_j^w] \\
    &= \mu_{u_1} + \ldots + \mu_{u_k} && (\text{by the definition of } \mu_t) 
    \intertext{Combining the statements above gives}
        (x_u,x_u) 
        &\SC (y_{u_1} + \ldots + y_{u_k}, z_{u_1} + \ldots + z_{u_k}) \\
        &\SC (y_{u_1} + \ldots + y_{u_k}, \mu_{u_1} + \ldots + \mu_{u_k}) \\
        &\SC y_{u_1} + \ldots + y_{u_k} + \mu_{u_1} + \ldots + \mu_{u_k} && (\text{by disjoint additivity}) \\
        &= y_u
    \end{align*}
    so that $v_i(Y_j)$ from all the green and yellow leaves is covered.
    To complete the inductive proof, we use the extendability of stochastic coverage (\Cref{lem:extendability}) with the event $S = \{w \in \mathcal{L}_u \}$ and the threshold
    $\delta = v_i(X_i^u)$ to get
    \begin{align*}
        (x_u, x_u) \SC (y_u, z_u)
    \end{align*}
    which completes the induction step.
    
    Now to prove the statement of the lemma, it remains to account for $v_i(g_j^w)$. By disjoint additivity, we get
    \begin{align*}
        (x_\text{root}, x_\text{root}) \SC (y_\text{root}, z_\text{root}) \SC y_\text{root}
    \end{align*}
    since $z_\text{root} = 0$. Note that by strong separation (\Cref{lem:ps_strong_sep}), it holds that $x_\text{root} \geq \max_{q \in \mathcal{L}_\text{root}} v_i(g_j^q) \geq v_i(g_j^w)$. Hence, we use extendability (\Cref{lem:extendability}) once again, now with the full event $S = \{w \in \mathcal{L}_\text{root}\}$ and the threshold $\delta = \max_{q \in \mathcal{L}_\text{root}} v_i(g_j^q)$, to get
    \begin{align*}
    (x_\text{root}, x_\text{root}) 
            &\SC (y_\text{root}, v_i(g_j^w)) 
    \end{align*}
    which proves the lemma.
\end{proof}

We are now ready to prove the main theorem.

\begin{proof}[Proof of \Cref{thm:main_theorem}]
The allocation of \Cref{alg:fair_envy_cycles} is ex-post $\frac{1}{2}$-EFX and ex-post EF1 by \Cref{lem:ex_post}.
To show that the outcome is ex-ante $\frac{1}{2}$-EF, observe that
\begin{align*}
    2 \cdot \mathbb{E}[v_i(X_i)] &= \int_{t \geq 0} \mathbb{P} [v_i(X_i) \geq t ] + \mathbb{P} [v_i(X_i) \geq t ] \;\text{d}t && (\text{by the properties of expectation})\\
         &\geq  \int_{t \geq 0} \mathbb{P} [v_i(Y_i) \geq t ] + \mathbb{P}[v_i(g_j) \geq t] \;\text{d}t && (\text{by \Cref{lem:sc})}\\
         &= \mathbb{E}[v_i(Y_j)] + \mathbb{E}[v_i(g_j)] && (\text{by the properties of expectation}) \\
         &= \mathbb{E}[v_i(Y_j) + v_i(g_j)] && (\text{by additivity of expectation}) \\
         &\geq \mathbb{E}[v_i(X_j)] && (\text{by subadditivity of }v_i)
\end{align*}  
which is the desired inequality.

Finally, we prove that \Cref{alg:fair_envy_cycles} runs in polynomial time. First, recall that probabilistic serial can be implemented in polynomial time \cite{BM01}, and that the resulting fractional allocation can be rounded in polynomial time (\Cref{thm:bvn}). Second, each iteration of our randomized algorithm can be carried out in polynomial time by \Cref{lem:cycle_distribution}. Finally, the deterministic envy cycles procedure terminates after polynomially many steps (\Cref{lem:cycles-polytime}), and since we only randomize choices that are made arbitrarily in this deterministic procedure, the polynomial running time applies to every outcome of our randomized algorithm.
\end{proof}

\section{The Case of Two Agents}
\label{sec:two-agents}

In this section, we provide tradeoffs between ex-post EFX and ex-ante EF for the case of two agents. 
Our results are demonstrated in \Cref{fig:tradeoffs} (left).

In our algorithms and analysis we use the notion of a {\em $\beta$-EFX partition}, defined as follows.

\begin{definition}[$\beta$-EFX partition]
A partition of the items $X = (X_1, X_2)$ is a $\beta$-EFX partition for agent $i$ if $v_i(X_1) \geq \beta \cdot v_i(X_2 - g)$ for all $g \in X_2$ and $v_i(X_2) \geq \beta \cdot v_i(X_1 - g)$ for all $g \in X_1$.
\end{definition}

Clearly, given a $\beta$-EFX partition $(X_1,X_2)$ for agent $i$, we can allocate either $X_1$ or $X_2$ to agent $i$ and preserve $\beta$-EFX with respect to agent $i$. 
{We now provide the following construction of a $\frac{1}{2}$-EFX partition for subadditive agents that  we use to obtain improved tradeoffs for the case of two agents (\Cref{prop:subadditive-2-agents-ef}). In \Cref{sec:subadd_appendix} we show that this lemma is tight.}

\begin{lemma}[Partition for subadditive]\label{lem:2-agent-subadd}
Assume that the valuations are subadditive. For any agent $i \in \{1,2\}$, let $X = (X_1, X_2)$ be a partition of items with $v_i(X_1) \geq v_i(X_2)$ that minimizes $|v_i(X_1) - v_i(X_2)|$, and, subject to this, minimizes $|X_1|$. Then, $X$ is a $\frac{1}{2}$-EFX partition for agent~$i$.
Moreover, there exists an instance with submodular (and hence subadditive) valuations such that this partition is not a $\beta$-EFX partition for agent~$i$ for any $\beta > 1/2$.
\end{lemma}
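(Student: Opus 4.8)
The plan is to prove the two halves separately. For the positive direction, fix agent $i$ and let $X = (X_1,X_2)$ be a partition with $v_i(X_1) \geq v_i(X_2)$ minimizing $|v_i(X_1) - v_i(X_2)|$, and subject to that, minimizing $|X_1|$. I need to verify the $\frac{1}{2}$-EFX conditions in both directions. The ``easy'' direction is $v_i(X_2) \geq \frac12 v_i(X_1 - g)$ for all $g \in X_1$: since $v_i(X_1-g)\le v_i(X_1)$ it suffices to argue $2v_i(X_2)\geq v_i(X_1)$. Suppose not, i.e.\ $v_i(X_1) > 2 v_i(X_2)$. Then moving \emph{any single item} $g$ from $X_1$ to $X_2$ should not increase the gap; I would show that the new gap $|v_i(X_1-g) - v_i(X_2+g)|$ is strictly smaller than $v_i(X_1)-v_i(X_2)$ unless $X_1$ is a singleton, using subadditivity to bound $v_i(X_2+g)\le v_i(X_2)+v_i(g)$ and monotonicity for $v_i(X_1-g)$. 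If $X_1=\{g\}$ is a singleton, then $X_1-g=\emptyset$ and the inequality $v_i(X_2)\ge \frac12 v_i(\emptyset)=0$ is trivial. So the real content is the other direction.

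For the harder direction I must show $v_i(X_1) \geq \frac12 v_i(X_2 - g)$ for every $g \in X_2$. Since $v_i(X_1)\ge v_i(X_2)\ge \frac12 v_i(X_2) \ge \frac12 v_i(X_2-g)$ by monotonicity, this direction is in fact immediate — so the only nontrivial work is the direction bounding the \emph{larger-valued} side. I expect the genuinely delicate case to be: $v_i(X_1-g)$ could exceed $2 v_i(X_2)$ for some $g\in X_1$ even when $2v_i(X_2)\ge v_i(X_1)$ fails to hold after removing the wrong item — wait, no: if $2v_i(X_2)\ge v_i(X_1)$ then $2v_i(X_2)\ge v_i(X_1)\ge v_i(X_1-g)$ and we are done. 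Hence everything reduces to ruling out $v_i(X_1) > 2v_i(X_2)$, which is the swap argument above; the $|X_1|$ tie-break is what forces the singleton corner case to be handled cleanly. I would organize the proof as: (1) reduce $\frac12$-EFX to the single inequality $v_i(X_1)\le 2v_i(X_2)$ together with the singleton case; (2) run the exchange argument, splitting on whether $|X_1|=1$.

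For the tightness claim, I would exhibit a submodular instance where the optimal balanced partition is only a $\frac12$-EFX partition. A natural candidate: one ``big'' item of value $1$ and two (or more) ``small'' items, with a submodular valuation that is additive on small items but has the big item strongly substitutable with the union of the smalls, e.g.\ $v(S) = \max\{[\text{big}\in S], |S\cap \text{small}|/k\}$-type caps, tuned so that the balanced partition puts the big item alone on one side and the smalls on the other, each side having value exactly $1$, while removing one small item from the $k$-item side drops its value to $1 - 1/k$... that is not yet the $\frac12$ bound. Instead I would engineer it so the balanced partition has the form $(\{a\}, \{b,c\})$ with $v_i(\{a\})=v_i(\{b,c\})=1$ but $v_i(\{b\})=v_i(\{c\})= 1$ as well and some structure forcing $|X_1|$ minimal to pick the singleton side — then $v_i(X_1)=1$ while $v_i(X_2-c)=v_i(\{b\})=1$, giving ratio $1$, not $1/2$. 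The correct construction needs $v_i(X_2-g)=2v_i(X_1)$: take $v_i$ with $v_i(\{a\})=2$, $v_i(\{b\})=v_i(\{c\})=2$, $v_i(\{b,c\})=2$ (submodular: this is a matroid-rank-like cap), $v_i(\{a,b\})=v_i(\{a,c\})=2$, $v_i(\{a,b,c\})=2$... still fails. I will defer the precise numbers, but the structural idea is: make the two sides of the balanced partition \emph{tie} in value with the singleton side minimal, while the non-singleton side contains an item whose removal barely changes its value yet the item itself, viewed from the other side, would have doubled things — realized by a valuation that saturates quickly. The main obstacle, then, is pinning down this extremal submodular instance so that the minimizer of $(|v_i(X_1)-v_i(X_2)|, |X_1|)$ is exactly the partition witnessing the $\frac12$ barrier; once the instance is fixed, checking submodularity and the EFX ratio is a finite verification.
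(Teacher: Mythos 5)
Your plan correctly identifies the nontrivial inequality ($v_i(X_2) \geq \tfrac12 v_i(X_1-g)$ for $g \in X_1$; the other direction is immediate), but the reduction you attempt and the tightness claim both have genuine gaps.

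\paragraph{Gap in the positive direction.}
You reduce to proving the strictly stronger statement ``$v_i(X_1) \leq 2 v_i(X_2)$ unless $X_1$ is a singleton'', whereas the lemma only needs $v_i(X_1-g) \leq 2 v_i(X_2)$ for every $g$. This distinction is what makes your swap argument break. In the sub-case where $v_i(X_2+g) > v_i(X_1-g)$ (so the roles of the bundles flip after the swap), the bounds you invoke --- subadditivity $v_i(X_2+g) \leq v_i(X_2) + v_i(g)$ together with $v_i(X_1-g) \geq v_i(X_1) - v_i(g)$ --- yield
\[
v_i(X_2+g) - v_i(X_1-g) \leq 2 v_i(g) - \bigl(v_i(X_1) - v_i(X_2)\bigr),
\]
which is strictly below the old gap $v_i(X_1) - v_i(X_2)$ only if $v_i(g) < v_i(X_1) - v_i(X_2)$. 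Nothing forces any $g \in X_1$ to satisfy this; subadditive valuations permit every singleton in $X_1$ to have value as large as $v_i(X_1)$ itself, and your hypothesis $v_i(X_1) > 2 v_i(X_2)$ gives no handle on $v_i(g)$. The paper avoids this by starting from the \emph{exact} negation $v_i(X_1-g) > 2 v_i(X_2)$ for the offending item $g$. In the same sub-case this gives the chain
\[
v_i(X_1-g) - v_i(X_2) \;>\; v_i(X_2) \;\geq\; v_i([m]) - v_i(X_1) \;\geq\; v_i(X_2+g) - v_i(X_1),
\]
which rearranges precisely to $v_i(X_2+g) - v_i(X_1-g) < v_i(X_1) - v_i(X_2)$, the needed contradiction. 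With your weaker hypothesis the first link of this chain becomes $v_i(X_1) - v_i(X_2) > v_i(X_2)$, which on rearranging only gives $v_i(X_2+g) - v_i(X_1) < v_i(X_1) - v_i(X_2)$; since $v_i(X_1-g) \leq v_i(X_1)$, this bounds the wrong quantity. You would need an additional argument to restore the conclusion, or --- simpler --- abandon the reduction and run the contradiction directly on $v_i(X_1-g) > 2 v_i(X_2)$ exactly as the paper does.

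\paragraph{Gap in the tightness direction.}
You explicitly defer the construction, and the candidates you sketch are aimed at violating $v_i(X_1) \geq \tfrac12 v_i(X_2-g)$, whereas the actual bottleneck is the other inequality. The paper's instance is the submodular valuation on $\{a,b,c\}$ with $v_1(a) = v_1(b) = v_1(ab) = 2-\varepsilon$, $v_1(c) = 1$, and $v_1(ac) = v_1(bc) = v_1(abc) = 3-\varepsilon$. The unique gap-minimizing partition is $X = (\{a,b\},\{c\})$ (gap $1-\varepsilon$, strictly below the gap $1$ of $(\{a,c\},\{b\})$ and $(\{b,c\},\{a\})$), yet $v_1(X_1 - b) = v_1(a) = (2-\varepsilon)\,v_1(X_2)$, so $X$ is only a $\tfrac{1}{2-\varepsilon}$-EFX partition; taking $\varepsilon \to 0$ gives tightness.
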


\begin{proof}
    For the purpose of contradiction, assume that there is some $g \in X_1$ so that $v_i(X_1 - g) > 2 \cdot v_i(X_2)$.
    First, suppose that $v_i(X_2+g) \leq v_i(X_1 - g)$. Then, 
    $$v_i(X_2) \leq v_i(X_2+g) \leq v_i(X_1-g) \leq v_i(X_1).$$
    This implies that
    $$v_i(X_1-g) - v_i(X_2+g) \leq v_i(X_1) - v_i(X_2),$$ 
    contradicting the assumption 
    since $|X_1 - g| < |X_1|$.
    
    Next, suppose that $v_i(X_2+g) > v_i(X_1 - g)$. Observe that
    \begin{align*}
        v_i(X_1-g) - v_i(X_2) &> v_i(X_2) && (\text{since }2 \cdot v_i(X_2) < v_i(X_1 - g)) \\
        &\geq v_i(\items) - v_i(X_1) && (\text{since } v_i(X_1) + v_i(X_2) \geq v_i(\items)) \\
        &\geq v_i(X_2+g) - v_i(X_1) && (\text{since }v_i(X_2+g) \leq v_i(\items))
    \end{align*}
    which implies that
    \begin{align*}
        &v_i(X_2 + g) - v_i(X_1) < v_i(X_1-g) - v_i(X_2)  \\
        \Longleftrightarrow\quad &v_i(X_2+g)-v_i(X_1-g) < v_i(X_1) - v_i(X_2)\\
        \Longleftrightarrow\quad &|v_i(X_2+g)-v_i(X_1-g)| < |v_i(X_1) - v_i(X_2)| && (\text{since } v_i(X_2 + g) \geq v_i(X_1 - g)),
    \end{align*}
    contradicting the assumption that $X$ is chosen to minimize $|v_i(X_1) - v_i(X_2)|$.
\end{proof}

The following propositions provide fairness tradeoffs for two subadditive valuations.

We first present an algorithm that finds an allocation that is ex-ante EF and ex-post $\frac{1}{2}$-EFX.

\begin{proposition}
Every instance with two subadditive valuations admits a randomized allocation that is ex-ante EF and ex-post $\frac{1}{2}$-EFX.
\label{prop:subadditive-2-agents-ef}
\end{proposition}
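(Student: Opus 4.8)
The plan is to combine two cut-and-choose strategies, one for each ``cutter'', and mix them with equal probability. First I would invoke \Cref{lem:2-agent-subadd} to fix, for each agent $i\in\{1,2\}$, a $\frac12$-EFX partition $P^{(i)}=(P^{(i)}_1,P^{(i)}_2)$ for agent $i$; crucially, this partition is chosen to minimize the ``imbalance'' $|v_i(P^{(i)}_1)-v_i(P^{(i)}_2)|$ over \emph{all} partitions of the items. Write $P^{(1)}=(A,B)$ and $P^{(2)}=(C,D)$. The randomized allocation is: with probability $\tfrac12$ offer the partition $(A,B)$ to agent $2$, who takes whichever of $A,B$ she weakly prefers while agent $1$ gets the other; and with probability $\tfrac12$ offer $(C,D)$ to agent $1$, who takes her preferred bundle while agent $2$ gets the other.

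Next I would verify ex-post $\frac12$-EFX in each of the two branches separately. In the branch where agent $2$ chooses from $(A,B)$: agent $1$ receives one of $A,B$, and since $(A,B)$ is a $\frac12$-EFX partition \emph{for agent $1$}, she is $\frac12$-EFX towards agent $2$ no matter which bundle she ends up with; agent $2$ receives a bundle she weakly prefers to the other, hence is fully envy-free and in particular $\frac12$-EFX (using monotonicity of $v_2$ to pass from $v_2(X_1)$ to $v_2(X_1-g)$). The other branch is symmetric, and since the two partitions are complete, every outcome is a complete $\frac12$-EFX allocation.

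The crux is ex-ante envy-freeness. Let $F$ be the bundle agent $2$ picks from $(A,B)$ and $G=\{A,B\}\setminus F$ the leftover (so $v_2(F)\ge v_2(G)$), and let $P$ be the bundle agent $1$ picks from $(C,D)$ and $Q=\{C,D\}\setminus P$ the leftover (so $v_1(P)\ge v_1(Q)$). Then $2\,\mathbb E[v_1(X_1)]=v_1(G)+v_1(P)$ and $2\,\mathbb E[v_1(X_2)]=v_1(F)+v_1(Q)$, so agent $1$'s ex-ante EF constraint is exactly $v_1(P)-v_1(Q)\ge v_1(F)-v_1(G)$. The left-hand side equals $|v_1(C)-v_1(D)|$, the imbalance of $(C,D)$ for agent $1$; the right-hand side is at most $|v_1(A)-v_1(B)|$, the imbalance of $(A,B)$ for agent $1$ (since $\{F,G\}=\{A,B\}$). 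The inequality then follows because $(A,B)$ was chosen to minimize agent $1$'s imbalance over all partitions, whence $|v_1(A)-v_1(B)|\le|v_1(C)-v_1(D)|$. The symmetric computation for agent $2$ — using that $(C,D)$ minimizes agent $2$'s imbalance — gives $v_2(F)-v_2(G)\ge v_2(P)-v_2(Q)$, which is precisely agent $2$'s ex-ante EF constraint.

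The main obstacle is choosing the right partitions: it is the minimum-imbalance refinement built into \Cref{lem:2-agent-subadd} — not an arbitrary $\frac12$-EFX partition — that makes the cross-comparison ``agent $i$'s imbalance for her own partition $\le$ agent $i$'s imbalance for the other agent's partition'' hold, and this one inequality simultaneously discharges both ex-ante constraints at the symmetric mixing weight $\tfrac12$. Everything else (the ex-post guarantees, the expectation bookkeeping, and completeness) is routine.
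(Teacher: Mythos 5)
Your proof is correct and follows essentially the same approach as the paper: mix two cut-and-choose allocations with equal probability (each agent cutting once via the minimum-imbalance $\frac{1}{2}$-EFX partition of \Cref{lem:2-agent-subadd}), and derive ex-ante EF from the fact that the cutter's own partition is at least as balanced, in her eyes, as the other agent's. The only cosmetic difference is that the paper first peels off the case where one agent already prefers the leftover piece (which yields a deterministic EF allocation), whereas your use of absolute values in the imbalance comparison absorbs this case uniformly.
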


\begin{proof}
    We may assume without loss of generality, by uniformly scaling the valuation functions of the agents, that each agent's value for the bundle containing all items is exactly 1. Suppose agent~1 and agent~2 choose $\frac{1}{2}$-EFX partitions $A=(A_1,A_2)$ and $B=(B_1,B_2)$ respectively, as specified in \Cref{lem:2-agent-subadd}.
    If agent~$2$ prefers $A_2$ over $A_1$ or agent~$1$ prefers $B_2$ over $B_1$ then there is a deterministic EF allocation.
    
    For the remaining case, suppose both agents prefer $A_1$ to $A_2$ and $B_1$ to $B_2$. Consider the random allocation which, with probability $\frac{1}{2}$, gives $A_2$ to agent~1 and $A_1$ to agent~2, and with probability $\frac{1}{2}$ gives $B_1$ to agent~1 and $B_2$ to agent~2. By construction, both allocations in the support are EFX for both agents.

    Agent~1's expected value in this random allocation is $(1/2)\cdot(v_1(B_1)+v_1(A_2))$. The expected value that agent~1 has for agent~2's bundle is $(1/2) \cdot (v_1(A_1)+v_1(B_2))$. Since $A$ is chosen to minimize $|v_1(A_1)-v_1(A_2)|$, we have
    \begin{align*}
        &v_1(B_1)-v_1(B_2) \geq v_1(A_1)-v_1(A_2) \\
\quad\Longleftrightarrow\quad &(1/2)\cdot\left(v_1(B_1)+v_1(A_2)\right) \geq (1/2)\cdot \left(v_1(A_1)+v_1(B_2)\right)
    \end{align*}
    so the allocation is ex-ante EF for agent~1. A similar argument holds for agent~2.
\end{proof}

We next give an algorithm that finds an allocation that is ex-ante $\frac{2}{3}$-EF and ex-post EFX.

\begin{proposition}
Every instance with two subadditive valuations admits a randomized allocation that is ex-ante $\frac{2}{3}$-EF and ex-post EFX.
\label{prop:subadditive-2-agents-efx}
\end{proposition}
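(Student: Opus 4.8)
The plan is to normalize so that $v_1([m]) = v_2([m]) = 1$ and split into two regimes. If some deterministic allocation is envy-free, it is also EFX (since $v_i(X_j - g) \le v_i(X_j)$ by monotonicity) and trivially ex-ante EF, so we are done; hence assume no deterministic EF allocation exists. This assumption yields a sign-agreement property: for every partition $(S, \bar S)$ of $[m]$, $v_1(S) \ge v_1(\bar S)$ implies $v_2(S) \ge v_2(\bar S)$ (otherwise handing $S$ to agent~1 and $\bar S$ to agent~2 would be EF), and symmetrically. Next, each agent $i$ fixes an EFX partition of $[m]$ with respect to its own valuation $v_i$; such a partition exists for every monotone valuation, being the special case of EFX for two agents with identical valuations (see, e.g., \cite{PR18}). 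Write $A = (A_1, A_2)$ for agent~1's partition with $v_1(A_1) \ge v_1(A_2)$ and $B = (B_1, B_2)$ for agent~2's with $v_2(B_1) \ge v_2(B_2)$; by sign-agreement, $v_2(A_1) \ge v_2(A_2)$ and $v_1(B_1) \ge v_1(B_2)$ as well. Consider the two deterministic allocations $\beta_A$ (agent~1 gets $A_2$, agent~2 gets $A_1$) and $\beta_B$ (agent~1 gets $B_1$, agent~2 gets $B_2$); both are EFX, since in $\beta_A$ agent~1 holds a bundle of its own EFX partition while agent~2 holds the $A$-bundle it weakly prefers, so $v_2(A_1) \ge v_2(A_2) \ge v_2(A_2 - g)$ for all $g$, and $\beta_B$ is symmetric.

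The quantitative ingredient combines the EFX-partition property with subadditivity: for every $g \in A_1$ we have $v_1(A_2) \ge v_1(A_1 - g) \ge v_1(A_1) - v_1(g)$, hence $v_1(g) \ge v_1(A_1) - v_1(A_2)$, and taking the minimum over $g \in A_1$ gives $\theta_1 \ge v_1(A_1) - v_1(A_2)$, where $\theta_1 := \max_g v_1(g)$. Together with $v_1(A_1) \ge 1/2$ (which follows from $v_1(A_1) \ge v_1(A_2)$ and $v_1(A_1) + v_1(A_2) \ge v_1([m]) = 1$), this gives $\frac{2}{3} v_1(A_1) - v_1(A_2) \le \theta_1 - \frac{1}{6}$; the analogous bound holds for agent~2 and $B$.

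In the first regime, $\theta_1, \theta_2 \le \frac{1}{3}$, the output is the mixture of $\beta_A$ and $\beta_B$, each with probability $\frac{1}{2}$. It is ex-post EFX by the above, and ex-ante $\frac{2}{3}$-EF for agent~1 since the desired inequality rearranges to $v_1(B_1) - \frac{2}{3} v_1(B_2) \ge \frac{2}{3} v_1(A_1) - v_1(A_2)$, whose left side is at least $\frac{1}{3} v_1(B_1) \ge \frac{1}{6}$ and whose right side is at most $\theta_1 - \frac{1}{6} \le \frac{1}{6}$; agent~2 is symmetric. In the second regime, $\max(\theta_1, \theta_2) > \frac{1}{3}$ — say agent~1 has a heavy item $g_1$ with $v_1(g_1) > \frac{1}{3}$ — the allocation $\beta_A$ may hand agent~1 an almost worthless bundle with too large a probability, so the mixture must instead include EFX allocations that give agent~1 a bundle containing $g_1$, combined with $\beta_A$ and $\beta_B$ under probabilities tuned to $\theta_1$ (and, when agent~2 also holds a heavy item, to whether it equals $g_1$). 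Every ex-post EFX check again uses only sign-agreement and the EFX-partition property, and every ex-ante inequality reduces to elementary algebra in the scalars $1, \theta_i, v_i(A_j), v_i(B_j)$ via subadditivity and the bound of the previous paragraph.

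\textbf{The main obstacle} is precisely this second regime, and within it the case where a single item is heavy for both agents: then no realized allocation can make both agents happy at once, so the right collection of EFX allocations together with the exact mixing probabilities must be pinned down by hand, and this is where the slack is exhausted — the resulting guarantee is exactly $\frac{2}{3}$, matching the impossibility in \Cref{prop:ub-subadditive}.
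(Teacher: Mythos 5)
Your setup — normalize, take agents' own EFX partitions $A$ and $B$, observe that if no deterministic EF allocation exists then both agents weakly prefer $A_1$ to $A_2$ and $B_1$ to $B_2$, and mix the two EFX allocations $\beta_A$ (agent $1$ gets $A_2$) and $\beta_B$ (agent $1$ gets $B_1$) with equal weights — is exactly the paper's. Your analysis of the first regime ($\theta_1,\theta_2 \le \frac13$), via the inequality $v_1(B_1)-\frac23 v_1(B_2) \ge \frac13 v_1(B_1) \ge \frac16 \ge \theta_1 - \frac16 \ge \frac23 v_1(A_1)-v_1(A_2)$, is correct. But the second regime is a genuine gap: you describe what a lottery \emph{would} need to look like (``probabilities tuned to $\theta_1$\dots'') without exhibiting the allocations, the probabilities, the ex-post EFX checks, or the ex-ante computation, and you yourself flag it as the ``main obstacle.''

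The deeper problem is that the obstacle you identify is illusory, and the case split on $\theta_i$ is the wrong one. The equal-weight mixture of $\beta_A$ and $\beta_B$ already works in the heavy-item regime. The paper splits instead on whether $v_1(A_1) > 2\,v_1(A_2)$. If so, subadditivity (exactly the two-term inequality you use, $v_1(A_1 - g_1)+v_1(A_1-g_2)\ge v_1(A_1)$) forces $A_1$ to be a \emph{single} item of value $>\frac12$; since $v_1(B_1)\ge v_1(B_2)$, that item must sit in $B_1$, so $A_1\subseteq B_1$ and $B_2\subseteq A_2$, giving $v_1(B_1)+v_1(A_2)\ge v_1(A_1)+v_1(B_2)$ — in fact full ex-ante EF for agent~$1$, with no slack exhausted. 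If $v_1(A_1)\le 2\,v_1(A_2)$, then $v_1(A_2)\ge\frac12 v_1(A_1)$ and, together with $v_1(B_2)\le v_1(B_1)$, $v_1(B_1)\ge\frac12$, and $v_1(A_1)\le1$,
\[
\frac{v_1(A_2)+v_1(B_1)}{v_1(A_1)+v_1(B_2)} \;\ge\; \frac{\frac12 v_1(A_1)+v_1(B_1)}{v_1(A_1)+v_1(B_1)} \;\ge\; \frac{\frac12 v_1(A_1)+\frac12}{v_1(A_1)+\frac12} \;\ge\; \frac23.
\]
So the tight configuration is \emph{not} a single item heavy for both agents (that case is strictly easier); it is $v_1(A_1)\approx 1 \approx 2\,v_1(A_2)$ with $v_1(B_1)\approx v_1(B_2)\approx\frac12$, which is also consistent with the instance underlying \Cref{prop:ub-subadditive}. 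To repair your proof, replace the $\theta$-regimes with this dichotomy on $v_1(A_1)$ versus $2\,v_1(A_2)$ (and symmetrically for agent~$2$); no new lottery is needed.
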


\begin{proof}
    Once again, we may assume without loss of generality, by uniformly scaling the valuation functions of the agents, that each agent's value for the bundle containing all items is exactly 1. Suppose agent~1 chooses any EFX partition $A = (A_1, A_2)$ and agent~2 chooses any EFX partition $B = (B_1, B_2)$. We may assume without loss of generality that $v_1(A_1) \geq v_1(A_2)$ and $v_2(B_1) \geq v_2(B_2)$. Again, if agent~$2$ prefers $A_2$ over $A_1$ or agent~$1$ prefers $B_2$ over $B_1$, there is a deterministic EF allocation.
    
    For the remaining case, suppose both agents prefer $A_1$ to $A_2$ and $B_1$ to $B_2$. Consider the random allocation which, with probability ${1}/{2}$, gives $A_2$ to agent~1 and $A_1$ to agent~2, and with probability ${1}/{2}$ gives $B_1$ to agent~1 and $B_2$ to agent~2. Clearly, both allocations in the support are $EFX$ for both agents.

    Agent~1's expected value in this random allocation is $({1}/{2})\cdot(v_1(B_1)+v_1(A_2))$. The expected value that agent~1 has for agent~2's bundle is $({1}/{2}) \cdot (v_1(A_1)+v_1(B_2))$. By assumption, we have $v_1(B_1) \geq v_1(B_2)$, and by subadditivity, $v_1(B_1) + v_1(B_2) \geq v_1(B_1\cup B_2) = 1$. Consequently, we have $v_1(B_1) \geq {1}/{2}$. Similarly, we have $v_1(A_1) \geq {1}/{2}$.

    If $v_1(A_1) > 2 \cdot v_1(A_2)$, then by subadditivity $v_1(A_1) > 1/{2}$. Additionally, the set $A_1$ contains only a single item: suppose for a contradiction that $A_1$ contains at least two items, and let $g_1$ and $g_2$ be two items in $A_1$. By subadditivity, $v_1(A_1 - g_1) + v_1(A_1 - g_2) \geq v_1(A_1)$, thus at least one of these items (say $g_1$) is such that $v_1(A_1 - g_1) \geq ({1}/{2}) \cdot v_1(A_1) > v_1(A_2)$, violating the assumption that $A$ is an EFX partition for agent~1. Consequently, $A_1$ contains a single item of value greater than ${1}/{2}$. By monotonicity, this item is in $B_1$ and not $B_2$, hence $A_1 \subseteq B_1$ and $B_2 \subseteq A_2$. Thus we have $v_1(B_1) \geq v_1(A_1)$ and $v_1(A_2) \geq v_1(B_2)$, so this random allocation is ex-ante EF for agent~1.

    It remains to consider agent~1's envy in the case where $v_1(A_1) \leq 2 \cdot v_1(A_2)$. We have
    \begin{align*}
        \frac{v_1(A_2) + v_1(B_1)}{v_1(A_1) + v_1(B_2)} &\geq \frac{v_1(A_2) + v_1(B_1)}{v_1(A_1) + v_1(B_1)}
        \geq \frac{(1/2) \cdot v_1(A_1) + v_1(B_1)}{v_1(A_1) + v_1(B_1)}
        \geq \frac{(1/2) \cdot v_1(A_1) + (1/2)}{v_1(A_1) + (1/2)}
    \end{align*}
    Since $v_1(A_1) \leq 1$ the above ratio is at least $\frac{2}{3}$, so agent~1 is at least ex-ante $\frac{2}{3}$-EF. A similar argument applies to the bundle values for agent 2.
\end{proof}

Finally, the following proposition establishes some impossibility results, showing, among other results, that \Cref{prop:subadditive-2-agents-efx} is tight.

\begin{proposition}
\label{prop:ub-subadditive}
For any $0.618 \approx \varphi-1 < \beta \leq 1$, there exists an instance with no randomized allocation that is simultaneously ex-ante $\alpha$-EF and ex-post $\beta$-EFX, where $\alpha = \frac{\beta+1}{\beta^2+2\beta}$.
\end{proposition}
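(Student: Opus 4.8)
The plan is to exhibit, for each $\beta \in (\varphi-1, 1]$, a concrete two-agent instance with (carefully chosen) subadditive valuations, and then argue that no randomized allocation can be simultaneously ex-ante $\alpha$-EF and ex-post $\beta$-EFX for $\alpha = \frac{\beta+1}{\beta^2+2\beta}$. The natural candidate instance is one in which every $\beta$-EFX allocation is heavily lopsided in value: roughly, there should be a ``large'' indivisible chunk such that any $\beta$-EFX partition gives one agent much more than the other, forcing whoever receives the small side to have small value, so that \emph{every} allocation in the support of any ex-post $\beta$-EFX distribution leaves one agent far behind. Then averaging over the distribution cannot recover ex-ante $\alpha$-EF. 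Concretely, I would look for an instance with (say) two items $\{a,b\}$ where $v_i(a)$ is large, $v_i(b)$ is small, and $v_i(\{a,b\})$ is only slightly more than $v_i(a)$ — this is the standard ``substitutes'' gadget, and with two items each allocation is automatically EFX, so one needs three or more items, or a more delicate subadditive valuation, to make $\beta$-EFX with $\beta < 1$ genuinely restrictive. The precise algebra of the instance is where the constant $\frac{\beta+1}{\beta^2+2\beta}$ and the threshold $\varphi - 1$ will emerge.

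The key steps, in order, would be: (1) Define the instance — I expect two agents with \emph{identical} subadditive valuations (identical valuations make the analysis cleanest and still suffice for an impossibility), over a small number of items, parametrized so that any partition $(X_1, X_2)$ which is $\beta$-EFX must satisfy a value bound of the form $\min(v(X_1), v(X_2)) \le \gamma(\beta)$ for an explicitly computable $\gamma(\beta) < 1/2$ (after normalizing $v$ of the whole set to $1$). (2) Characterize the $\beta$-EFX allocations: enumerate the (few) partitions, check the $\beta$-EFX inequality $v(X_i) \ge \beta\, v(X_j - g)$ for all $g$, and identify which survive; show each surviving partition has one bundle of value at most $\gamma(\beta)$ and the complementary bundle of value at least something like $1 - \gamma(\beta)$ or a related quantity (subadditivity only gives $v(X_1)+v(X_2)\ge 1$, not equality, so I must be careful here and build the instance so the relevant bounds hold). (3) Let $p$ be the probability that agent $1$ gets the ``large'' side. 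Compute $\mathbb{E}[v_1(X_1)]$ and $\mathbb{E}[v_1(X_2)]$ as affine functions of $p$ using the value bounds from step (2); symmetrically for agent $2$ with probability $1-p$ of getting the large side. (4) The ex-ante $\alpha$-EF constraints for both agents give two inequalities in $p$; show they are jointly infeasible precisely when $\alpha > \frac{\beta+1}{\beta^2+2\beta}$ — i.e., solve for the best achievable $\alpha$ over $p \in [0,1]$ and check it equals $\frac{\beta+1}{\beta^2+2\beta}$, with the hypothesis $\beta > \varphi - 1$ being exactly the condition under which this bound is nontrivial (less than $1$) or under which the instance construction is valid (e.g., keeping some value parameter in $[0,1]$).

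The main obstacle I anticipate is \textbf{designing the instance and pinning down the $\beta$-EFX characterization}: I need a subadditive (ideally submodular, to strengthen the result) valuation where $\beta$-EFX is restrictive enough to force the value gap, yet simple enough that the surviving allocations and their values can be computed in closed form as functions of $\beta$. With only two items every allocation is trivially EFX, so the instance needs at least one ``bulky'' item together with finer items, and I expect something like one item worth close to $1$ and several small items, with a submodular cap; verifying $\beta$-EFX then amounts to checking that removing any \emph{single} small item from the large bundle still leaves it above $\beta$ times the other side. A secondary subtlety is that subadditivity gives only $v(X_1) + v(X_2) \ge v(M) = 1$, so I cannot assume the two bundle values sum to $1$; the instance must be built so that for the relevant partitions this inequality is tight (or nearly so), otherwise the averaging argument in step (3)–(4) loses the constant. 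Once the instance is fixed, steps (3) and (4) are routine linear optimization in the single variable $p$, and the identity $\frac{\beta+1}{\beta^2+2\beta}$ together with the breakpoint $\beta = \varphi-1$ (where $\beta^2 + \beta - 1 = 0$) should fall out of that computation; I would sanity-check it against the two stated corollaries ($\beta = 1 \Rightarrow \alpha = 2/3$, matching Proposition~\ref{prop:subadditive-2-agents-efx}, and $\alpha = 1 \Rightarrow \beta = \varphi - 1$).
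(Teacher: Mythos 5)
Your high-level architecture (pick a small instance, enumerate the $\beta$-EFX allocations, parametrize a distribution, and show the ex-ante EF constraints are jointly infeasible beyond $\alpha=\frac{\beta+1}{\beta^2+2\beta}$) matches the paper's, and your sanity checks ($\beta=1 \Rightarrow \alpha=2/3$ and $\alpha=1 \Rightarrow \beta^2+\beta-1=0$) are both correct. The obstacle is exactly where you anticipated it, in step (1), and your proposed fix there cannot work.

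The critical flaw is your intention to use \emph{identical} valuations. If $v_1=v_2=v$, then for any partition $(A_1,A_2)$, the $\beta$-EFX conditions for $(A_1,A_2)$ and for $(A_2,A_1)$ are literally the same system of inequalities: both require $v(A_1)\geq\beta\, v(A_2-g)$ for all $g\in A_2$ and $v(A_2)\geq\beta\, v(A_1-g)$ for all $g\in A_1$. Hence the set of $\beta$-EFX allocations is closed under swapping the two agents. Whenever this set is nonempty, the lottery that picks any $\beta$-EFX allocation $(A_1,A_2)$ and swaps it with probability $1/2$ satisfies $\mathbb{E}[v_1(X_1)]=\frac{1}{2}v(A_1)+\frac{1}{2}v(A_2)=\mathbb{E}[v_1(X_2)]$, i.e., it is ex-ante EF exactly, not merely $\alpha$-EF. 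So no identical-valuation instance can witness the impossibility. Your step (2) heuristic (every $\beta$-EFX partition has a ``small'' bundle of value $\le\gamma(\beta)<1/2$) also does not match how the bound actually arises: in a working construction, each agent can have the \emph{same} value for her own bundle across all surviving $\beta$-EFX allocations; what varies is the envy, i.e., each agent's value for the \emph{other} bundle.

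The paper instead uses two \emph{asymmetric} (mirror-image) subadditive valuations on three items $\{a,b,c\}$, engineered so that exactly three allocations $(a,bc)$, $(ab,c)$, $(ac,b)$ are $\beta$-EFX, and so that each agent values her own bundle at $1+\varepsilon$ in all three while her value for the other agent's bundle is $2\beta$, $\beta$, $\beta$ (agent 1) and $\beta$, $1+\beta$, $1+\beta$ (agent 2), respectively. With probabilities $(p_1,p_2,p_3)$, agent 1's envy is $\beta(1+p_1)$ and agent 2's is $1+\beta-p_1$; balancing gives $p_1=1/(\beta+1)$ and the stated $\alpha$. The asymmetry $2\beta$ versus $1+\beta$ is the engine of the bound, and it is exactly what identical valuations cannot produce. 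Note also that this instance is subadditive but deliberately \emph{not} submodular (as the paper points out, $v_1(abc)-v_1(ab)>v_1(ac)-v_1(a)$), so your hope of ``ideally submodular, to strengthen the result'' should be dropped; the paper gives a separate, weaker, submodular bound in \Cref{prop:2agent-submod-tight}.
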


\begin{proof}
Consider the following instance with three items, $a$, $b$ and $c$. Take $0 < \varepsilon < 1-2\beta$.
\begin{center}
    \begin{tabular}{||c | c c c c c c c c||} 
     \hline
     & $\emptyset$ & $a$ & $b$ & $c$ & $ab$ & $ac$ & $bc$ & $abc$ \\ [0.5ex] 
     \hline\hline
     $v_1$ & $0$ & $1+\varepsilon$ & $\beta$ & $\beta$ & $1+\varepsilon$ & $1+\varepsilon$ & $2\beta$ & $2\beta$ \\ 
     \hline
     $v_2$ & $0$ & $\beta$ & $1+\varepsilon$ & $1+ \varepsilon$ & $1+\beta$ & $1+\beta$ & $1+\varepsilon$ & $1+\beta$ \\
     \hline
    \end{tabular}
\end{center}
It can be verified that $v_1$ and $v_2$ are subadditive (but observe that $v_1$ is not submodular because $v_1(abc)-v_1(ab) > v_1(ac)-v_1(a)$).
There are three deterministic $\beta$-EFX allocations, namely $X^1 = (a,bc)$, $X^2 = (ab,c)$ and $X^3 = (ac,b)$. Assume that the corresponding probabilities are $p_1$, $p_2$ and $p_3$. We have
\begin{align*}
\mathbb E[v_1(X_1)] &= p_1\cdot (1+\varepsilon) + p_2 \cdot (1+\varepsilon) + p_3 \cdot (1+\varepsilon) = 1+\varepsilon\\
\mathbb E[v_2(X_2)] &= p_1\cdot (1+\varepsilon) + p_2 \cdot (1+\varepsilon) + p_3 \cdot (1+\varepsilon) = 1+\varepsilon\\
\mathbb E[v_1(X_2)] &= p_1\cdot 2\beta + p_2 \cdot \beta + p_3 \cdot \beta = \beta\cdot(1+p_1)\\
\mathbb E[v_2(X_1)] &= p_1\cdot \beta + p_2 \cdot (1+\beta) + p_3 \cdot (1+\beta) =1+\beta-p_1
\end{align*}
In particular, observe that the maximum envy is minimized when $\beta\cdot (1+p_1) = 1+\beta - p_1$, that is, when $p_1 = 1/(\beta+1)$, in which case we have $\mathbb E[v_1(X_2)] = \mathbb E[v_2(X_1)] = 1/\alpha$, with $\alpha = \frac{\beta+1}{\beta^2+2\beta}$.
\end{proof}

As a direct corollary, we get that: (i) 
there is no random allocation that is ex-ante $\alpha$-EF and ex-post EFX for $\alpha>\frac{2}{3}$, and (ii) there is no random allocation that is ex-ante EF and ex-post $\beta$-EFX for $\beta > \varphi-1$.

In \Cref{app:two-agents} we show that \Cref{prop:subadditive-2-agents-efx} is tight even with respect to two submodular agents (see \Cref{prop:2agent-submod-tight}).
Moreover, for general (monotone) valuations, we show that no approximate guarantees are possible (see \Cref{prop:ub-general}).

\section{Concluding Remarks}

Our results significantly advance the state-of-the-art in best-of-both-worlds fairness. We demonstrate that strong ex-ante and ex-post fairness guarantees can be achieved even in settings with subadditive valuations and even with respect to the stronger envy-freeness notion of EFX. Our work suggests several natural problems for future research.

First, while our analysis is tight for our algorithm (see \Cref{ex:tight_analysis}), the existence of an allocation that is ex-ante $\alpha$-EF and ex-post $\frac{1}{2}$-EFX for any $\alpha > \frac{1}{2}$ remains open.

Second, many of the existing results on EFX rely on the envy cycles procedure, e.g.,
the existence of a ($\varphi-1$)-EFX allocation for any number of additive agents \cite{AMN20}.
An interesting direction to investigate is whether our techniques can be combined with these results to obtain best-of-both-worlds guarantees for these settings.

Third, while our algorithm produces an execution tree of polynomial depth, and it computes the random allocation in polynomial time, the resulting distribution might have exponential-size support. By Carath\'eodory's theorem, it is possible to reduce the support size of the final distribution to $n\cdot(n-1)+1$. However, it remains a challenge to construct this polynomial-size distribution in polynomial time. Unfortunately, techniques similar to those of~\citet{FSV20} (who reduce the support size after each iteration) and~\citet{A20} do not apply easily in our setting.

Finally, it would be intriguing to explore whether our positive results can be augmented with any type of efficiency guarantees. Prior work shows that random allocations that are ex-ante fractionally-PO, ex-ante EF, and ex-post EF1 don't exist even for additive valuations~\cite{FSV20}. Similarly, there are instances with additive valuations that admit no allocations that are EFX and PO~\cite{PR18}. It is plausible that stronger impossibility results (for weaker notions of efficiency) apply in our setting.

\bibliographystyle{plainnat}
\bibliography{bib}

\appendix

\section{Omitted Theorems and Proofs for Two Agents}
\label{app:two-agents}

\subsection{Additive Valuations}

\begin{lemma}[Partition for additive]\label{lem:2-agent-add}
Assume that the valuations are additive. For any agent $i \in \{1,2\}$, let $X = (X_1, X_2)$ be a partition of the items with $v_i(X_1) \geq v_i(X_2)$ that minimizes $|v_i(X_1) - v_i(X_2)|$ and, subject to this, minimizes $|X_1|$. Then, $X$ is an EFX partition for agent~$i$.
\end{lemma}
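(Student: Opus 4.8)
The plan is to prove \Cref{lem:2-agent-add} by a contradiction argument analogous to (but simpler than) the one used for \Cref{lem:2-agent-subadd}, exploiting additivity to strengthen the conclusion from $\frac{1}{2}$-EFX to full EFX. Fix an agent $i$ and let $X = (X_1, X_2)$ be the partition with $v_i(X_1) \geq v_i(X_2)$ minimizing $|v_i(X_1) - v_i(X_2)|$ and, subject to this, minimizing $|X_1|$. For the purpose of contradiction, suppose $X$ is not an EFX partition for agent~$i$. The EFX condition has two halves: $v_i(X_1) \geq v_i(X_2 - g)$ for all $g \in X_2$, and $v_i(X_2) \geq v_i(X_1 - g)$ for all $g \in X_1$. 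The first half is automatic, since $v_i(X_1) \geq v_i(X_2) \geq v_i(X_2 - g)$ by monotonicity. So the violation must be of the second form: there exists $g \in X_1$ with $v_i(X_1 - g) > v_i(X_2)$.

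Now I would move the item $g$ from $X_1$ to $X_2$, forming the new partition $X_1' = X_1 - g$ and $X_2' = X_2 + g$. Using additivity, $v_i(X_1') = v_i(X_1) - v_i(g)$ and $v_i(X_2') = v_i(X_2) + v_i(g)$. From $v_i(X_1 - g) > v_i(X_2)$ we get $v_i(X_1') > v_i(X_2)$, and since $v_i(X_2') \geq v_i(X_2)$ we also have $v_i(X_2') > v_i(X_2) \geq$ (we want to compare the new gap to the old). The key computation: the new discrepancy is $|v_i(X_1') - v_i(X_2')| = |v_i(X_1) - v_i(g) - v_i(X_2) - v_i(g)| = |(v_i(X_1) - v_i(X_2)) - 2 v_i(g)|$. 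I need to show this is strictly less than the old discrepancy $v_i(X_1) - v_i(X_2)$, OR that it equals it while $|X_1'| < |X_1|$ gives a contradiction with the tie-breaking rule.

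The main case analysis: write $D = v_i(X_1) - v_i(X_2) \geq 0$ for the old gap. The new gap is $|D - 2v_i(g)|$. If $2 v_i(g) \leq D$, then $|D - 2v_i(g)| = D - 2v_i(g) \leq D$, with equality only if $v_i(g) = 0$; but then $v_i(X_1') = v_i(X_1) > $ wait — if $v_i(g)=0$ then $v_i(X_1 - g) = v_i(X_1) \geq v_i(X_2)$, and the violation $v_i(X_1-g) > v_i(X_2)$ still just says $v_i(X_1) > v_i(X_2)$; here the new partition has the same gap $D$ but $|X_1'| = |X_1| - 1 < |X_1|$, contradicting minimality of $|X_1|$ (note $v_i(X_1') \geq v_i(X_2')$ still holds since $v_i(X_1') = v_i(X_1) \geq v_i(X_2) = v_i(X_2')$). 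If $v_i(g) > 0$ and $2v_i(g) \leq D$ then the new gap is strictly smaller, contradicting minimality of the gap. The remaining case is $2v_i(g) > D$: then the new gap is $2v_i(g) - D$, and I must show $2v_i(g) - D < D$, i.e., $v_i(g) < D$. But this follows directly from the violation hypothesis: $v_i(X_1 - g) > v_i(X_2)$ means $v_i(X_1) - v_i(g) > v_i(X_2)$, i.e., $v_i(g) < v_i(X_1) - v_i(X_2) = D$. So the new gap $2v_i(g) - D < 2D - D = D$, again contradicting minimality of the gap (and one should note $v_i(X_1') \geq v_i(X_2')$ may fail here, but $|v_i(X_1')-v_i(X_2')| < D$ is what matters — relabel the two parts if necessary, the absolute discrepancy has strictly decreased). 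In every case we reach a contradiction, so $X$ is an EFX partition.

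I do not expect any serious obstacle here; the argument is essentially a local-swap exchange argument and additivity makes all the bundle-value bookkeeping exact. The only point requiring a little care is the bookkeeping around the tie-breaking rule (the "subject to this, minimizes $|X_1|$" clause), which is needed precisely to rule out the degenerate $v_i(g) = 0$ subcase where the gap does not strictly decrease. I would write the proof to mirror the structure of the \Cref{lem:2-agent-subadd} proof so the two read consistently.
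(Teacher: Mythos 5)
Your proof is correct and follows essentially the same exchange argument as the paper: move the violating item $g$ from $X_1$ to $X_2$, use additivity to compute the new gap as $v_i(X_1)-v_i(X_2)-2v_i(g)$, handle the $v_i(g)=0$ subcase via the $|X_1|$ tie-breaking rule, and otherwise derive $0<v_i(g)<v_i(X_1)-v_i(X_2)$ from the violation to get a strictly smaller absolute discrepancy. The paper reaches the contradiction in one step by bounding $v_i(X_1')-v_i(X_2')$ strictly between $\pm(v_i(X_1)-v_i(X_2))$, whereas you spell it out with a case split on the sign of $D-2v_i(g)$, but this is the same argument.
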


\begin{proof}
Suppose that $v_i(X_1) \geq v_i(X_2)$ and that it holds that $v_i(X_1 - g) > v_i(X_2)$ for some $g \in X_1$, i.e., $X$ is not an EFX partition for agent $i$. Consider the partition $X' = (X_1 - g, X_2 + g)$. 
If $v_i(g) = 0$, then this contradicts the assumption since $|X_1 - g| < |X_1|$.
Note that 
$$v_i(X_1') - v_i(X_2') = v_i(X_1 - g) - v_i(X_2 + g) = v_i(X_1) - v_i(X_2) - 2 \cdot v_i(g)$$ 
and
$$0 < v_i(g) < v_i(X_1) - v_i(X_2).$$
It follows that
$$ -v_i(X_1) + v_i(X_2) < v_i(X_1') - v_i(X_2') < v_i(X_1) - v_i(X_2),$$
which contradicts the assumption that $X$ is chosen to minimize $|v_i(X_1) - v_i(X_2)|$.
\end{proof}

The following proposition provides fairness guarantees for two additive valuations.

\begin{proposition}
    Every instance with two additive valuations admits a randomized allocation that is ex-ante EF and ex-post EFX.
\label{prop:additive-2-agents}
\end{proposition}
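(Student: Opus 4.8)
The plan is to mirror the argument used for \Cref{prop:subadditive-2-agents-ef}, replacing the $\tfrac12$-EFX partitions of \Cref{lem:2-agent-subadd} by the (exact) EFX partitions guaranteed for additive valuations by \Cref{lem:2-agent-add}. First I would normalize, scaling each $v_i$ so that $v_i([m]) = 1$. Let agent~$1$ take the EFX partition $A = (A_1, A_2)$ from \Cref{lem:2-agent-add} with $v_1(A_1) \ge v_1(A_2)$, and let agent~$2$ take the EFX partition $B = (B_1, B_2)$ with $v_2(B_1) \ge v_2(B_2)$.

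Then I would split into cases exactly as in the proof of \Cref{prop:subadditive-2-agents-ef}. If agent~$2$ (weakly) prefers $A_2$ to $A_1$, the deterministic allocation that gives $A_1$ to agent~$1$ and $A_2$ to agent~$2$ leaves no agent envious, hence is EF and in particular EFX; the symmetric statement handles the case where agent~$1$ prefers $B_2$ to $B_1$. In the remaining case both agents prefer $A_1$ to $A_2$ and $B_1$ to $B_2$, and I would use the random allocation that with probability $\tfrac12$ gives $A_2$ to agent~$1$ and $A_1$ to agent~$2$, and with probability $\tfrac12$ gives $B_1$ to agent~$1$ and $B_2$ to agent~$2$. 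Both deterministic allocations in the support are ex-post EFX: in the first one, agent~$1$'s bundle together with agent~$2$'s bundle is agent~$1$'s EFX partition $A$, while agent~$2$ receives her (weakly) preferred bundle $A_1$ and hence is not envious at all; the second allocation is symmetric via $B$.

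For ex-ante envy-freeness, agent~$1$'s expected value for her own bundle is $\tfrac12\bigl(v_1(A_2) + v_1(B_1)\bigr)$ and for agent~$2$'s bundle is $\tfrac12\bigl(v_1(A_1) + v_1(B_2)\bigr)$, so the allocation is ex-ante EF for agent~$1$ precisely when $v_1(B_1) - v_1(B_2) \ge v_1(A_1) - v_1(A_2)$. This holds because $(B_1, B_2)$ is itself a partition of the items with $v_1(B_1) \ge v_1(B_2)$ (we are in the case where agent~$1$ prefers $B_1$), and $A$ was chosen by \Cref{lem:2-agent-add} to minimize $|v_1(X_1) - v_1(X_2)|$ over all partitions; a symmetric argument gives ex-ante EF for agent~$2$. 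I do not anticipate a genuine obstacle here, since the content is entirely in \Cref{lem:2-agent-add}; the only point requiring care is checking that each deterministic allocation in the support is EFX for \emph{both} agents, which follows because the agent not holding her own EFX partition is completely non-envious in that realization.
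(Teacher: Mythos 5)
Your proposal is correct and follows essentially the same approach as the paper's own proof: construct the two EFX partitions via \Cref{lem:2-agent-add}, handle the deterministic EF cases first, and in the remaining case use the uniform lottery over the two ``disadvantaged'' bundles, with ex-ante EF following from the minimality of $|v_1(A_1)-v_1(A_2)|$ over all partitions. The normalization step you mention is harmless but unnecessary here.
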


\begin{proof}
    Let agents $1$ and $2$ choose EFX partitions $A = (A_1, A_2)$ and $B = (B_1,B_2)$ respectively, as specified in \Cref{lem:2-agent-add}.
    Observe that if agent $2$ prefers $A_2$ over $A_1$ or agent $1$ prefers $B_2$ over $B_1$, then there is a deterministic EF allocation. 
    
    Assume that both agents prefer $A_1$ over $A_2$ and $B_1$ over $B_2$.
    Consider the following randomized allocation. With probability $1/2$ give $A_1$ to $2$ and $A_2$ to $1$, and with probability $1/2$ give $B_1$ to $1$ and $B_2$ to $2$. Both of the deterministic allocations are EFX by the assumption.

    The expected value of agent $1$ for the allocation that agent $1$ gets is $(1/2) \cdot (v_1(B_1) + v_1(A_2))$. The expected value of agent $1$ for the allocation that agent $2$ gets is $(1/2) \cdot (v_1(A_1)+v_1(B_2))$. Since $A$ is chosen to minimize $|v_1(A_1) - v_1(A_2)|$, it holds that
    $$ v_1(A_1)-v_1(A_2) \leq v_1(B_1)-v_1(B_2) $$
    and so
    $$ (1/2) \cdot (v_1(A_1)+v_1(B_2)) \leq (1/2) \cdot (v_1(B_1) + v_1(A_2))  $$
    which implies that agent $1$ is ex-ante EF. The same argument applies to agent $2$.
\end{proof}

\subsection{Subadditive Valuations}\label{sec:subadd_appendix}

\begin{proof}[Proof of tightness of \Cref{lem:2-agent-subadd}]
   To see that the partition specified in \Cref{lem:2-agent-subadd} cannot provide better guarantees, consider the following submodular valuation over items $a,b,c$, for some $\varepsilon > 0$.
    \begin{center}
        \begin{tabular}{||c | c c c c c c c c||} 
         \hline
         & $\emptyset$ & $a$ & $b$ & $c$ & $ab$ & $ac$ & $bc$ & $abc$ \\ [0.5ex] 
         \hline\hline
         $v_1$ & $0$ & $2-\varepsilon$ & $2-\varepsilon$ & $1$ & $2-\varepsilon$ & $3-\varepsilon$ & $3-\varepsilon$ & $3-\varepsilon$ \\ 
         \hline
        \end{tabular}
    \end{center}
    
   One can easily verify that  the unique partition $X = (X_1,X_2)$ that minimizes $|v_1(X_1) - v_1(X_2)|$ is given by $X_1 = \{a, b\}$ and $X_2 = \{c\}$. 
   
   It holds that 
    $$v_1(X_1 - b) = v_1(a) = 2-\varepsilon = (2-\varepsilon) \cdot v_1(c) = v_1(X_2)$$ 
    which means that $X$ is not an $\beta$-EFX partition for any $\beta > \frac{1}{2-\varepsilon}$. The result follows by talking $\varepsilon \to 0$.
\end{proof}

\subsection{Submodular Valuations}

The following proposition shows that \Cref{prop:subadditive-2-agents-efx} is tight even for submodular valuations.

\begin{proposition}
\label{prop:2agent-submod-tight}
    For any $\alpha > 2/3$, there is an instance with two submodular valuations that admits no randomized allocation that is ex-ante $\alpha$-EF and ex-post EFX.
\end{proposition}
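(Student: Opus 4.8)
The plan is to exhibit a concrete instance with two submodular agents and show that the expected envy in any random allocation over deterministic EFX allocations is bounded away from what ex-ante $\alpha$-EF would require for $\alpha > 2/3$. I would mirror the structure of the proof of \Cref{prop:ub-subadditive}: first pin down all deterministic EFX allocations, then parametrize the distribution over them, and finally optimize the worst-case ratio of expected values. The key difference is that the valuations in \Cref{prop:ub-subadditive} use a non-submodular $v_1$, so I need a different instance where \emph{both} valuations are submodular; the natural candidate is obtained by taking $\beta = 1$ in that construction (targeting ex-post EFX rather than ex-post $\beta$-EFX) and then perturbing it so that submodularity holds. Concretely, I would take three items $a,b,c$ with one agent valuing $a$ highly (say value $2$) and $b,c$ moderately (value $1$ each, with diminishing returns capping $v(abc)$ near $2$), and the other agent symmetric in a complementary way, tuned with a small $\varepsilon$ so that both valuations are genuinely submodular (each marginal weakly decreasing).

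The key steps, in order: (1) Write down the valuation table for the two submodular agents over $\{\emptyset, a, b, c, ab, ac, bc, abc\}$ and verify submodularity of both $v_1$ and $v_2$ by checking that every item's marginal value is non-increasing as the bundle grows. (2) Enumerate the deterministic partitions $(X_1,X_2)$ and determine which are EFX for \emph{both} agents; by the design of the instance there should be exactly three, of the form $(a, bc)$, $(ab,c)$, $(ac,b)$ (up to relabeling), analogous to $X^1,X^2,X^3$ in \Cref{prop:ub-subadditive}. (3) Let $p_1,p_2,p_3$ be the probabilities assigned to these allocations with $p_1+p_2+p_3 = 1$, and compute $\mathbb{E}[v_i(X_i)]$ and $\mathbb{E}[v_i(X_j)]$ as linear functions of the $p_\ell$. (4) Observe that $\mathbb{E}[v_1(X_1)]$ and $\mathbb{E}[v_2(X_2)]$ are constant (equal to roughly $1$) regardless of the $p_\ell$, while $\mathbb{E}[v_1(X_2)]$ is increasing in $p_1$ and $\mathbb{E}[v_2(X_1)]$ is decreasing in $p_1$; minimize the maximum of the two envy ratios over $p_1 \in [0,1]$, which happens when the two are balanced, giving the bound $\alpha \leq 2/3$ (up to $\varepsilon$). (5) Let $\varepsilon \to 0$ to rule out every $\alpha > 2/3$.

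The main obstacle will be step (1)–(2): finding a valuation table that is simultaneously (a) submodular for both agents, (b) has exactly the three desired EFX allocations and no others (so that the distribution is genuinely constrained to a one-parameter family), and (c) still yields the clean $2/3$ bound in step (4). Submodularity is a real constraint here — the instance in \Cref{prop:ub-subadditive} exploits supermodular-like complementarity in $v_1$ to sharpen the bound, so I expect that forcing submodularity is exactly what moves the threshold from $\varphi - 1$ (for $\beta$-EFX with general subadditive) up to $2/3$ (for exact EFX), and getting the perturbation $\varepsilon$ to do all three jobs at once will require some care. Once the instance is fixed, steps (3)–(5) are routine linear algebra and a one-variable optimization identical in spirit to the end of the proof of \Cref{prop:ub-subadditive}.
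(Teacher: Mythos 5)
Your high-level plan (enumerate EFX allocations, parametrize the distribution, optimize the envy ratio) matches the paper's approach, but your guess at the combinatorial structure is wrong in a way that would sink the argument.

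You assume the constructed submodular instance has exactly three EFX allocations $(a,bc)$, $(ab,c)$, $(ac,b)$ and that $\mathbb{E}[v_1(X_1)]$ is constant over the lottery. These two requirements are jointly incompatible with submodularity. If agent~1's three possible bundles $\{a\}$, $\{a,b\}$, $\{a,c\}$ all have the same value $V$, then by submodularity $v_1(abc) \leq v_1(ab) + v_1(ac) - v_1(a) = V$, and by monotonicity $v_1(abc) \geq v_1(ab) = V$, so $v_1(abc) = V$, hence $v_1(bc) \leq V$. But then in $(a,bc)$ agent~1 has no envy at all, and the lower-bound argument collapses. The non-submodular $v_1$ in \Cref{prop:ub-subadditive} — where $v_1(abc)$ is strictly larger than any two-item subset — is exactly what lets the three-allocation calculation go through there, so you cannot simply ``perturb to restore submodularity'' and keep that structure.

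The paper's instance sidesteps this by having only \emph{two} deterministic EFX allocations, $(a,bc)$ and $(ac,b)$. The $\varepsilon$ perturbation is calibrated asymmetrically — roughly, agent~1 slightly prefers $a$ while agent~2 slightly prefers $b$ — so that $(ab,c)$ fails EFX from agent~2's perspective (removing $a$ from $\{a,b\}$ leaves $\{b\}$, which she values strictly more than her own $\{c\}$). With only a one-parameter family of lotteries over two allocations, both agents' own expected value stays pinned at $\frac{1}{2}+\varepsilon$, the two envy terms sum to a constant and balance at $\frac{3}{4}$, and the ratio tends to $\frac{2}{3}$ as $\varepsilon\to 0$. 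The key missing insight in your proposal is that forcing submodularity should \emph{eliminate} one of the three candidate allocations rather than preserve all three; once you see that, the rest of your outline (steps (3)--(5)) carries over essentially unchanged.
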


\begin{proof}
    Consider the following instance with three items, $a,b,c$, for some $\varepsilon > 0$.
    \begin{center}
        \begin{tabular}{||c | c c c c c c c c||} 
         \hline
         & $\emptyset$ & $a$ & $b$ & $c$ & $ab$ & $ac$ & $bc$ & $abc$ \\ [0.5ex] 
         \hline\hline
         $v_1$ & $0$ & $1/2 + \varepsilon$ & $1/2$ & $1/2$ & $1$ & $1/2+\varepsilon$ & $1$ & $1$ \\ 
         \hline
         $v_2$ & $0$ & $1/2$ & $1/2+ \varepsilon$ & $1/2$ & $1$ & $1$ & $1/2+\varepsilon$ & $1$ \\
         \hline
        \end{tabular}
    \end{center}
    It can be verified that $v_1$ and $v_2$ are submodular.

    Observe that there are only two deterministic EFX allocations, namely $X^1 = (a,bc)$ and $X^2 = (ac,b)$. Consider a lottery $X$ which returns $X^1$ with probability $p$ and $X^2$ with probability $1-p$. We compute 
    \begin{align*}
    \mathbb{E}[v_1(X_1)] &= p \cdot v_1(X_1^1) + (1-p) \cdot v_1(X_1^2) \\
    &= p \cdot v_1(a) + (1-p) \cdot v_1(ac) \\
    &= p \cdot (1/2 + \varepsilon) + (1-p) \cdot (1/2 + \varepsilon) \\
    &= 1/2 + \varepsilon \\
    \mathbb{E}[v_1(X_2)] &= p \cdot v_1(X_2^1) + (1-p) \cdot v_1(X_2^2) \\
    &= p \cdot v_1(bc) + (1-p) \cdot v_1(b) \\
    &= p \cdot 1 + (1-p) \cdot (1/2)
    \intertext{and so by symmetry,}
    \mathbb{E}[v_2(X_1)] &= p \cdot (1/2) + (1-p) \cdot 1 \\
    \mathbb{E}[v_2(X_2)] &= 1/2 + \varepsilon.
    \end{align*}
    Note that
    \begin{align*}
    \max\big(\mathbb{E}[v_1(X_2)], \mathbb{E}[v_2(X_1)]\big) = 
        \max\big(p \cdot 1 + (1-p) \cdot (1/2), p \cdot (1/2) + (1-p) \cdot (1/2)\big) \geq 3/4
    \end{align*}
    and so at least one of the agents is not $\alpha$-EF for any $ \alpha > \frac{1/2 + \varepsilon}{3/4}$. The result follows by taking $\varepsilon \to 0$.
\end{proof}

\subsection{General Monotone Valuations}
\label{sec:general-monotone-imposs}

{For general monotone valuations, obtaining any approximate fairness guarantees is hopeless.
This is cast in the following proposition.}

\begin{proposition}\label{prop:ub-general}
    There exists an instance with two monotone valuations that admits no randomized allocation that is ex-ante $\alpha$-EF and ex-post $\beta$-EFX, for any $\alpha > 0$ and $\beta > 0$.
\label{prop:impossibility-general}
\end{proposition}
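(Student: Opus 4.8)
# Proof Proposal for Proposition~\ref{prop:impossibility-general}

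The plan is to construct a pathological instance with two monotone (but wildly non-subadditive) valuations for which \emph{every} deterministic EFX allocation---indeed every deterministic $\beta$-EFX allocation for any $\beta > 0$---forces one fixed agent to receive essentially nothing of value, so that no lottery over such allocations can give that agent positive expected utility, ruling out ex-ante $\alpha$-EF for any $\alpha > 0$. The natural way to do this with general monotone valuations is to make each agent's value function ``all-or-nothing'': agent~$i$ values the full set $[m]$ (or some designated large bundle) at $1$, but values \emph{every} proper subset at $0$ (or something vanishingly small). Concretely, I would take $m = 2$ items $\{a,b\}$ and set $v_i(\emptyset) = v_i(\{a\}) = v_i(\{b\}) = 0$ and $v_i(\{a,b\}) = 1$ for both $i \in \{1,2\}$. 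This is monotone.

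The key observation is that in any deterministic allocation $X = (X_1, X_2)$, since $X_1$ and $X_2$ partition $\{a,b\}$ and at most one of them can equal $\{a,b\}$, at least one agent---say agent~$j$---gets a bundle $X_j \subsetneq \{a,b\}$ with $v_j(X_j) = 0$. Now check the $\beta$-EFX condition for agent~$j$ toward the other agent $k$: we need $v_j(X_j) \geq \beta \cdot v_j(X_k - g)$ for all $g \in X_k$. Since $v_j(X_j) = 0$, this forces $v_j(X_k - g) = 0$ for every $g \in X_k$. If $X_k = \{a,b\}$ then $X_k - g$ is a singleton, which indeed has value $0$, so the condition $0 \geq \beta \cdot 0$ holds---so this allocation \emph{is} $\beta$-EFX (in fact EFX) for every $\beta$. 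Thus the EFX allocations are exactly $(\{a,b\},\emptyset)$ and $(\emptyset,\{a,b\})$ (and permutations of which singleton goes where are irrelevant since singletons have value $0$; in fact $(\{a\},\{b\})$ is \emph{not} EFX because agent~1 has value $0$ for $\{a\}$ but $v_1(\{b\} - b) = v_1(\emptyset) = 0$, so actually that one is also trivially EFX---let me reconsider).

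Hmm---with this valuation \emph{every} allocation is EFX since all proper-subset values are $0$. That does not immediately give the impossibility, because the lottery could mix $(\{a,b\},\emptyset)$ and $(\emptyset,\{a,b\})$ with equal probability, giving each agent expected value $1/2$ for their own bundle and also $1/2$ for the other's, which is ex-ante EF. So the pure all-or-nothing construction fails. The \textbf{main obstacle}, then, is forcing asymmetry: I must arrange valuations so that ex-post $\beta$-EFX \emph{pins down} which agent gets the valuable bundle. The fix is to use \emph{three} items $\{a,b,c\}$ and valuations where each agent only derives value from a specific pair, but the EFX constraint (via a third ``large'' item) forces a conflict. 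For instance, let $v_1$ be $0$ on all sets except $v_1(\{a,b\}) = v_1(\{a,b,c\}) = 1$, and symmetrically $v_2(\{a,b\}) = v_2(\{a,b,c\}) = 1$ with all else $0$---but then again both want exactly $\{a,b\}$, and splitting gives both value $0$, so any allocation giving one agent $\{a,b,c\} \supseteq \{a,b\}$... I would instead engineer a three-item instance where agent~1's only valuable bundle is $\{a,b\}$ and agent~2's only valuable bundle is $\{a,c\}$, with a large penalty/reward structure forcing that in every $\beta$-EFX allocation item $a$ must go to agent~2 together with $c$ (e.g., because agent~2 would otherwise EFX-envy), leaving agent~1 stuck with value $0$ in every ex-post-feasible allocation; then no lottery over $\beta$-EFX allocations gives agent~1 positive expected value while agent~2 has positive value, breaking ex-ante $\alpha$-EF for all $\alpha > 0$. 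I would finish by writing out the $8$-row valuation table explicitly, verifying monotonicity by inspection, enumerating all $3^2 = 9$ partitions of $\{a,b,c\}$ into two bundles, checking the $\beta$-EFX condition for each (which is where the genuine case analysis lies), and concluding that in all $\beta$-EFX allocations $v_1(X_1) = 0$ while some allocation has $v_2(X_2) > 0$, so $\mathbb{E}[v_1(X_1)] = 0 < \alpha \cdot \mathbb{E}[v_1(X_2)]$ for every $\alpha > 0$.
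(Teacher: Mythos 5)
Your pivot away from the two-item all-or-nothing instance is correct, but the three-item construction you sketch does not work, and the underlying strategy has a structural flaw. Take your stated valuations literally: $v_1$ is $1$ on $\{a,b\}$ and $\{a,b,c\}$ and $0$ elsewhere; $v_2$ is $1$ on $\{a,c\}$ and $\{a,b,c\}$ and $0$ elsewhere. Then the \emph{deterministic} allocation $X=(\{a,b\},\{c\})$ is already ex-post EFX (agent $2$'s value for $\{a,b\}$ minus any item is $v_2(a)=v_2(b)=0$, and agent $1$'s value for $\{c\}-c$ is $0$) \emph{and} ex-ante EF (agent $1$ gets value $1$ and has $v_1(\{c\})=0$; agent $2$ gets $0$ but also has $v_2(\{a,b\})=0$, so neither envies). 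The same is true of $(\{b\},\{a,c\})$. So your instance admits an envy-free EFX allocation outright, which refutes your intended impossibility.

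The deeper problem is the plan of forcing agent $1$ to get value $0$ in \emph{every} $\beta$-EFX allocation while simultaneously envying agent $2$. Whenever one agent holds a singleton $\{x\}$, the other agent's $\beta$-EFX constraint toward her is vacuous ($v_j(\{x\}-x)=0$), so the allocation's EFX status hinges only on the singleton-holder. If you make agent $1$'s singleton values $0$ (which you must, to keep $v_1(X_1)=0$), then $(\{x\},[m]\setminus\{x\})$ is always $\beta$-EFX for agent $1$, and for such an allocation to create envy you would need $v_1([m]\setminus\{x\})>0$; but ruling out the allocation would require agent $2$ to EFX-envy a singleton, which is impossible. In short, some trivially-EFX allocation with no envy for agent $1$ will always survive, and the lottery can just play that one. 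Your approach therefore cannot be patched by tweaking the table.

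The paper's construction avoids this trap by never pinning either agent to value $0$. Instead it chooses $v_1,v_2$ on three items so that \emph{exactly two} allocations are $\beta$-EFX, namely $(a,bc)$ and $(ac,b)$, and in each of them one agent values the \emph{other's} bundle at a huge parameter $K$ (via $v_1(bc)=v_2(ac)=K$) while everyone's own value stays at $1$. Any lottery over these two allocations has $\max\bigl(\mathbb{E}[v_1(X_2)],\mathbb{E}[v_2(X_1)]\bigr)\geq K/2$, so no $\alpha>2/K$ is attainable, and letting $K\to\infty$ finishes. This $K$-scaling argument is a genuinely different mechanism than yours: rather than zeroing out an agent's utility, it inflates the envy target unboundedly while keeping the set of feasible ex-post allocations fixed.
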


\begin{proof}
    Consider the following instance with three items, $a$, $b$, and $c$. Let $K > 0$.
    \begin{center}
        \begin{tabular}{||c | c c c c c c c c||} 
         \hline
         & $\emptyset$ & $a$ & $b$ & $c$ & $ab$ & $ac$ & $bc$ & $abc$ \\ [0.5ex] 
         \hline\hline
         $v_1$ & $0$ & $1$ & $0$ & $0$ & $1$ & $1$ & $K$ & $K$ \\ 
         \hline
         $v_2$ & $0$ & $0$ & $1$ & $0$ & $1$ & $K$ & $1$ & $K$ \\
         \hline
        \end{tabular}
    \end{center}
    It can be verified that $v_1$ and $v_2$ are monotone.

    Observe that there are only two deterministic $\beta$-EFX allocations (for any $\beta > 0$), namely $X^1 = (a,bc)$ and $X^2 = (ac,b)$. Consider a lottery $X$ which returns $X^1$ with probability $p$ and $X^2$ with probability $1-p$. We compute 
    \begin{align*}
    \mathbb{E}[v_1(X_1)] &= p \cdot v_1(X_1^1) + (1-p) \cdot v_1(X_1^2) \\
    &= p \cdot v_1(a) + (1-p) \cdot v_1(ac) \\
    &= p \cdot 1 + (1-p) \cdot 1 \\
    &= 1 \\
    \mathbb{E}[v_1(X_2)] &= p \cdot v_1(X_2^1) + (1-p) \cdot v_1(X_2^2) \\
    &= p \cdot v_1(bc) + (1-p) \cdot v_1(b) \\
    &= p \cdot K + (1-p) \cdot 0
    \intertext{and so by symmetry,}
    \mathbb{E}[v_2(X_1)] &= p \cdot 0 + (1-p) \cdot K \\
    \mathbb{E}[v_2(X_2)] &= 1.
    \end{align*}
    Note that
    \begin{align*}
    \max\big(\mathbb{E}[v_1(X_2)], \mathbb{E}[v_2(X_1)]\big) = 
        \max\big(p \cdot K, (1-p) \cdot K\big) \geq K/2
    \end{align*}
    and so at least one of the agents is not $\alpha$-EF for any $ \alpha > 2/K$. The result follows by taking $K \to \infty$.
\end{proof}

\section{The Deterministic Envy Cycles Procedure}
\label{sec:det_envy_cycles}

{In this section, we expand the description of the deterministic envy cycles procedure (\Cref{alg:det_envy_cycles}).}

\begin{algorithm}
\caption{Deterministic Envy Cycles.}\label{alg:det_envy_cycles}
\begin{algorithmic}[1]
\vspace{0.1cm}
\State\textbf{Input} A set $[m]$ of items, a set $[n]$ of agents, and a profile $(v_i)_{i\in[n]}$ of valuation functions.
\State\textbf{Output} A complete allocation $(X_1, \ldots, X_n)$ that is $\frac{1}{2}$-EFX and EF1.
\vspace{0.2cm}
\Statex -------------------------------------------------- \textbf{First Phase} --------------------------------------------------
\vspace{0.1cm}
\State $(X_1, \ldots, X_n) \gets$ {an arbitrary} weakly separated (Def.~\ref{def:weak_separation}) allocation with $|X_i| = 1$ for all $i$
\vspace{0.2cm}
\Statex ------------------------------------------------ \textbf{Second Phase} ------------------------------------------------\While{there is an unallocated item}
\State $x \gets $ an arbitrary unallocated item
\If{there is an unenvied agent}
    \State $i \gets $ an arbitrary unenvied agent
    \State $X_i \gets X_i + x$
\Else
    \State $i_1, \ldots, i_k \gets $ {an arbitrary} directed cycle in the envy graph
    \State $(Y_1, \ldots, Y_n) \gets$ $Y_{i_w} = X_{i_{w+1}}$ for all $1 \leq w < k$ 
    \State \;\;\;\;\;\;\;\;\;\;\;\;\;\;\;\;\;\;\;\;\;\;\; $Y_{i_k} \, = X_1$
    \State \;\;\;\;\;\;\;\;\;\;\;\;\;\;\;\;\;\;\;\;\;\;\; $Y_{i} \;\,\, = X_i$ for $i \neq \{i_1, \ldots, i_k\}$
    \State $(X_1, \ldots, X_n) \gets (Y_1, \ldots, Y_n)$
\EndIf
\EndWhile
\Statex -----------------------------------------------------------------------------------------------------------------------
\State \textbf{return} $(X_1, \ldots, X_n)$
\end{algorithmic}
\end{algorithm}

\lembundlemonotonicity*
\begin{proof}
    In each step of the second phase, agent $i$'s bundle can only be modified in one of the following two ways: either agent~$i$ is assigned a new item, in which case agent~$i$'s value for her bundle weakly increases by monotonicity, or agent~$i$ is involved in a cycle elimination step, in which case agent~$i$ gets a bundle that she envied before the cycle elimination step.
\end{proof}

\lemexpost*
\begin{proof}
    Since any matching gives one item to each agent, the partial allocation at the end of the first phase is EFX (which is a stronger property than both $\frac{1}{2}$-EFX and EF1).

    We first show that assigning an unallocated item $g$ to an unenvied agent $j$ does not violate $\frac{1}{2}$-EFX or EF1. Indeed, since $j$ is unenvied, it holds that $v_i(X_i) \geq v_i(X_j)$ for any other agent $i$. 
    Additionally, by weak separation, we have $v_i(X_i) \geq v_i(g)$. 
    Hence, for any item $h \in X_i + g - h$, it holds that
    \begin{align*}
        v_i(X_i) &\geq (1/2) \cdot (v_i(X_i) + v_i(g)) && (\text{by the above}) \\
        &\geq (1/2) \cdot v_i(X_i+g) && (\text{by subadditivity}) \\
        &\geq (1/2) \cdot v_i(X_i+g-h) && (\text{by monotonicity})
    \end{align*}
    which means that $\frac{1}{2}$-EFX still holds, and 
    \begin{align*}
        v_i(X_i) &\geq v_i(X_j) = v_i((X_j + g) - g)
    \end{align*}
    which means that EF1 still holds. 

    We now show that eliminating an envy cycle preserves EF1 and $\frac{1}{2}$-EFX. Indeed, after this step, the value of any agent for her own bundle can only increase, and the set of all assigned bundles remains the same. The result follows.
\end{proof}

\lemcyclespolytime*

\begin{proof}
    To prove this, we need to bound the number of steps in the second phase. Recall that in each step, the algorithm either (1) assigns an unallocated item to an unenvied agent, or (2) eliminates an envy cycle.
    Note that the number of steps where operation (1) is executed is at most $m$ since the number of unallocated items strictly decreases with every operation (1).

    We now show that after every $n^2$ consecutive steps where operation (2) is executed, there must be a step where operation (1) is executed. It then follows that the number of steps where operation (2) is executed is at most $n^2m$. To prove the claim, consider a step that eliminates a cycle $C$ in the envy graph. 
    As a result of this step, the arcs in the cycle $C$ disappear from the envy graph. Moreover, every other arc not in $C$ either disappears or is shifted over by one agent in the envy graph. Thus the number of arcs in the envy graph strictly decreases with each cycle elimination step. This proves the claim since the number of arcs in the envy graph is less than $n^2$.
\end{proof}

\section{The Probabilistic Serial Lottery}
\label{sec:ps_explained}
     
{In this section we describe the Probabilistic Serial procedure by analyzing the PS-Lottery algorithm of \citet{A20}. We show that while the algorithms of \citet{A20} and \citet{FSV20} produce outcomes satisfying ex-post EF1, neither of them provides any ex-post EFX guarantees.
The following running example is used to illustrate these points.}

\begin{example}\label{ex:EF1notEFX}
Consider the following example, with 2 agents having additive valuations {over four items, as described in the table below.}
    \begin{center}
        \begin{tabular}{||c | c c c c||} 
         \hline
         & apple & banana & celery & durian \\ [0.5ex] 
         \hline\hline
         $v_1$ & $10$ & $3\varepsilon$ & $2\varepsilon$ & $0$ \\
         $v_2$ & $10$ & $2\varepsilon$ & $3\varepsilon$ & $0$ \\
         \hline
        \end{tabular}
    \end{center}
\end{example}

{
The PS-Lottery algorithm works as follows. 
First, make the number of agents and items equal, by possibly adding some dummy items (of zero value), and then creating {$k={m}/{n}$} copies of each agent. In \Cref{ex:EF1notEFX}, the required number of copies is $k=2$ copies per agent. Denote by $1_1$ and $1_2$ the two copies of agent 1, and by $2_1$ and $2_2$ the two copies of agent 2. 
\blfootnote{images: \url{flaticon.com}}

Second, simulate the eating procedure of~\citet{BM01} for {$k$ units of time}.
{Initially, all items are unconsumed, and during the eating process, items get fractionally consumed by agents.
For every $1 \leq t \leq k$, during the $t$-th unit of time, the $t$-th copy of each agent}
participates {in the eating process} and consumes her favourite available item at a constant rate of one item per one unit of time.
Notably, multiple agents may be consuming the same item simultaneously.
If at any point, an item is fully consumed, then each one of the agents who are currently eating this item switches to her respective next-favorite item. 
Here, if an agent is indifferent between {multiple} items, she selects an arbitrary one.
Note that after 
$k$ {units of time}, every item is fully consumed. 
}

\Cref{fig:ps_eating} illustrates 
the eating procedure described above {by showing (a) the intermediate state} {after one unit of time, and (b) {the final outcome} after two units of time}. 
The outcome can be thought of as a {\em fractional} allocation of the items to the copied agents, i.e., a real matrix $(Z_{i_r,j})_{i\in [n], r \in [k], j \in [m]}$ where the $r$-th copy of agent $i$ receives a $Z_{i_r, j}$ fraction of item $j$.

The PS-Lottery algorithm then decomposes the fractional allocation into a lottery over {\em integral} allocations using the Birkhoff-von Neumann theorem (see \Cref{sec:matching_distribution}). 
The resulting lottery for the running example is given below:

\begin{figure}[t!]
\input{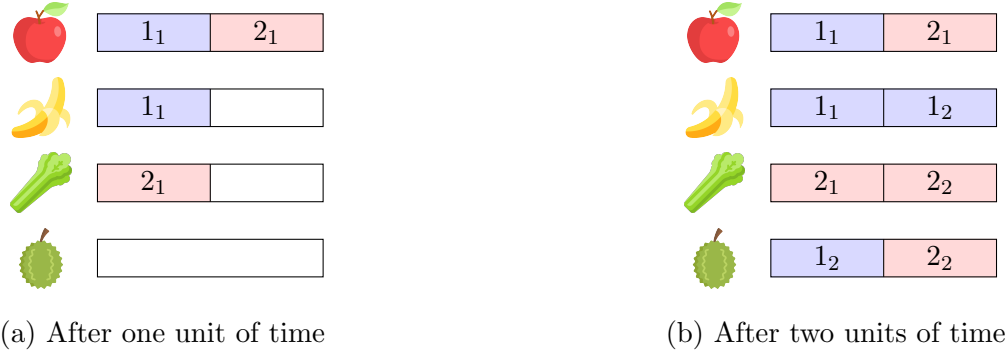}
\caption{The eating procedure of Probabilistic Serial, applied to the instance of \Cref{ex:EF1notEFX}.}
\label{fig:ps_eating}
\end{figure}

\[
\begin{pNiceMatrix}[first-row, first-col]
      & a & b & c & d  \\
  1_1 & 0.5 & 0.5 & 0 & 0  \\
  1_2 & 0 & 0.5 & 0 & 0.5  \\
  2_1 & 0.5 & 0 & 0.5 & 0  \\
  2_2 & 0 & 0 & 0.5 & 0.5 \\
\end{pNiceMatrix}
 = 0.5 \cdot 
 \begin{pNiceMatrix}[first-row, first-col]
    & a & b & c & d  \\
  & 1 & 0 & 0 & 0  \\
  & 0 & 1 & 0 & 0  \\
  & 0 & 0 & 1 & 0  \\
  & 0 & 0 & 0 & 1 \\
\end{pNiceMatrix}
+ 0.5 \cdot 
\begin{pNiceMatrix}[first-row, first-col]
& a & b & c & d  \\
  & 0 & 1 & 0 & 0  \\
  & 0 & 0 & 0 & 1  \\
  & 1 & 0 & 0 & 0  \\
  & 0 & 0 & 1 & 0 \\
\end{pNiceMatrix}.
 \]
 
 {Here, the fractional allocation obtained via the eating procedure is given on the left-hand side, and it can be implemented by a lottery over the two integral allocations given in the middle and right-hand side matrices, each with probability $1/2$.} 
 
{The algorithm then samples one integral allocation according to the obtained probability distribution over integral allocations, and combines the items assigned to all copies of agent $i$ into one bundle that is assigned to agent $i$ in the final allocation. Hence, in the running example, the final allocation is {either the allocation}
where agent 1 receives the bundle $\{a,b\}$ and agent 2 receives the bundle $\{c,d\}$, or the allocation where agent 1 receives the bundle $\{b,d\}$ and agent 2 receives the bundle $\{a,c\}$, each with equal probability.}

The result of this process is ex-ante EF by the properties of the eating procedure \cite{BM01}. 
Additionally, \citet{A20} showed that the outcome is ex-post EF1.
However, neither the algorithm of~\citet{A20} (described above) nor the algorithm of~\citet{FSV20} provides any EFX guarantees, as they always return {\em balanced} allocations, where {all agents receive the same number of items.}
In general, there may be no balanced allocations that satisfy $\beta$-EFX for any $\beta > 0$.
This is demonstrated in our running example, where in any balanced allocation, the agent who gets the apple also gets another item, which inevitably violates the $\beta$-EFX condition {for sufficiently small $\varepsilon$}. 

\section{Omitted Propositions and Examples}

\subsection{The Analysis of Our Main Algorithm is Tight}

In this section, we show that the analysis of our main algorithm is tight. We also show that the simple modification of selecting the unenvied agent uniformly at random does not give any better guarantees. We use the following example to demonstrate these facts.

\begin{example}\label{ex:tight_analysis}
    Let $n, k, \varepsilon > 0$.
    Consider an instance with $n$ agents and $n+k$ items, {with the following additive valuations:}
    \begin{center}
        \begin{tabular}{||c | c c c c c | c c c ||} 
         \hline
         & $1$ & $2$ & $\ldots$ & $n-1$ & $n$ & $n+1$ & $\ldots$ & $n+k$ \\ [0.5ex] 
         \hline\hline
         $v_1$ & $1+n\varepsilon$ & $1+(n-1)\varepsilon$ & $\ldots$ & $1+2\varepsilon$ & $1+\varepsilon$ & $1$ & $\ldots$ & $1$ \\ 
         $v_2$ & $1+n\varepsilon$ & $1+(n-1)\varepsilon$ & $\ldots$ & $1+2\varepsilon$ & $1+\varepsilon$ & $1$ & $\ldots$ & $1$ \\ 
         $\vdots$ & $\vdots$ & $\vdots$ & $\ddots$ & $\vdots$ & $\vdots$ & $\vdots$ & $\ddots$ & $\vdots$ \\
         $v_{n-1}$ & $1+n\varepsilon$ & $1+(n-1)\varepsilon$ & $\ldots$ & $1+2\varepsilon$ & $1+\varepsilon$ & $1$ & $\ldots$ & $1$ \\ 
         \hline
         $v_{n}$ & $1$ & $1$ & $\ldots$ & $1$ & $1+\varepsilon$ & $1$ & $\ldots$ & $1$ \\
         \hline
        \end{tabular}
    \end{center}
\end{example}

During the eating procedure, each of the agents $1, \ldots, n-1$ consumes a $\frac{1}{n-1}$ fraction of each of the items $1, \ldots, n-1$, and agent $n$ consumes the entire item $n$. Therefore, after the first phase of the algorithm, there are exactly two agents that no one envies: some agent $i \in \{1, \ldots, n-1\}$ (who gets item $n-1$) and agent $n$ (who gets item $n$). 

Consider first the case with $k = 1$ to see that no deterministic tie-breaking rule for selecting the unenvied agent in line~\ref{line:unenvied} gives $\alpha$-EF for any $\alpha > 1/2$. Indeed, no matter which agent gets item $n+1$ during the second phase of the algorithm, any other agent will envy her with a factor arbitrarily close to $2$ for sufficiently small $\varepsilon$. 

We also show that selecting the unenvied agent uniformly at random from the set of unenvied agents does not give $\alpha$-EF for any $\alpha > 1/2$. Here, let us consider $n > k > 1$. Observe that agent $n$ is assigned one of the items $n+1, \ldots, n+k$ with probability $1-{1}/{2^k}$. Indeed, for every outcome of the first phase of the algorithm, the only outcome of the second phase in which $n$ does not get any of the additional items is where for each $1 \leq w \leq k$, the agent who got item $n-w$ in the first phase of the algorithm, gets item $n+w$ in the second phase, which happens with probability ${1}/{2^k}$. Therefore, agent $1$'s expected value for agent $n$'s bundle is $(1+\varepsilon) + (1-{1}/{2^k})$ while her expected value for her own bundle is $(1 + \frac{n+2}{2}\varepsilon) + \frac{k}{n}$. The claim follows by taking $k = \log n$ and $n \to \infty$.

\subsection{Ex-ante Guarantees of Random Serial Dictatorship}

In this section we establish an upper bound of $\frac{1}{\sqrt{2}}$ on the ex-ante EF guarantee of {random serial dictatorship}. This bound improves upon the previously known bound of $\frac{4}{5}$, from \citet{BM01}.

\begin{proposition}
\label{prop:upper-bound-rrr}
Random serial dictatorship is not ex-ante $\alpha$-EF for any $\alpha>1/\sqrt{2}$.
\end{proposition}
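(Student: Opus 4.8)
The plan is to prove this upper bound by exhibiting an explicit family of instances on which random serial dictatorship (RSD) produces ex-ante envy ratio approaching $\sqrt{2}$: for every $\varepsilon>0$ there is an instance (with additive valuations) and two agents $i,j$ such that $\mathbb{E}[v_i(X_j)] \geq (\sqrt{2}-\varepsilon)\cdot\mathbb{E}[v_i(X_i)]$, where $X$ denotes the random allocation output by RSD on a uniformly random ordering of the agents. Since $1/\sqrt{2}\approx 0.707 < 4/5$, this is strictly stronger than the previously known obstruction.

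First I would record the basic structural fact that guides the construction: if two agents have \emph{identical} valuations, then RSD is ex-ante envy-free between them. Indeed, swapping the positions of the two agents in a uniformly random ordering is a measure-preserving bijection on orderings which (because their valuations agree) exactly exchanges the bundles they receive; hence $(X_i,X_j)$ and $(X_j,X_i)$ are identically distributed, so $\mathbb{E}[v_i(X_i)] = \mathbb{E}[v_i(X_j)]$. Consequently a bad instance must give the two envying agents genuinely different valuations, and — crucially — the envy has to come from correlating the ``luck'' of the victim $i$ with the ``luck'' of $j$ in an asymmetric way. This steers the design toward a small number of commonly-valued ``pivot'' items, plus many auxiliary agents whose only role is to crowd agent $i$ out of these pivots, so that agent $j$ can reach them along a less-contested route.

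With such an instance fixed, the heart of the proof is to compute the RSD outcome in closed form. The greedy picking process is piecewise deterministic once the relative order of the few ``relevant'' agents (agent $i$, agent $j$, and the crowding agents) is fixed, so I would: (i) identify the events — phrased purely in terms of these relative orders and the rank of agent $i$ among her competitors — that pin down which item agent $i$ receives and which item agent $j$ receives; (ii) evaluate the probabilities of these events by the standard ``insert one agent uniformly at random into a uniformly random arrangement of the others'' computation; and (iii) sum up to obtain $\mathbb{E}[v_i(X_i)]$ and $\mathbb{E}[v_i(X_j)]$ as explicit functions of the instance parameters (the number of crowding agents, and the item values). One must be careful to account for the value agent $i$ assigns to the bundles of the auxiliary agents, since it is this term that makes the naive constructions collapse back to exact ex-ante EF.

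Finally I would optimize the instance parameters, sending the number of auxiliary agents and items to infinity, to push the ratio $\mathbb{E}[v_i(X_j)]/\mathbb{E}[v_i(X_i)]$ up to $\sqrt{2}$; this yields that RSD is not ex-ante $\alpha$-EF for any $\alpha > 1/\sqrt{2}$. The main obstacle is precisely this last design/optimization step: the easy ``one pivot, many crowders'' families saturate well below $\sqrt{2}$ (around $4/5$, and not much better even after adding a few extra pivot layers), so getting all the way to $\sqrt{2}$ requires a sufficiently layered pivot/crowding structure — or a carefully tuned continuum of item values — that still admits a clean combinatorial analysis of the RSD allocation.
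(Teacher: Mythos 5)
Your plan --- exhibit an explicit instance, compute RSD probabilities by conditioning on the relative arrival order of a small number of relevant agents, and then optimize a parameter --- is exactly the strategy the paper uses. But the proposal stops short of the one piece of content that matters: the instance itself. You explicitly flag the design step as the ``main obstacle'' and speculate that simple ``one pivot, many crowders'' families saturate around $4/5$ and that reaching $1/\sqrt{2}$ may require a more elaborate layered structure or a continuum of item values. That guess is what the paper disproves: a single-pivot, single-secondary-item construction already achieves $1/\sqrt{2}$, and no layering is needed.

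Concretely, the paper takes $n$ additive agents and $n$ items: a pivot item $a$ (value $\approx 1$ to everyone), a secondary item $b$ (value $\approx 1$ to agents $1,2$ only, with agent $2$ preferring $b$ to $a$ and agent $1$ preferring $a$ to $b$), $k$ ``crowder'' items $c_1,\dots,c_k$ that agents $3,\dots,n$ strictly prefer to $a$, and worthless filler. Let $t$ be the $(k{+}1)$-st of agents $3,\dots,n$ to arrive; the outcome is determined entirely by the relative order of $1$, $2$, and $t$. A six-case table (with $p = k/n$) gives $\mathbb{E}[v_1(X_1)] \to (1+p^2)/2$ and $\mathbb{E}[v_1(X_2)] \to 1 - (1-p)^2/2$ as $n\to\infty,\varepsilon\to 0$, so the envy ratio is $\frac{1+p^2}{2-(1-p)^2}$, minimized at $p=\sqrt{2}-1$ with value exactly $1/\sqrt{2}$. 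The asymmetry between agents $1$ and $2$ arises because item $a$ is contested by the crowders while item $b$ is not, so agent $2$'s ``route'' to her best item is less obstructed --- exactly the mechanism you identified qualitatively. You should carry out this computation rather than leave it as an aspiration; your structural observation about identical valuations forcing EF is correct and a good sanity check, but as written the argument has a hole precisely where the theorem lives.
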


\begin{proof}
    Let $\varepsilon > 0$, and consider the following instance with $n$ additive agents and $n$ items, where $k$ is a parameter (to be determined later) and $p = k/n$. 
    \begin{center}
        \begin{tabular}{||c | c c | c c c | c c c ||} 
         \hline
         & $a$ & $b$ & $c_1$ & $\dots$ & $c_{k}$ & $d_1$ & \dots & $d_{n-k-2}$ \\ [0.5ex] 
         \hline\hline
         $v_1$ & $1+\varepsilon$ & $1$ & $0$ & $\dots$ & $0$ & $0$ & $\dots$ & $0$ \\ 
         \hline
         $v_2$ & $1$ & $1+\varepsilon$ & $0$ & $\dots$ & $0$ & $0$ & $\dots$ & $0$ \\
         \hline
         $v_3$ & $1$ & $0$ & $1+\varepsilon$ & $\dots$ & $1+\varepsilon$ & $\varepsilon$ & $\ldots$ & $\varepsilon$ \\ 
         $\vdots$ & $\vdots$ & $\vdots$ & $\vdots$ & $\ddots$ & $\vdots$ & $\vdots$ & $\ddots$ & $\vdots$ \\
         $v_n$ & $1$ & $0$ & $1+\varepsilon$ & $\dots$ & $1+\varepsilon$ & $\varepsilon$ & $\ldots$ & $\varepsilon$ \\ 
         \hline
        \end{tabular}
    \end{center}
Consider serial dictatorship with a uniformly random order. 
Observe that agents $3, \dots, n$ are identical, and let us denote by $t$ the $(k+1)^{\text{st}}$ of them to arrive. 
We proceed by a case analysis, based on the relative arrival order of agents $1$, $2$ and $t$.
For each one of these events, we write its probability, and the items chosen by agent 1 and by agent 2.
\begin{center}
\begin{tabular}{| c | r l | c | c |}
 \hline
Relative ordering & \multicolumn{2}{|c|}{Probability} & Item chosen by $1$ & Item chosen by $2$\\
 \hline\hline
 &&&&\\[-.7em]
$1,2,t$ &$\frac{1}{2}\frac{k+1}{n-1}\frac{k+2}{n}$&$ \approx \frac{p^2}{2}$ & $a$ & $b$ \\[.5em]
$2,1,t$ & $\frac{1}{2}\frac{k+1}{n-1}\frac{k+2}{n}$&$ \approx \frac{p^2}{2}$ & $a$ &$b$\\[.5em]
$1,t,2$ & $\frac{k+1}{n-1}\frac{n-k-2}{n}$&$ \approx p(1-p)$ &$a$ & $b$\\[.5em]
$2,t,1$ & $\frac{k+1}{n-1}\frac{n-k-2}{n} $&$\approx p(1-p)$& some $d_i$ & $b$\\[.5em]
$t,1,2$ & $\frac{1}{2}\frac{n-k-2}{n-1}\frac{n-k-1}{n} $&$\approx \frac{(1-p)^2}{2}$ & $b$ & some $d_i$\\[.5em]
$t,2,1$ & $\frac{1}{2}\frac{n-k-2}{n-1}\frac{n-k-1}{n}$&$ \approx \frac{(1-p)^2}{2}$ & some $d_i$ & $b$\\[.5em]
\hline
\end{tabular}
\end{center}
We observe that agent 2 is more likely to receive 
one of the top items {(item $a$ or $b$)}
than agent $1$, which creates ex-ante envy from agent $1$ to agent $2$. More precisely
\begin{align*}
\mathbb E[v_1(X_1)] &= (1+\varepsilon)\cdot\left(\frac{k+1}{n-1}\right) + 1\cdot \left(\frac{1}{2}\cdot\frac{n-k-2}{n-1}\cdot\frac{n-k-1}{n} \right) \approx \frac{1+p^2}{2}+\varepsilon p\\
\mathbb E[v_1(X_2)] &= 1\cdot \left(1-\frac{1}{2}\frac{n-k-2}{n-1}\frac{n-k-1}{n}\right) \approx 1-\frac{(1-p)^2}{2}.
\end{align*}
The ex-ante envy guarantee is upper-bounded by the ratio $\mathbb E[v_1(X_1)]/\mathbb E[v_1(X_2)]$. 
We take $n$ arbitrarily large, and $\varepsilon$ arbitrarily small, to simplify the expression, which gives
$$
\frac{\mathbb E[v_1(X_1)]}{\mathbb E[v_1(X_2)]} \approx \frac{1+p^2}{2-(1-p)^2}
$$
The above ratio reaches a minimum value of $1/\sqrt{2}$ when $p = \sqrt{2}-1$. More precisely, we choose $k=\lfloor n\cdot(\sqrt{2}-1)\rfloor$, which shows that random serial dictatorship is not ex-ante $\alpha$-EF for any $\alpha>1/\sqrt{2}$.
\end{proof}

\section[Omitted Proofs from Section 3.4]{Omitted Proofs from \Cref{sec:proof}}\label{sec:proofs}
\lemtech*
\begin{proof}
Properties (i) and (ii) follow directly from the definition. For property (iii), observe that for any random variables $x$ and $y$ that are positive on disjoint events, we have
$$
\mathbb{P}[x+y\geq t] = \mathbb{P}[x+y\geq t \text{ and } x>0] + \mathbb{P}[x+y\geq t \text{ and } y>0] = \mathbb{P}[x \geq t] + \mathbb{P}[y \geq t]
$$
for all $t > 0$, which concludes the proof.
\end{proof}

\lemextendability*
\begin{proof}
{Assume that all three properties hold. For all $t > \delta$, we have
$$
\mathbb{P}[z \geq t] = \mathbb{P}[z \geq t \text{ and } S\text{ holds}] + \mathbb{P}[z \geq t \text{ and } S\text{ does not hold}] = 0
$$
and thus $2\cdot \mathbb{P}[x\geq t] \geq \mathbb{P}[y\geq t]+\mathbb{P}[z\geq t]$. For all $0 < t \leq \delta$, we have
$$
2\cdot \mathbb{P}[x\geq t] = 2\cdot \mathbb{P}[S \text{ holds}] \geq \mathbb{P}[y > 0] + \mathbb{P}[z > 0] \geq \mathbb{P}[y\geq t] + \mathbb{P}[z\geq t],
$$
which concludes the proof.}
\end{proof}

\end{document}